\newcommand{\FB}{\textnormal{Fb}}
\newcommand{\En}{{(\textnormal{Enh})}}
\newcommand{\M}{{(\textnormal{M})}}
\newcommand{\Epsilon}{\mathcal{E}}
\newtheorem{Theorem}{Theorem}
\newtheorem{Remark}{Remark}
\newtheorem{Example}{Example}
\newtheorem{Corollary}{Corollary}
\newtheorem{Definition}{Definition}
\newtheorem{Proposition}{Proposition}
\begin{document}
\title{Coding Schemes with Rate-Limited Feedback that Improve over the Nofeedback Capacity for a Large Class of Broadcast Channels}

\author{Youlong Wu, Mich\`ele Wigger%
\thanks{This paper was in part presented at the \emph{IEEE Information Theory Workshop}, in Sevilla, Spain, Sep. 2013, and at the \emph{IEEE International Symposium on Information Theory}, in Honolulu, HI, USA, June--July 2014. }
\thanks{Y. Wu was with Telecom ParisTech, CNRS LTCI, Université Paris-Saclay, 75013 Paris, France. He is now with the Institute for Communications Engineering, Technische Universit\"at M\"unchen, Munich, Germany (e-mail: youlong.wu@tum.de). M. Wigger is with the Department of Communications and Electronics, Telecom ParisTech, CNRS LTCI, Université Paris-Saclay, 75013 Paris, France. (e-mail: michele.wigger@telecom-paristech.fr). }%
\thanks{M. Wigger was supported by the city of Paris under program ``Emergences``.}
}



\maketitle

{\begin{abstract}
We propose two coding schemes for the two-receiver discrete memoryless broadcast channel (BC) with rate-limited feedback from one or both receivers. They
improve over the nofeedback capacity region for a large class of channels, including the class of \emph{strictly essentially less-noisy BCs} that we introduce in this article. Examples of strictly essentially less-noisy BCs are the binary symmetric BC (BSBC) or the binary erasure BC (BEBC) with unequal  cross-over or erasure probabilities {at} the two receivers.
When the feedback rates are sufficiently large,  our schemes recover all previously known capacity results for discrete memoryless BCs with feedback. 

In both our schemes, we let the receivers feed back quantization  messages about their receive signals. In the first scheme, the transmitter simply \emph{relays} the quantization information obtained from Receiver~1 to Receiver~2, and vice versa. {This provides each receiver} with a second observation of the input signal and can thus improve its decoding performance unless the BC is physically degraded. {Moreover, each receiver uses its knowledge of the quantization message describing its own outputs so as to attain  the same performance as if this message had not been transmitted at all.}

In our second scheme the transmitter first \emph{reconstructs and processes} the quantized output signals, and then sends the outcome as a common update information to both receivers. A special case of our second scheme applies also to memoryless BCs without feedback but with strictly-causal state-information at the transmitter and causal state-information at the receivers. It recovers all previous achievable regions also for this setup with state-information.

\end{abstract}}

\begin{IEEEkeywords}
Broadcast channel, channel capacity,  rate-limited feedback
\end{IEEEkeywords}

\section{Introduction}

For most discrete  memoryless broadcast channels (DMBC), it is not known whether feedback can increase the capacity region, even when the feedback links are noise-free and of infinite rate.
There are some exceptions. For {example, for} all physically degraded DMBCs the capacity regions with and without feedback coincide \cite{gamal'78}.  The first simple example DMBC where feedback  increases capacity was presented by Dueck \cite{dueck}. His example and coding scheme were generalized by Shayevitz and Wigger \cite{wigger} who proposed a general scheme and achievable region for DMBCs with generalized feedback. In the generalized feedback model, the feedback to the transmitter is modeled as an additional output of the DMBC that can depend on the input and the receivers' outputs in an arbitrary manner. 

Other achievable regions for 
 general DMBCs with perfect or noisy feedback have been proposed by Kramer \cite{kramer'03} and by Venkataramanan and Pradhan \cite{venkataramananpradhan11}. Kramer's achievable region {can be used to show} that feedback  improves  capacity for some specific \emph{binary symmetric BCs}  (BSBC). {Comparing the general achievable regions in \cite{wigger, kramer'03,  venkataramananpradhan11} to each other is hard because of their complex form which involves several auxiliary random variables.}

A different line of works has concentrated on the 
memoryless {Gaussian} broadcast channel (BC) \cite{ozarow'84, Wu05,Elia04,minero,belhadj'14, bhaskaran, wigger'14, BenyishaiShayevitz, pillai,wuminerowigger14}. The best achievable region for perfect feedback and when the noises at the two receivers are independent  is given in \cite{belhadj'14}  and is based on a MAC-BC duality approach. In \cite{wigger'14}, the  asymptotic high-SNR sum-capacity for arbitrary noise correlation is derived. 

 {In this paper, we consider DMBCs with \emph{rate-limited feedback}, where the feedback links from the receivers to the transmitter are assumed to be instantaneous and noiseless but {rate-limited}. We present two {types of} coding schemes that use Marton coding \cite{marton'79}  in a block-Markov framework (similar to \cite{wigger}), and where in both types the receivers feed back compression information about their channel outputs of the previous block.}

In our type-I  schemes, (Schemes IA--IC), the encoder simply \emph{relays} the obtained compression informations as part of the cloud center of the Marton code employed in the next-following block. {Each receiver reconstructs the compressed version of the other receiver's outputs, and decodes its intended data and compression information based on this compressed signal and its own channel outputs. {The key novelty of our scheme is that in this decoding each receiver cleverly uses its knowledge of the compression messages describing its own previous outputs in a way as to attain the same performance as if this message had not been transmitted at all. }}


In our type-II  scheme, (Scheme~II), the encoder uses the feedback messages to  \emph{reconstruct}  compressed versions of both receivers' channel outputs, and then \emph{processes} these compressed signals together with the previously sent codewords to generate  update information for the two receivers. This update information is sent  as part of the cloud center of the Marton code employed in the next-following block. Each receiver uses backward decoding to simultaneously reconstruct the encoder's compressed signal and  decode its intended messages sent in the cloud center and satellite. 

{
Our coding schemes exhibit the following features: 
\begin{itemize}
\item Unlike previous schemes, our new schemes  strictly improve over the nofeedback capacity region for a large class of memoryelss BCs, which includes:
\begin{enumerate}
\item \emph{Strictly essentially less-noisy {DMBCs}}, which we define in this paper and which represent a subclass of Nair's essentially less-noisy DMBCs \cite{CNair'10}. They include as special cases the  BSBC and the  \emph{binary erasure BC} (BEBC) with unequal cross-over probabilities or unequal erasure probabilities at the two receivers, and the {\emph{binary symmetric/binary erasure channel BC} (BSC/BEC-BC)} for a large range of parameters. 
\item The {BSC/BEC-BC} for all parameters where this DMBC is more capable \cite{gamal'79} and the BSC and the BEC have different capacities.  
\item The memoryless Gaussian BC with unequal noise variances at the two receivers.\footnote{Interestingly, our result hinges on the fact that the receivers are allowed to code over the feedback links: A recent result by Pillai and  Prabhakaran \cite{pillai} shows that when the feedback links are additive Gaussian noise channels of noise variance exceeding a certain threshold, then one cannot improve over the nofeedback capacity if the receivers simply feed back their channel outputs.}
\item An instance of the semideterministic BC as is proved in \cite{annina}.
\end{enumerate}
\item 
When the feedback-rates are sufficiently large,  our new schemes recover all previously known  capacity results for DMBCs with perfect feedback. In particular, they improve over the Shayevitz-Wigger scheme for perfect feedback \cite{wigger} when this latter is restricted to send all the update information in the cloud center. This represents a {prominent} special case of the Shayevitz-Wigger scheme that subsumes the capacity-achieving scheme by Wang \cite{wang} or by Georgiadis \& Tassiulas \cite{greek} for the two-user BEBC where both receivers know all erasure events. 


\item Subject to a slight modification, our coding schemes apply also to a setup with noisy feedback links when the receivers can code over them. All our achievable regions remain valid also in this modified setup.

\item A special case of our type-II coding scheme applies also to state-dependent DMBCs without feedback but where  the receivers learn the state causally and the transmitter learns it strictly causally. Our new achievable region for this state-dependent setup recovers all previous achievable regions. In particular the Degrees of Freedom (DoF) result by Maddah-Ali \& Tse \cite{maddah}, and the achievable regions  and capacity regions presented by Kim, Chia, and El Gamal  in \cite{kimchia}.

\end{itemize}}

 {The idea of our type-I schemes  extends to  more general networks. In \cite{Youlong'Allerton}, such extended coding schemes  are proposed for the  discrete memoryless multicast network (DMMN), where one transmitter wishes to communicate a message to multiple receivers over a relay network. The result shows that {with feedback}, one can strictly improve over noisy network coding~\cite{Lim'11} and distributed decode-forward coding \cite{Lim'ITW} for some DMMNs.}

The rest of this paper is organized as follows: In Sections~\ref{system} and \ref{sec:prelim}, we describe the channel model and present some previous results on BCs with and without feedback.  In Section~\ref{Sec:Definition}, we { define the class of strictly essentially less-noisy BCs.  In Section \ref{sec:simple}, we propose a simple coding scheme that motivates our work.   Sections~\ref{Sec:NewRegions} and \ref{Sec:Usefulness} present our main results: new achievable regions with rate-limited feedback and conditions under which they improve over the nofeedback capacity region.   {Section~\ref{sec:allschm} describes our new coding schemes achieving the rate regions in Section~\ref{Sec:NewRegions}.
In Section \ref{Sec:Relations}, we compare our new achievable regions with previous results and discuss extensions of our results to related setups. Finally, in Section \ref{sec:Examples}, we numerically evaluate one of our achievable regions for several examples.

\subsection{Notation}
Let $\mathbb{R}$ denote the set of real numbers and $\mathbb{Z}^+$ the set of positive integers. We use capital letters to denote  random variables and small letters for their realizations, e.g., $X$ and $x$. For  $j\in\mathbb{Z}^+$, we use the short hand notations $X^j$ and $x^j$ for the $j$-tuples $X^j:=(X_1,\ldots, X_j)$ and $x^j:=(x_1,\ldots, x_j)$. Sets are usually denoted by caligraphic letters, e.g., $\set{S}$.  For a finite set $\set{S}$, we use $|\set{S}|$ for its cardinality and $\set{S}^j$ for its $j$-fold Cartesian product $\set{S}^j:=\set{S}\times \cdots \times \set{S}$, for $j\in\mathbb{Z}^+$.
For a subset $\set{S}\subset\Reals^2$ we use 
$\textnormal{bd}(\set{S})$ to denote its boundary and  $\textnormal{int}(\set{S})$ for its interior.
We also use caligraphic letters for events, mostly $\mathcal{E}$. Moreover, we denote the complement of event $\mathcal{E}$ by $\mathcal{E}^c$.

Given a distribution $P_A$ over some alphabet $\set{A}$, a positive real number $\varepsilon>0$, and a positive integer $n$, $\set{T}_{\varepsilon}^{(n)}(P_A)$ is the typical set in \cite{book:gamal}. 
Given a positive integer $n$, let $\mathbf{1}_{[n]}$ denote the all-one tuple of length $n$, e.g., $\mathbf{1}_{[3]}=(1,1,1)$.

We use definitions $\bar{a}:=(1-a)$ and $a* b:=\bar{a}b+a\bar{b}$, for  $a, b\in[0,1]$. 
Moreover, $Z\sim \textnormal{Bern}(p)$ denotes that $Z$ is  a binary random variable taking values 0 and 1 with probabilities $1-p$ and $p$. We use $H_\textnormal{b}(\cdot)$ for the binary entropy function; thus the entropy of random variable $Z\sim \textnormal{Bern}(p)$ is given by $H_\textnormal{b}(p)$.

}


}
\section{Channel model}\label{system}

\begin{figure}[!t]
\centering
\includegraphics[width=0.45\textwidth]{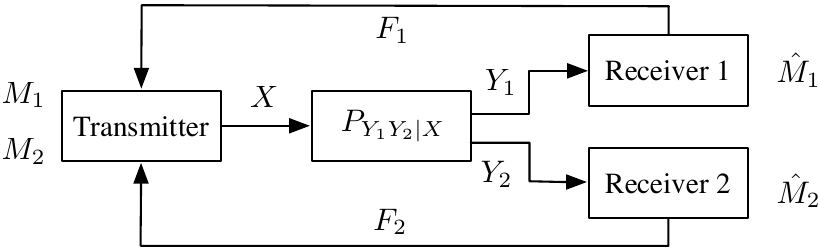}
\caption{Broadcast channel with rate-limited  feedback } \label{fig:BCFBmodel}
\vspace{-5mm}
\end{figure}

Communication takes place over a DMBC with rate-limited feedback, see Figure~\ref{fig:BCFBmodel}. The setup is characterized by the finite input alphabet $\set{X}$, the finite output alphabets $\set{Y}_1$ and $\set{Y}_2$, the channel law $P_{Y_1Y_2|X}$, and nonnegative feedback rates $R_{\textnormal{Fb},1}$ and $R_{\textnormal{Fb},2}$. 
If at  discrete-time $t$ the transmitter sends the channel input $x_t\in \set{X}$, then Receiver~$i\in\{1,2\}$ observes the output $Y_{i,t}\in \set{Y}_i$, where the pair $(Y_{1,t}, Y_{2,t})\sim P_{Y_1Y_2|X}(\cdot, \cdot|x_t)$. Also, after observing $Y_{i,t}$, Receiver~$i$ can send a feedback signal $F_{i,t}\in \set{F}_{i,t}$ to the transmitter, where $\set{F}_{i,t}$ denotes the finite alphabet of $F_{i,t}$ and is a design parameter of a scheme. The feedback link between  the transmitter and Receiver $i$ is assumed to be instantaneous and noiseless but \emph{rate-limited} to $R_{\textnormal{Fb},i}$ bits on average. Thus,
if the transmission takes place over a total blocklength $N$, then
\begin{subequations} \label{consFB0}
\begin{IEEEeqnarray}{rCl}
|\set{F}_{i,1}|&\times&\cdots\times|\set{F}_{i,N}|\leq 2^{NR_{\textnormal{Fb},i}}, \quad i\in\{1,2\}.
\end{IEEEeqnarray}
\end{subequations}

The goal of the communication is that the transmitter conveys two independent private messages $M_1\in \{1,\ldots,\lfloor 2^{NR_1} \rfloor\}$ and $M_2\in \{1,\ldots,\lfloor 2^{NR_2} \rfloor\}$, to Receiver~1 and 2, respectively. Each $M_i$, $i\in\{1,2\}$, is uniformly distributed over the set $\set{M}_i:=\{1,\ldots,\lfloor 2^{NR_i} \rfloor\}$, where $R_i$ denotes the private rate of transmission of Receiver $i$.

The transmitter is comprised of a sequence of encoding functions $\big\{f^{(N)}_t\big\}^N_{t=1}$ of the form
$ f_t^{(N)}: \set{M}_1\times\set{M}_2\times{\set{F}}_{1,1}\times\cdots \times{\set{F}}_{1,t-1}\times{\set{F}}_{2,1}\times\cdots \times{\set{F}}_{2,t-1}\to \set{X}$
that is used to produce the channel inputs as
\begin{equation}
X_t=f^{(N)}_t\big(M_1,M_2,{F}^{t-1}_1,{F}^{t-1}_2\big), \qquad t\in\{1,\ldots, N\}.
\end{equation}

Receiver~$i\in\{1,2\}$ is comprised of a sequence of feedback-encoding functions $\big\{\psi^{(N)}_{i,t}\big\}_{t=1}^N$ of the form
$
\psi^{(N)}_{i,t}: \set{Y}_i^{t}\to \set{F}_{i,t}
$
that is used to produce the symbols
\begin{equation}
F_{i,t}=\psi^{(N)}_{i,t}(Y_{i,1},\ldots,Y_{i,t}), \qquad t\in\{1,\ldots, N\},
\end{equation}
sent over the feedback link,
and of a decoding function
$
\Phi^{(N)}_i: \set{Y}_i^{N}\to\set{M}_i$
used to produce a guess of Message $M_i$:
\begin{equation}
\hat{M}_i=\Phi^{(N)}_{i}(Y_i^N).
\end{equation}

A rate region $(R_1, R_2)$ with averaged feedback rates $R_{\textnormal{Fb},1}\geq 0$, $R_{\textnormal{Fb},2}\geq 0$ is called achievable if for every blocklength $N$, there exists a set of encoding functions $\big\{f_t^{(N)}\big\}_{t=1}^N$ and ñfor $i\in\{1,2\}$ there exists a set of decoding functions $\Phi^{(N)}_i$, feedback alphabets $\{\set{F}_{i,t}\}^N_{t=1}$ satisfying~\eqref{consFB0}, and feedback-encoding functions $\big\{\psi_{i,t}^{(N)} \big\}_{t=1}^N$  such that the error probability
\begin{equation}\label{errorprob}P^{(N)}_e:=\Pr\big(M_1\neq \hat{M}_1~\textnormal{or}~M_2 \neq \hat{M}_2\big)\end{equation}
tends to zero as the blocklength $N$ tends to infinity. The closure of the set of achievable rate pairs $(R_1, R_2)$ is called the \textit{feedback capacity region} and is denoted by $\set{C}_\textnormal{Fb}$.

In the special case $R_{\textnormal{Fb},1}=R_{\textnormal{Fb},2}=0$ the feedback signals are constant and the setup is equivalent to a setup without feedback. We denote the capacity region for this setup  $\set{C}_\textnormal{NoFB}$. 

{When $R_{\FB,1}\geq \log_2|\mathcal{Y}_1|$ and $R_{\FB,2}\geq \log_2|\mathcal{Y}_2|$, our setup is equivalent to a perfect-feedback setup where after each channel use the receivers feed back their channel outputs. }

\section{Previous inner and outer bounds} \label{sec:prelim}
We recall some previous results on the capacity region of DMBCs without and with feedback. 

\subsection{DMBC without feedback}\label{sec:nofb}

\subsubsection{Marton's coding} The capacity region of DMBCs without feedback is in general unknown. 
The best known inner bound {without feedback} is Marton's region \cite{marton'79}, $\set{R}_{\textnormal{Marton}}$, which is the set of all nonnegative  rate pairs $(R_1,R_2)$ satisfying
\begin{subequations}\label{eq:martonRegion}
\begin{IEEEeqnarray}{rCl}
R_1&\leq& I(U_0,U_1;Y_1) \label{eq:Marton1}\\
R_2&\leq& I(U_0,U_2;Y_2)\label{eq:Marton2}\\
R_1\!+\!R_2&\leq&I(U_0,U_1;\!Y_1)\!+\!I(U_2;\!Y_2|U_0)\!-\!I(U_1;\!U_2|U_0)\label{eq:Marton3}\\
R_1\!+\!R_2&\leq&I(U_0,U_2;\!Y_2)\!+\!I(U_1;\!Y_1|U_0)\!-\!I(U_1;\!U_2|U_0)\label{eq:Marton4}\IEEEeqnarraynumspace
\end{IEEEeqnarray}
\end{subequations}
for some probability mass function (pmf)   $P_{U_0U_1U_2}$ and a function $f\colon \set{U}_0\times \set{U}_1 \times \set{U}_2 \to \set{X}$ such that  $X=f(U_0,U_1,U_2)$.

To evaluate Marton's region, it suffices to consider distributions $P_{U_0U_1U_2X}$ for which  {one of the following conditions holds {\cite{Pinsker'80,Kramer'08,Youlong'thesis}}:
\begin{itemize}
\item $I(U_0;Y_1)=I(U_0;Y_2)$;
\item $I(U_0;Y_1)<I(U_0;Y_2)$ and $U_1=\textnormal{const.}$;
\item $I(U_0;Y_1)>I(U_0;Y_2)$ and $U_2=\textnormal{const.}$.
\end{itemize}

\subsubsection{Superposition coding region} 
An important subset of Marton's region is the \emph{superposition coding region}, $\set{R}_{\textnormal{SuperPos}}^{(1)}$, which results when  Marton's constraints~\eqref{eq:martonRegion} are specialized to $U_1=\textnormal{const.}$ and $X=U_2$. That means, $\set{R}_{\textnormal{SuperPos}}^{(1)}$ is defined as the set of all nonnegative rate pairs $(R_1,R_2)$ satisfying 
\begin{subequations}\label{SPcoding}
\begin{IEEEeqnarray}{rCl}
R_1&\leq& I(U;Y_1) \\
R_2&\leq& I(X;Y_2|U)\\
R_1+R_2&\leq&I(X;Y_2)
\end{IEEEeqnarray}
\end{subequations}
for some pmf $P_{UX}$. The superposition coding region $\set{R}_{\textnormal{SuperPos}}^{(2)}$ is defined {in the same way as  $\set{R}_{\textnormal{SuperPos}}^{(1)}$ but with exchanged indices $1$ and $2$.

\subsubsection{Nair-El Gamal outer bound} 
In \cite{CNair'07}  Nair-El Gamal proposed an outer bound on the capacity region of DMBCs without feedback. It is the set of all nonnegative }rate pairs $(R_1,R_2)$ satisfying 
   \begin{subequations}\label{outerbound}
\begin{IEEEeqnarray}{rCl}
R_1&\leq& I(U;Y_1) \\
R_2&\leq& I(V;Y_2)\\
R_1+R_2&\leq&I(U;Y_1)+I(X;Y_2|U)\\
R_1+R_2&\leq&I(V;Y_2)+I(X;Y_1|V)
\end{IEEEeqnarray}
\end{subequations}
for some pmf $P_{UVX}$.

\begin{figure*}[ht]
\begin{subequations}\label{eq:inner}
\begin{IEEEeqnarray}{rCl}
 R_1 &\leq& I(U_0,U_1;Y_1,V_1|Q) - I(U_0,U_1,U_2,Y_2;V_0,V_1|Y_1,Q)\label{eq:inner1}
\\
 R_2 &\leq& I(U_0,U_2;Y_2,V_2|Q) - I(U_0,U_1,U_2,Y_1;V_0,V_2|Y_2,Q)\label{eq:inner2}
\\
 R_1+R_2 &\leq& I(U_1; Y_1,V_1|U_0,Q) + I(U_2; Y_2,V_2|U_0,Q) + \min_{i\in\{1,2\}}I(U_0;Y_i,V_i|Q)-\max_{i\in\{1,2\}} I(U_0,U_1,U_2,Y_1,Y_2;V_0|Y_i,Q)\nonumber \\
 && -I(U_0,U_1,U_2,Y_2;V_1|V_0,Y_1,Q) -I(U_0,U_1,U_2,Y_1,Y_2;V_2|V_0,Y_2,Q) -    I(U_1;U_2|U_0,Q) \label{eq:swSum1}
\\
\nonumber R_1+R_2 &\leq& I(U_0,U_1; Y_1,V_1|Q) + I(U_0,U_2; Y_2,V_2|Q) - I(U_1;U_2|U_0,Q) \\
&& - I(U_0,U_1,U_2,Y_2;V_0,V_1|Y_1,Q) -I(U_0,U_1,U_2,Y_1;V_0,V_2|Y_2,Q)\label{eq:inner4}
\end{IEEEeqnarray}
\end{subequations}
\rule{\textwidth}{0.6pt}
\end{figure*}

The Nair-El Gamal outer bound is known to coincide with Marton's region for the following classes of DMBCs,  which also play a role in the present paper: 
\begin{itemize}
\item \emph{stochastically or physically degraded DMBCs \cite{Cover'98}} 
\item \emph{less noisy DMBCs \cite{Korner'77}} 
\item \emph{essentially less noisy DMBCs \cite{CNair'10}} 
\item \emph{more capable DMBCs \cite{Korner'77}}.
\end{itemize}
In all these classes of DMBCs one of the two receivers is stronger than the other receiver in some sense. This makes that superposition coding is as good as the more general Marton coding and achieves capacity.

{The various classes of BCs satisfy the relationships} \cite{CNair'10}, \cite{konerMarton'75}:
\begin{itemize}
\item degraded $\subsetneq$ less-noisy $\subsetneq$ more capable,
\item less noisy $\subsetneq$ essentially less noisy,
\item  essentially less-noisy $\nsubseteq$ more capable,
\item  more capable $\nsubseteq$ essentially less-noisy. 
\end{itemize}

\subsection{DMBC with feedback}\label{sec:prevfb}

Previous results on the DMBC with feedback mostly focus on perfect feedback, {which in our setup corresponds to $R_{\FB,1}\geq \log_{2}|\set{Y}_1|$ and $R_{\FB,2}\geq \log_2|\set{Y}_2|$. The previous results that  are most closely related to our work are:} 

\subsubsection{Shayevitz-Wigger achievable region}

The achievable region with feedback that is most closely related to our paper is the Shayevitz-Wigger region  \cite{wigger}. It is the 
 set of all nonnegative rate pairs $(R_1,R_2)$  that satisfy~\eqref{eq:inner} for some pmf $P_QP_{U_0U_1U_2|Q}$$P_{V_0V_1V_2|U_0U_1U_2Y_1Y_2Q}$ and some function $f\colon \set{Q}\times\set{U}_0\times \set{U}_1 \times \set{U}_2 \to \set{X}$, where $X=f(U_0, U_1,U_2,Q)$.

\subsubsection{Ozarow-Leung outer bound}
A simple outer bound on the capacity region with output feedback is given in \cite{ozarow'84}. It equals the capacity region $\set{C}_\textnormal{Enh}^{(1)}$ of an enhanced DMBC where the outputs $Y_1^n$ are also revealed to Receiver 2. Notice that this enhanced  DMBC is {physically degraded} and thus, with and without feedback, its capacity region is given by the set of all nonnegative rate pairs $(R_1,R_2)$ that satisfy
\begin{subequations}\label{eq:Enh}
\begin{IEEEeqnarray}{rCl}
R_1 & \leq & I(U;Y_1)\\
R_2 & \leq & I(X;Y_1,Y_2|U)
\end{IEEEeqnarray}
\end{subequations}
for some pmf $P_{UX}$.

Exchanging everywhere in the previous paragraph  indices~$1$ and $2$, we can define a similar enhanced capacity region $\set{C}_{\textnormal{Enh}}^{(2)}$, which is also an outer bound to $\set{C}_{\FB}$. The intersection  $\set{C}_{\textnormal{Enh}}^{(1)}\cap \set{C}_{\textnormal{Enh}}^{(2)}$  yields an even tighter outerbound \cite{wang,greek}.

\section{Definitions}\label{Sec:Definition}

We recall the definition of {essentially less-noisy DMBCs} since they are  important for this paper. 

\begin{Definition}[From \cite{CNair'10}]\label{def:essentially}
A subset $\mathcal{P}_\set{X}$ of all pmfs  on the input alphabet $\mathcal{X}$ is said to be a  \emph{sufficient class} of pmfs  for a DMBC if the following holds: Given any joint pmf $P_{UVX}$ there exists a joint pmf $P_{UVX}'$ that satisfies 
\begin{IEEEeqnarray}{rCl}
P_X'(x)&\in& \mathcal{P}_\set{X} \nonumber\\
I_P(U;Y_1) &\leq& I_{P'}({U};Y_1) \nonumber\\
I_P(V;Y_2) &\leq& I_{P'}({V};Y_2) \nonumber\\ 
I_P(U;Y_1) + I_{P}(X;Y_2|U) &\leq& I_{P'}({U};Y_1)+I_{P'}(X;Y_2|{U}) \nonumber\\
I_P(V;Y_2)  +I_P(X;Y_1|V) &\leq& I_{P'}({V};Y_2)+I_{P'}(X;Y_1|{V}) \nonumber\IEEEeqnarraynumspace
\end{IEEEeqnarray}
where the notations $I_P$ and $I_{P'}$ indicate that the mutual informations are computed assuming that $(U,V,X)\sim P_{UVX}$ and $(U,V,X)\sim P_{UVX}'$ and $P'_X(x)$ is the marginal obtained from $P'_{UVX}$.
\end{Definition}
\begin{Definition}[From \cite{CNair'10}]
A DMBC is called \emph{essentially less-noisy} if there exists a \emph{sufficient class} of pmfs $\set{P}_\set{X}$ such that whenever $P_X\in \mathcal{P}_\set{X}$, then for all conditional pmfs $P_{U|X}$,
 \begin{equation}\label{essenLN}
 I(U;Y_1)\leq I(U;Y_2).
 \end{equation}
 \end{Definition}
The class of essentially less-noisy DMBCs contains as special cases the BSBC and the BEBC. Also the memoryless Gaussian BC is essentially less noisy. 

For essentially less-noisy DMBCs, Marton's coding (or superposition coding) is known to achieve capacity \cite{CNair'10}. To evaluate the superposition coding region $\set{R}_{\textnormal{SuperPos}}^{(1)}$ of an essentially less-noisy DMBC, it suffices to evaluate the region given by constraints~\eqref{SPcoding} for pmfs $P_{UX}$ that satisfy $I(U;Y_1)\leq I(U;Y_2)$.

In this paper, we introduce the new term \emph{strictly essentially less-noisy BC},  a subclass of  essentially less-noisy DMBCs. 
 
 \begin{Definition}[Strictly Essentially Less-Noisy]
 The definition of  a \emph{strictly essentially less-noisy} DMBC coincides with the definition of  an essentially less-noisy DMBC except that Inequality~\eqref{essenLN} needs to be strict whenever $I(U;Y_1)>0$. 
  \end{Definition}
The BSBC and the BEBC with different cross-over probabilities or different erasure probabilities at the two receivers are strictly essentially less-noisy.}

\section{Motivation: A Simple Scheme}\label{sec:simple}
{ We sketch a simple scheme that motivates  our work. Consider first the coding scheme in Subsection~\ref{sec:noff} without feedback, on which we build our coding scheme with feedback in Subsection~\ref{sec:dd}.
\subsection{A coding scheme without feedback}\label{sec:noff}
Assume each message  is split into $B$ submessages, $M_1=(M_{1,1},\ldots, M_{1,B})$ and $M_2=(M_{2,1}, \ldots, M_{2,B})$. We apply block-Markov coding with $B+1$ blocks of length~$n=\lfloor \frac{N}{B+1}\rfloor$, and in each block $b\in\{1,\ldots,B+1\}$ we use superposition coding  to send fresh messages $M_{1,b}$ and $M_{2,b}$. Message $M_{1,b}$ is sent in the cloud center $U_b^n$ and Message~$M_{2,b}$ in the satellite codeword $X^n_{b}$. Thus, the scheme is expected to perform well when the following gap is nonnegative:
\begin{equation}\label{eq:gap}
{\Gamma} := I(U;Y_2)-I(U;Y_1)\geq 0.
\end{equation}

After each block, both Receivers~1 and 2 decode the cloud center codeword $U_b^n(M_{1,b})$ by means of joint typicality decoding. By the Packing Lemma {\cite{book:gamal}}, Receiver~1 will be successful  with high probability whenever 
\begin{IEEEeqnarray}{rCl}
R_1 & < & I(U;Y_1)\label{eq:cons_mot1}
\end{IEEEeqnarray}
and Receiver~2 will be successful whenever
\begin{IEEEeqnarray}{rCl}\label{eq:cons_mot11}
R_1 & < & I(U;Y_2).
\end{IEEEeqnarray}
Receiver~2 also decodes the satellite codeword $X_b^n(M_{2,b}|M_{1,b})$, which is possible with very high probability whenever $R_2 < I(X;Y_2|U)$.

 We notice that when
\begin{equation}\label{eq:deltapos}
{\Gamma} >0,
\end{equation}
Constraint~\eqref{eq:cons_mot11} is not active in view of Constraint~\eqref{eq:cons_mot1}.  In this case,
Receiver~2 would be able to decode the cloud center even if it contained a second additional message of rate~${\Gamma}$.  The problem is that adding an arbitrary additional message to the cloud center, might make it impossible for Receiver~1  to decode since then \eqref{eq:cons_mot1} could be violated. 

We now show that when there is feedback from Receiver~1, the transmitter can identify a suitable additional message that it can add to the cloud center and that will improve the performance of the scheme.

\subsection{Our coding scheme with feedback}\label{sec:dd}
Assume there is feedback  from Receiver~$1$, i.e., $R_{\FB,1}>0$; we ignore feedback from Receiver~2.
\begin{figure}[ht!]
\includegraphics[width=0.5\textwidth]{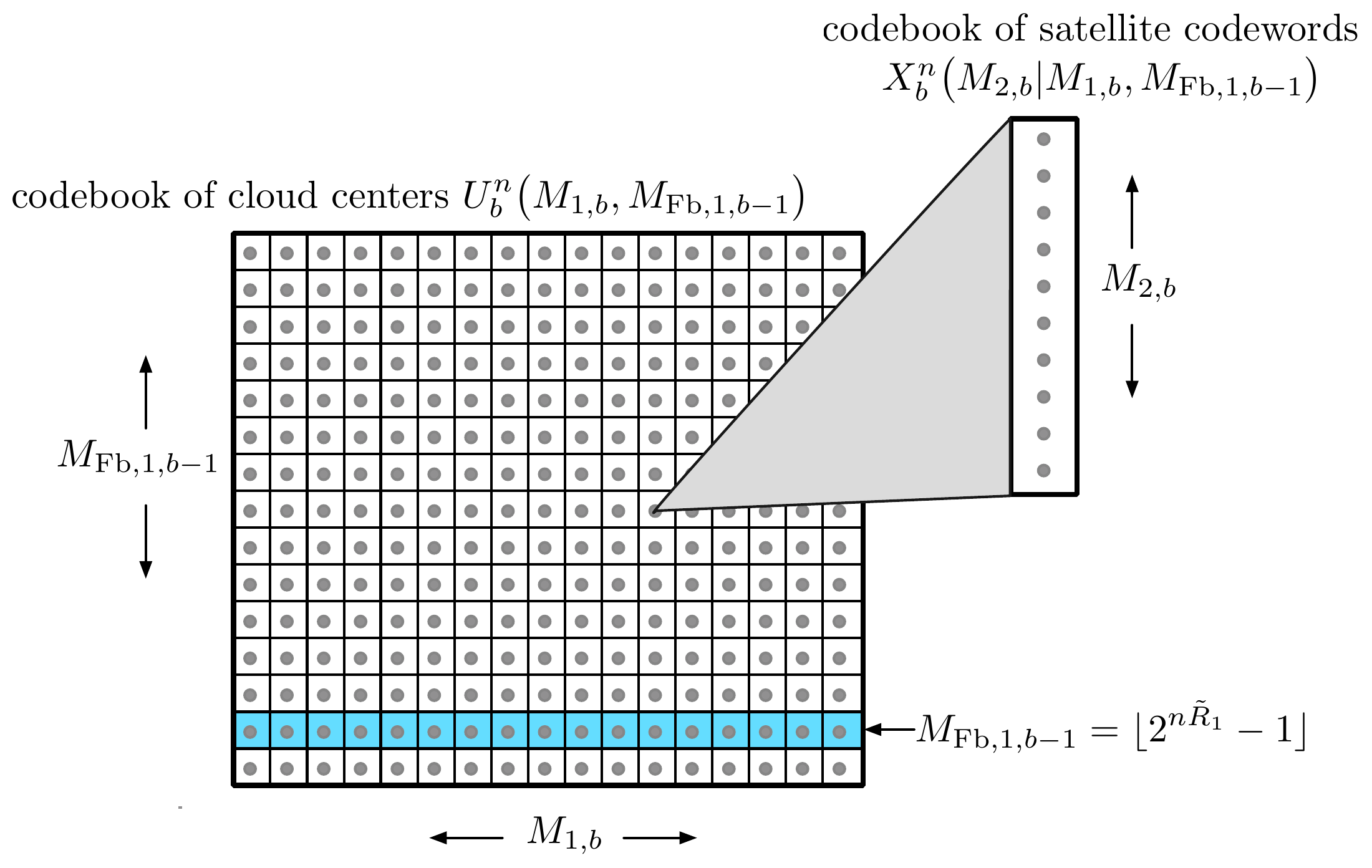}
\caption{Code construction in the simple motivating scheme for each block~$b$. Each dot represents a codeword.} 
\label{fig:superpos}
\end{figure}
As in the previous subsection, we employ a block-Markov strategy. In each block we use superposition coding as shown in Figure~\ref{fig:superpos}: the block-$b$ cloud center $U_b^n$ encodes the two messages $M_{1,b}$  and $M_{\FB,1,b-1}$ and the only satellite encodes message $M_{2,b}$. Here, $M_{\FB,1,b-1}$ is a feedback message that Receiver~1 sent back at the end of block $b-1$. 
Specifically, Receiver~1 generates $M_{\FB,1,b-1}$ as a Wyner-Ziv message that compresses Receiver~1's block-$(b-1)$ outputs  $Y_{1,b-1}^n$ so that a decoder that has side-information $Y_{2,b-1}^n$ can reconstruct the compressed outputs $\tilde{Y}_{1,b-1}^n$ {of $Y_{1,b-1}^n$}. Message~$M_{\FB,1,b-1}$'s rate 
 $\tilde{R}_1$ thus has to satisfy
 \begin{subequations}\label{eq:constraints_simple}
 \begin{equation}\label{eq:cons_mot2}
 \tilde{R}_1 > I(\tilde{Y}_1; Y_1|Y_2).
 \end{equation}
Also, since it is sent over the feedback link, it has to satisfy
 \begin{equation}\label{eq:cons_mot3}
 \tilde{R}_1 < R_{\FB,1},
 \end{equation}
 and for ease of exposition, we further restrict
  \begin{equation}\label{eq:cons_mot30}
 \tilde{R}_1 < \Gamma.
 \end{equation}
\end{subequations}

Decoding is performed as follows. After each block $b$, Receiver~1 decodes the cloud center. Since it is already aware of message $M_{\FB,1,b-1}$ (it created it itself), in the decoding it can restrict attention to all cloud-center codewords that correspond to the correct value of $M_{\FB,1,b-1}$. (In the code construction in Figure~\ref{fig:superpos}, it restricts to a specific row of the cloud center codebook. For example, when $M_{\FB,1,b-1}=\lfloor 2^{n\tilde{R}_1}\rfloor-1$ it restricts to the light blue row.) Thus, for Receiver~1 the situation is as if message $M_{\FB,1,b-1}$ was not present and had not been sent at all.

After each block~$b$, Receiver~2 performs the following three decoding steps:
\begin{itemize}
\item Based on $Y_{2,b}^n$ it decodes both messages $M_{1,b}$ and $M_{\FB,1,b-1}$ in the cloud center.
\item It uses the Wyner-Ziv message $M_{\FB,1,b-1}$ and its block-$(b-1)$ outputs $Y_{2,b-1}^n$ to reconstruct $\tilde{Y}_{1,b-1}^n$, the compressed version of $Y_{1,b-1}^n$.
\item  Based on the triple $(\tilde{Y}_{1,b-1}^n, Y_{2,b-1}^n, U^n_{b-1})$ it decodes its intended message $M_{2,b-1}$ sent in the satellite $X_{b-1}^n(M_{2,b-1}|M_{1,b-1}, M_{\FB,1,b-2})$ of the previous block. 
\end{itemize}

Receiver~1 errs with vanishingly small probability of error if
 \begin{subequations}\label{eq:simple}
\begin{IEEEeqnarray}{rCl}\label{eq:aa}
R_1 & < & I(U;Y_1).
\end{IEEEeqnarray}
Receiver~2 errs with vanishingly small probability of error if 
\begin{IEEEeqnarray}{rCl}
R_2 & < & I(X;\tilde{Y}_1,Y_2|U)= I(X;Y_2|U)+ I(X;\tilde{Y}_1|U,Y_2), \IEEEeqnarraynumspace
\end{IEEEeqnarray}
and if
\begin{IEEEeqnarray*}{rCl}
R_1+\tilde{R}_1 < I(U;Y_2)=I(U;Y_1)+\Gamma,
\end{IEEEeqnarray*}
{which is already implied by \eqref{eq:cons_mot30} and \eqref{eq:aa}.}
\end{subequations}

We conclude by~\eqref{eq:constraints_simple} and \eqref{eq:simple} that the {error} probability of our scheme tends to 0 as the blocklength $n$ and the number of blocks $B$ tend to infinity, whenever the rate pair $(R_1,R_2)$ satisfies \eqref{eq:simple} 
for some pmfs $P_{UX}$ and $P_{\tilde{Y}_1|Y_1}$ that satisfy 
\begin{equation}\label{eq:con}
I(\tilde{Y}_1; Y_1|Y_2) < \min\{{\Gamma}, R_{\FB,1}\}.
\end{equation}

Our new constraints~\eqref{eq:simple}  differ from the original superposition coding constraints \eqref{SPcoding} mainly in that the output $Y_2$ can be replaced by the pair $(Y_2, \tilde{Y}_1)$. This is because in our scheme the compressed output $\tilde{Y}_1$ is conveyed to Receiver~2.
Remarkably, there is no cost in conveying this compressed output $\tilde{Y}_1$ to Receiver~2: the compression information for Receiver~2 can be freely \emph{piggybacked} on the data sent to Receiver~1. Our new scheme thus improves over the standard superposition coding scheme whenever $\tilde{Y}_1$ is useful at Receiver~2, i.e.,  whenever  $I(X;\tilde{Y}_1|U,Y_2)>0$.}

%

 In the presented scheme, the transmitter simply relays the information it received over the feedback link to the other receiver. In this sense, the feedback link and part of the cloud center allow to establish a communication link from  Receiver~1 to Receiver~2, where the link is of rate
\begin{equation}
\min \{ {\Gamma}, R_{\FB,1}\}.
\end{equation}
We use this link to describe the compressed output $\tilde{Y}_1$  to Receiver~2.  
 
 \subsection{Extensions}
{For ease of exposition, we kept above coding scheme as simple as possible. It is easily extended in the following directions:
\begin{itemize}
\item The  block-$b$ Wyner-Ziv code that compresses $Y_{1,b}^n$  can be {superposed} on the cloud center $U_b^n$, {since Receiver 1  has already decoded this cloud center before generating (or using) the Wyner-Ziv message $M_{\textnormal{FB,1,b}}$}.
\item When $R_{\FB,2}>0$, also Receiver~2 can send a  Wyner-Ziv compression messages over the feedback link; now \emph{two} additional messages $M_{\textnormal{Fb},1,b-1}$ and  $M_{\textnormal{Fb},2,b-1}$ have to be included  in the block-$b$ cloud center.
\item Superposition coding can be replaced by full Marton coding.
\item The receivers can decode the cloud center and their satellites \emph{jointly} 
based on their own outputs and the compressed version of the other receiver's outputs. 
\item Sliding-window decoding at the receivers can be replaced by backward decoding.
\end{itemize}
Each of these modifications can only improve our scheme. However, there is a tension between the first modification and  {the last two modifications}. If a receiver uses  backward decoding instead of sliding-window decoding, it can't superpose its Wyner-Ziv code on the cloud center. This is simply because at the time it has to generate its Wyner-Ziv mesage, it hasn't yet decoded the cloud center. The same applies under sliding-window decoding if the receiver decodes the cloud center and the satellite \emph{jointly}.

For each receiver, one has thus to decide on the following two options:
\begin{itemize}
\item[1)] use successive sliding-window-decoding of the cloud center and the satellite of the Marton code, and  superpose the Wyner-Ziv code on the Marton cloud center; or 
\item[2)] use joint backward-decoding of the cloud center and the satellite of the Marton code, and \emph{do not} {superpose the Wyner-Ziv code on the cloud center}.
\end{itemize}
 It is unclear which of the two options performs better. In Section~\ref{sec:allschm} we present all three possible combinations: both receivers apply Option~1 (Scheme~IA); both receivers apply Option~2 (Scheme~IB); one receiver applies Option~1 and the other Option~2 (Scheme~IC). The corresponding three achievable regions are given in  Theorems~\ref{theosw}--\ref{Thm:oneside} in the next-following Section~\ref{Sec:NewRegions}.

}

\section{New Achievable regions}\label{Sec:NewRegions}

{The following achievable regions are based on the coding schemes in Section~\ref{sec:allschm}. 

Our first region is achieved by our scheme~IA  described in Section~\ref{sec:schemeb}, where both receivers apply successive sliding-window-decoding of the cloud center and the satellite of the Marton code, and  they {superpose} the Wyner-Ziv code on the Marton cloud center. 

\begin{Theorem}[Sliding-Window Decoding]\label{theosw}
The capacity region $\set{C}_{\textnormal{Fb}}$ includes the set $\set{R}_\textnormal{relay,sw}${\footnote{{The subscript ``relay" indicates that the transmitter simply relays the feedback information and the subscript ``sw" indicates that  sliding-window decoding is applied.}}} of all nonnegative rate pairs $(R_1,R_2)$ that satisfy
\begin{subequations}\label{eq:region_relayb}
\begin{IEEEeqnarray}{rCl}
R_1 \!&\leq& \!I(U_0,U_1;Y_1,\tilde{Y}_2|Q)-I(\tilde{Y}_2;U_0,Y_2|Y_1,Q)\quad~~\\
R_2 \!&\leq& \!I(U_0,U_2;Y_2,\tilde{Y}_1|Q)-I(\tilde{Y}_1;U_0,Y_1|Y_2,Q)\quad ~~\\
R_1 \!&\leq&\! I(U_0;Y_2|Q)+\Delta_2 -I(\tilde{Y}_1;Y_1|U_0,U_2,Y_2,Q)\qquad\\
R_2\! &\leq&\! I(U_0;Y_1|Q)+\Delta_1 -I(\tilde{Y}_2;Y_2|U_0,U_1,Y_1,Q)\\
R_1\!+\!R_2 \! &\leq& I(U_0,U_1;Y_1,\tilde{Y}_2|Q) +\Delta_1\nonumber\\
&&-I(\tilde{Y}_2;U_0,Y_2| Y_1,Q) -I(U_1;U_2|U_0,{Q})\label{In2bR2}\\
R_1\!+\!R_2 \!&\leq& \!I(U_0,U_2;Y_2,\tilde{Y}_1|Q)+\Delta_2 \nonumber \\
& &-I(\tilde{Y}_1;U_0,Y_1| Y_2,Q)-I(U_1;U_2|U_0,{Q}) \label{In2bSum1}\\
R_1\!+\!R_2 \!&\leq& \! I(U_0,U_1;Y_1,\tilde{Y}_2|Q)+ I(U_0,U_2;Y_2,\tilde{Y}_1|Q)\nonumber \\ 
& &  -I(\tilde{Y}_2;U_0,Y_2|Y_1,Q)-I(\tilde{Y}_1;U_0,Y_1|Y_2,Q)\quad\nonumber\\
& &  - I(U_1;U_2|U_0,{Q})  \label{In2bSum3}
\end{IEEEeqnarray}
\end{subequations}
where
\begin{IEEEeqnarray*}{rCl}
\Delta_1&:=&\min\{I(U_2;Y_2,\tilde{Y}_1|U_0,Q), \nonumber\\&&\quad I(U_2;Y_2,\tilde{Y}_1|U_0,Q)-I(\tilde{Y}_1;Y_1|U_0,Y_2,Q)+R_{\FB,1}\}\nonumber\\
\Delta_2&:=&\min\{I(U_1;Y_1,\tilde{Y}_2|U_0,Q),\nonumber\\&&\quad I(U_1;Y_1,\tilde{Y}_2|U_0,Q) -I(\tilde{Y}_2;Y_2|U_0,Y_1,Q)+R_{\FB,2}\}\nonumber
\end{IEEEeqnarray*}
for some pmf $P_{Q}P_{U_0U_1U_2|Q}P_{\tilde{Y}_1|Y_1U_0Q}P_{\tilde{Y}_2|Y_2U_0Q}$ and some function $f\colon  \mathcal{U}_0\times\mathcal{U}_1\times \mathcal{U}_2\times \mathcal{Q} \to \mathcal{X}$  such that $\Delta_1,\Delta_2\geq 0$ and \begin{subequations}\label{eq:Thm1condition}
\begin{IEEEeqnarray}{rCl}
I(\tilde{Y}_1;Y_1|U_0,U_2,Y_2,Q)\!&\leq& \!  \min\{I(U_0;Y_2|Q), R_{\textnormal{Fb},1}\}\quad\quad~\\
I(\tilde{Y}_2;Y_2|U_0,U_1,Y_1,Q)\! &\leq&  \! \min\{I(U_0;Y_1|Q), R_{\textnormal{Fb},2}\}\quad\quad~
  \end{IEEEeqnarray}
  \end{subequations}
where $X= f(U_0,U_1,U_2,Q)$.
\end{Theorem}
\begin{IEEEproof}
See Section \ref{sec:schemeb}.
\end{IEEEproof}
For  {sufficiently large feedback rates  $R_{\textnormal{Fb},1}$ and $R_{\textnormal{Fb},2}$}  (in particular for $R_{\textnormal{Fb},1}\geq {\log_2} |\mathcal{Y}_1|$ and $R_{\textnormal{Fb},2}\geq {\log_2} |\mathcal{Y}_2|$), we have \begin{subequations} \label{eq:DeltaPfb}\begin{IEEEeqnarray}{rCl}
\Delta_1&=&I(U_2;Y_2,\tilde{Y}_1|U_0,Q)\\
\Delta_2&=&I(U_1;Y_1,\tilde{Y}_2|U_0,Q)
\end{IEEEeqnarray}
\end{subequations}

The second region is based on our Scheme~IB  described in Section~\ref{sec:backwarddecoding}, where the receivers {apply joint backward-decoding of the cloud center and the satellite of the Marton code, but they \emph{do not} {superpose the Wyner-Ziv code on the cloud center.}}

 \begin{Theorem}[Backward Decoding]\label{theobw}
The capacity region $\set{C}_{\textnormal{Fb}}$ includes the set $\set{R}_\textnormal{relay,bw}${\footnote{{The subscript ``bw" stands for backward decoding.}}}  of all nonnegative rate pairs $(R_1,R_2)$ that satisfy
\begin{subequations}\label{eq:region_relay}
\begin{IEEEeqnarray}{rCl}
R_1 &\leq& I(U_0,U_1;Y_1,\tilde{Y}_2|Q)-I(\tilde{Y}_2;Y_2|Y_1,Q)\qquad\label{In2R1}\\
R_2 &\leq& I(U_0,U_2;Y_2,\tilde{Y}_1|Q)-I(\tilde{Y}_1;Y_1|Y_2,Q)\qquad  \label{In2R2}\\
R_1+R_2 &\leq& I(U_0,U_1;Y_1,\tilde{Y}_2|Q) + \Delta_1\nonumber \\
& &  -I(\tilde{Y}_2;Y_2| Y_1,Q)\!-\!I(U_1;U_2|U_0,\!{Q})\label{In2Sum1}\\
R_1+R_2 &\leq& I(U_0,U_2;Y_2,\tilde{Y}_1|Q)+ \Delta_2\nonumber \\ 
& &-I(\tilde{Y}_1;Y_1|Y_2,Q)\!-\! I(U_1;U_2|U_0,\!{Q})\label{In2Sum2}\IEEEeqnarraynumspace\\
R_1+R_2 &\leq&  I(U_0,U_1;Y_1,\tilde{Y}_2|Q) -I(\tilde{Y}_2;Y_2|Y_1,Q) \nonumber \\ 
& & +I(U_0,U_2;Y_2,\tilde{Y}_1|Q)-I(\tilde{Y}_1;Y_1|Y_2,Q)\nonumber \\ 
& & - I(U_1;U_2|U_0,{Q}) \label{In2Sum3}
\end{IEEEeqnarray}
\end{subequations}
for some pmf $P_{Q}P_{U_0U_1U_2|Q}P_{\tilde{Y}_1|Y_1Q}P_{\tilde{Y}_2|Y_2Q}$ and some function $f\colon  \mathcal{U}_0\times\mathcal{U}_1\times \mathcal{U}_2\times \mathcal{Q} \to \mathcal{X}$  such that 
\begin{subequations}\label{eq:Thm2condition}
\begin{IEEEeqnarray}{rCl}
I(\tilde{Y}_1;Y_1|U_0,U_2,Y_2,Q)&\leq&    R_{\textnormal{Fb},1}\\
I(\tilde{Y}_2;Y_2|U_0,U_1,Y_1,Q)&\leq&    R_{\textnormal{Fb},2}
  \end{IEEEeqnarray}
  \end{subequations}
where $X= f(U_0,U_1,U_2,Q)$.
\end{Theorem}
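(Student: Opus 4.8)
The plan is to establish Theorem~\ref{theobw} by combining a block-Markov construction with Marton coding inside each block, Wyner-Ziv feedback compression at the receivers, relaying of the compression indices in the Marton cloud center of the next block, and backward decoding at both receivers. Concretely, I would split the transmission into $B+1$ blocks of length $n$. In block~$b$ the transmitter sends fresh messages $(M_{1,b},M_{2,b})$ together with the two feedback/compression indices $(M_{\FB,1,b-1},M_{\FB,2,b-1})$ produced by the receivers after block~$b-1$; these four indices jointly select the Marton cloud center $U_0^n(\cdot)$, while $M_{1,b}$ and $M_{2,b}$ additionally index the satellite codewords $U_1^n(\cdot)$ and $U_2^n(\cdot)$ in the usual Marton way, with time-sharing sequence $Q^n$ fixed and known. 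After observing $Y_{i,b}^n$, Receiver~$i$ generates a Wyner-Ziv description $\tilde Y_{i,b}^n$ of its outputs using a conditional codebook built on $P_{\tilde Y_i|Y_iQ}$; since the compression is \emph{not} superpositioned on the cloud center here (backward decoding means the receivers have not yet decoded $U_0^n$ of block~$b$), the Wyner-Ziv side information at the transmitter-as-reconstructor must be taken as $(U_0^n,U_i^n,Y_{i'}^n,Q^n)$ of block~$b$ (available at the transmitter because it knows its own codewords), so the covering/binning step costs rate $I(\tilde Y_i;Y_i\mid U_0,U_{\bar i},Y_{\bar i},Q)$, which must not exceed $R_{\FB,i}$; this is exactly \eqref{eq:Thm2condition}. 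The encoder relays these bins; when, in block~$b+1$, a receiver has decoded the cloud center it recovers $M_{\FB,i,b}$ and, using its own side information, reconstructs $\tilde Y_{i,b}^n$.

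The core of the argument is the backward-decoding error analysis. Starting from the last block and proceeding backward, when Receiver~1 decodes block~$b$ it already knows (from block~$b+1$, decoded earlier) the index $M_{\FB,1,b}$ that it itself had fed back, and it has just decoded $M_{\FB,2,b}$ from block~$b+1$'s cloud center, hence it holds $\tilde Y_{2,b}^n$. It then performs joint-typicality decoding of $(M_{1,b},M_{2,b},M_{\FB,2,b-1})$ — equivalently of the cloud center and its own satellite — against the tuple $(Y_{1,b}^n,\tilde Y_{2,b}^n,Q^n)$, with the crucial point that the candidate cloud centers are restricted to those consistent with the \emph{known} value of $M_{\FB,1,b}$. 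By the packing lemma and the joint-typicality covering argument for Marton codes, this succeeds (vanishing error) provided the rate of the remaining unknown cloud-center message plus the satellite-message rate is bounded by $I(U_0,U_1;Y_1,\tilde Y_2\mid Q)$ minus a Wyner-Ziv cost term $I(\tilde Y_2;Y_2\mid Y_1,Q)$ accounting for the fact that $\tilde Y_2^n$ is only jointly typical with $Y_1^n$ in the Markov/Wyner-Ziv sense, yielding \eqref{In2R1}; the symmetric statement gives \eqref{In2R2}. The sum-rate bounds \eqref{In2Sum1}–\eqref{In2Sum3} come from the standard Marton sum-rate analysis at each receiver (including the $-I(U_1;U_2\mid U_0,Q)$ penalty from the Marton covering step), with the additional subtraction of $\Delta_i$ reflecting the rate that had to be carved out of the cloud center to relay the other receiver's feedback index whenever $R_{\FB,\bar i}$ is too small to be ``absorbed'' — i.e. $\Delta_i=\max\{0,I(\tilde Y_i;Y_i\mid U_0,Y_{\bar i},Q)-R_{\FB,i}\}$, as in \eqref{eq:ddelta}. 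Finally, letting $B\to\infty$ makes the rate loss $1/(B+1)$ from the extra block vanish, and a standard expurgation/union-bound argument over the finitely many error events (codebook generation, Marton covering, Wyner-Ziv covering, and the two decoders in each block) drives $P_e^{(N)}\to 0$.

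I would organize the write-up as: (i) codebook generation for $U_0^n$ (superimposing the four indices), $U_1^n$, $U_2^n$ via the usual Marton multicoding, plus the two Wyner-Ziv codebooks; (ii) encoding, including the Marton joint-typicality selection of satellite indices and the feedback-compression step; (iii) decoding, spelled out backward for Receiver~1 and then by symmetry for Receiver~2; (iv) analysis of the probability of error, listing the events and invoking the packing lemma, the covering lemma, and the Markov lemma; (v) the limiting argument $B\to\infty$. The main obstacle — and the step deserving the most care — is bookkeeping the side information correctly in two distinct places: the Wyner-Ziv \emph{encoding} side information at the reconstructing transmitter is $(U_0,U_{\bar i},Y_{\bar i})$ (giving the feedback-rate constraint and the definition of $\Delta_i$), whereas the side information the \emph{decoding} receiver actually uses to reconstruct $\tilde Y_{\bar i}^n$ is only $(Y_i^n,Q^n)$ together with the decoded cloud center, which is why the rate penalty appearing in \eqref{In2R1}–\eqref{In2R2} is $I(\tilde Y_{\bar i};Y_{\bar i}\mid Y_i,Q)$ rather than the conditional version; getting these two accounting terms to line up — and verifying that the ``known index does not hurt'' phenomenon from Section~\ref{sec:simple} still goes through when it is the \emph{cloud center} (not a superposed layer) that carries the known index — is the delicate part. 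A secondary subtlety is ensuring the Marton covering step at the encoder and the Wyner-Ziv covering steps at the two receivers are mutually compatible in the same block, i.e. that one can first pick $(U_1^n,U_2^n)$ jointly typical with $(U_0^n,Q^n)$ and then, after the channel, still find jointly typical $\tilde Y_i^n$; both are standard but must be stated in the right conditional order.
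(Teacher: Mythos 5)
Your proposal follows essentially the same route as the paper's proof (Scheme~1B in Section~\ref{sec:backwarddecoding} and Appendix~\ref{sec:analysis_relay}): block-Markov Marton coding, \emph{unconditional} Wyner-Ziv compression codebooks at the receivers, relaying of the bin indices through the next block's cloud center, backward decoding in which each receiver restricts the cloud-center candidates to its own known feedback index and jointly resolves the other receiver's refinement index, and a covering/packing/Markov-lemma analysis followed by Fourier-Motzkin elimination. Two points of your bookkeeping are misstated, though neither changes the final rate accounting. First, in this scheme the transmitter never reconstructs $\tilde Y_i^n$ and does not possess $Y_{\bar i}^n$, so condition~\eqref{eq:Thm2condition} does not come from a ``transmitter-as-reconstructor'' Wyner-Ziv encoding; it falls out of combining the covering constraint $\tilde R_i+\hat R_i \geq I(\tilde Y_i;Y_i)$, the feedback limit $\tilde R_i \leq R_{\FB,i}$, and the packing constraint on the refinement index $\hat R_i$ at the \emph{other} receiver, whose side information $(U_0,U_{\bar i},Y_{\bar i})$ (available to it under backward decoding) is exactly the conditioning appearing in~\eqref{eq:Thm2condition}. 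Second, $\Delta_i$ is not rate ``carved out of the cloud center'' --- the cloud-center load is $\tilde R_i$ and, after elimination, produces the $-I(\tilde Y_i;Y_i|Y_{\bar i},Q)$ terms in \eqref{In2R1}--\eqref{In2R2}; rather, $\Delta_i$ is the excess refinement rate $\hat R_i$ (forced up when $R_{\FB,i}$ is small) that the other receiver must disambiguate jointly with its satellite codeword, which is why it penalizes $I(U_{\bar i};Y_{\bar i},\tilde Y_i|U_0,Q)$ in the sum-rate constraints. With these two attributions corrected, your outline matches the paper's argument.
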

\begin{IEEEproof}
See Section~\ref{sec:backwarddecoding}.
\end{IEEEproof}

Setting  $\tilde{Y}_1=\tilde{Y}_2=\textnormal{const.}$, i.e.,  both receivers do not send any feedback, the region $\set{R}_\textnormal{relay,bw}$ specializes to $\set{R}_{\textnormal{Marton}}$.

\begin{Remark}
{Constraints~\eqref{eq:region_relay} and \eqref{eq:Thm2condition} are looser than Constraints~\eqref{eq:region_relayb} and \eqref{eq:Thm1condition}, respectively}. But in Theorem~\ref{theobw} we have the conditional pmfs $P_{\tilde{Y}_1|Y_1}$ and $P_{\tilde{Y}_2|Y_2}$ whereas in Theorem~\ref{theosw} we allow for more general pmfs $P_{\tilde{Y}_1|Y_1U_0}$ and $P_{\tilde{Y}_2|Y_2U_0}$.  It is thus not clear in general which of the achievable regions in Theorems ~\ref{theosw} or~\ref{theobw} is larger.
\end{Remark}

{The third region is based on our Scheme~IC (Section~\ref{schm:1c}), where Receiver~1 applies successive sliding-window-decoding of the cloud center and the satellite of the Marton code, and  superposes the Wyner-Ziv code on the Marton cloud center, and Receiver~2 applies 
joint backward-decoding of the cloud center and the satellite of the Marton code, but \emph{does not} superpose the Wyner-Ziv code.

The scheme is particularly interesting when there is no feedback from Receiver~2, $R_{\FB,2}=0$, and when Marton's scheme specializes to superposition coding with no satellite codeword for Receiver~1. The rate region corresponding to this special case is presented in Corollary~\ref{Cor:thm3} ahead.} 

\begin{Theorem}[Hybrid Sliding-Window Decoding and Backward Decoding] \label{Thm:oneside}
{For} $R_{\FB,2}=0$, the capacity region $\set{C}_{\textnormal{Fb}}$ includes the set $\set{R}_\textnormal{relay,hb}^{(1)}${\footnote{{The subscript ``hb" stands for hybrid decoding.}}} of all nonnegative rate pairs $(R_1,R_2)$ that satisfy
\begin{subequations}\label{eq:region_relays}
\begin{IEEEeqnarray}{rCl}
R_1 &\leq& I(U_0,U_1;Y_1|Q)\label{eq:thm31a}\\
R_2 &\leq &{ I(U_0,U_2;\tilde{Y}_1,Y_2|Q)}\nonumber\\
&& {- I(\tilde{Y}_1;U_0,U_1,U_2,Y_1| {Y}_2, Q)}\label{eq:thm31b}\\
R_1+R_2 &\leq& I(U_0,U_1;Y_1|Q)+ \Delta_1\nonumber\\
&&  -I(U_1;U_2|U_0,Q)\label{eq:thm31c}\\
R_1+R_2 &\leq& {I(U_1;Y_1|U_0,Q)+  I(U_0,U_2;\tilde{Y}_1,Y_2|Q)}\nonumber\\
&& - I(\tilde{Y}_1;U_0,U_1,U_2,Y_1| {Y}_2, Q)\nonumber\\ &&
 -I(U_1;U_2|U_0,Q)\label{eq:thm31d}
\IEEEeqnarraynumspace
\end{IEEEeqnarray}
\end{subequations}
for some pmf $P_{Q}P_{U_0U_1U_2|Q}P_{\tilde{Y}_1|Y_1U_0Q}$ and some function $f\colon  \mathcal{U}_0\times\mathcal{U}_1\times \mathcal{U}_2\times \mathcal{Q} \to \mathcal{X}$  such that 
\begin{IEEEeqnarray}{rCl}\label{eq:fbRateThm3}
I(\tilde{Y}_1;U_1,\!Y_1|U_0,U_2, Y_2,Q)&\leq&    R_{\textnormal{Fb},1}.
  \end{IEEEeqnarray}

The capacity region  $\set{C}_{\textnormal{Fb}}$  also includes the region $\set{R}_\textnormal{relay,hb}^{(2)}$ which is obtained by exchanging indices $1$ and $2$  in the above definition of $\set{R}_\textnormal{relay,hb}^{(1)}$.
\end{Theorem}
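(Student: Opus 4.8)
The plan is to obtain Theorem~\ref{Thm:oneside} by a block-Markov argument that runs, side by side, the achievability proof of Theorem~\ref{theosw} at Receiver~1 and that of Theorem~\ref{theobw} at Receiver~2, specialized to the one-sided feedback case $R_{\FB,2}=0$; the heavy lifting---the Marton, covering, packing, and Wyner--Ziv error analyses---is then inherited essentially verbatim from those two proofs, and the only new content is in coupling the two opposite decoding directions. Transmission uses $B+1$ blocks of length $n$. Split $M_i$ into a cloud part $M_{0,i}$ of rate $R_{0,i}$ and a satellite part of rate $R_i-R_{0,i}$; in block~$b$ the cloud-center codeword $u_0^n$ is indexed by $(M_{0,1,b},M_{0,2,b},M_{\FB,1,b-1})$, where $M_{\FB,1,b-1}$ is the index Receiver~1 fed back at the end of block~$b-1$, so that the transmitter merely \emph{relays} it. On top of $u_0^n$ I superimpose the two Marton satellite books (with mutual-covering bin rates summing to at least $I(U_1;U_2|U_0,Q)$) and a Wyner--Ziv book $\{\tilde y_1^n(\cdot)\}$ of rate $\tilde R_1=I(\tilde Y_1;Y_1|U_0,Q)+\epsilon$ drawn i.i.d.\ from $P_{\tilde Y_1|U_0Q}$ and partitioned into $2^{nR_{\FB,1}}$ bins; the channel input is $x^n=f(u_0^n,u_1^n,u_2^n,q^n)$. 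The first block uses a fixed $M_{\FB,1,0}$ and the last block transmits dummy fresh messages, at a vanishing rate cost.

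Encoding and feedback generation run as follows. In block~$b$ the transmitter picks the cloud codeword of $(M_{0,1,b},M_{0,2,b},M_{\FB,1,b-1})$ and Marton-encodes the satellite pair, which succeeds with high probability by the mutual covering lemma. Receiver~1 decodes \emph{forward} (its Scheme-1A sliding-window decoder collapses to block-by-block decoding once $R_{\FB,2}=0$): knowing $M_{\FB,1,b-1}$ because it generated it, it restricts the cloud list to the $2^{n(R_{0,1}+R_{0,2})}$ consistent codewords and jointly decodes the cloud codeword and its own satellite from $y_{1,b}^n$; by the packing lemma, and after eliminating $R_{0,1},R_{0,2}$, this needs only $R_1\le I(U_0,U_1;Y_1|Q)$, so the relayed index is \emph{free} and we obtain~\eqref{eq:thm31a}. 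At the end of block~$b$, having decoded $u_{0,b}^n$ correctly, Receiver~1 then covers $(y_{1,b}^n,u_{0,b}^n,q^n)$ by a Wyner--Ziv codeword $\tilde y_{1,b}^n$ from the book superimposed on $u_{0,b}^n$---possible by the covering lemma since $\tilde R_1>I(\tilde Y_1;Y_1|U_0,Q)$---and feeds back its bin index $M_{\FB,1,b}\in\{1,\dots,2^{nR_{\FB,1}}\}$, respecting the limit $R_{\FB,1}$.

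Receiver~2 decodes \emph{backward}: processing block~$b$ after blocks $b+1,\dots,B+1$, it already holds $M_{\FB,1,b}$, read off block~$b+1$'s cloud center, and now looks \emph{simultaneously} for a cloud codeword $u_{0,b}^n$, a Marton satellite $u_{2,b}^n$ on it, and a $\tilde y_{1,b}^n$ in bin $M_{\FB,1,b}$ of the Wyner--Ziv book on that $u_{0,b}^n$, with $(u_{0,b}^n,u_{2,b}^n,\tilde y_{1,b}^n,y_{2,b}^n,q^n)$ jointly typical; this simultaneously returns $(M_{0,1,b},M_{0,2,b},M_{2,b}')$ and the relayed index $M_{\FB,1,b-1}$ used at the next (earlier) block. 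The error event splits by which of $u_{0,b}^n$, $u_{2,b}^n$, $\tilde y_{1,b}^n$ is wrong; counting $2^{n(\tilde R_1-R_{\FB,1})}$ Wyner--Ziv codewords per bin and using the Markov chain $\tilde Y_1-(Y_1,U_0,Q)-(U_1,U_2,Y_2)$, the ``wrong-$\tilde y_{1,b}^n$-only'' term forces $R_{\FB,1}>I(\tilde Y_1;Y_1|U_0,U_2,Y_2,Q)=I(\tilde Y_1;U_1,Y_1|U_0,U_2,Y_2,Q)$, namely~\eqref{eq:fbRateThm3}; the joint mis-decoding of $u_{2,b}^n$ together with $\tilde y_{1,b}^n$ contributes $R_2^{\mathrm{eff}}<I(U_2;\tilde Y_1,Y_2|U_0,Q)-\Delta_1$ for the effective satellite rate $R_2^{\mathrm{eff}}$ (private message rate plus Marton bin rate), with $\Delta_1$ of~\eqref{eq:ddelta} being exactly the portion of $\tilde R_1-R_{\FB,1}$ that the side information $Y_2$ cannot absorb; and the ``wrong-$u_{0,b}^n$'' term, after substituting $\tilde R_1$ and using the chain-rule identity $I(U_0,U_2;Y_2|Q)-I(\tilde Y_1;Y_1|U_0,U_2,Y_2,Q)=I(U_0,U_2;\tilde Y_1,Y_2|Q)-I(\tilde Y_1;U_0,U_1,U_2,Y_1|Y_2,Q)$, gives~\eqref{eq:thm31b}. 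Fourier--Motzkin elimination of $R_{0,1},R_{0,2}$ and the Marton bin rates then combines these with Receiver~1's two inequalities to produce the sum-rate bounds~\eqref{eq:thm31c} and~\eqref{eq:thm31d}. A union bound over the encoding step, Receiver~1's $B+1$ forward decodings, and Receiver~2's $B+1$ backward decodings sends $P_e^{(N)}\to0$ as $n\to\infty$; the region $\set{R}_\textnormal{relay,hb}^{(2)}$ follows by relabelling $1\leftrightarrow2$.

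The one genuine obstacle I anticipate is making the two opposite decoding directions mesh cleanly: Receiver~1's block-$b$ feedback index is produced only after block~$b$, is carried through block~$b+1$'s cloud center, and must be available to Receiver~2 exactly when it reaches block~$b$ in its backward pass, so the induction hypothesis for Receiver~2's error analysis (``blocks $b+1,\dots,B+1$ already correctly decoded'') has to be arranged so that it in fact supplies $M_{\FB,1,b}$, and one must check that no error event is double-counted across the forward and backward chains and that the first and last blocks close the recursion. The conceptually important---but computationally routine---feature is that Receiver~2 performs the Wyner--Ziv reconstruction \emph{jointly} with decoding $u_{0,b}^n$ and $u_{2,b}^n$, rather than decoding the cloud center first as in Theorem~\ref{theosw}; this is what upgrades the penalty $I(\tilde Y_1;U_0,Y_1|Y_2,Q)$ of Theorem~\ref{theosw} to the larger $I(\tilde Y_1;U_0,U_1,U_2,Y_1|Y_2,Q)$ in~\eqref{eq:thm31b} while dropping the extra feedback-rate restriction $I(\tilde Y_1;Y_1|U_0,U_2,Y_2,Q)\le I(U_0;Y_2|Q)$ present in Theorem~\ref{theosw}---precisely the trade-off that makes Scheme~1C useful at small feedback rates.
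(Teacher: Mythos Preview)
Your proposal is correct and matches the paper's approach: the paper's Scheme~1C uses the codebook construction of Scheme~1A (in particular the Wyner--Ziv book superimposed on $u_0^n$, consistent with $P_{\tilde Y_1|Y_1U_0Q}$), has Receiver~1 jointly decode cloud and satellite forward exactly as you describe, and has Receiver~2 apply the backward decoding of Scheme~1B; the paper then states that the error analysis is ``similar to the analyses of schemes~1A and~1B'' and omits the details, whereas you have spelled them out, including the chain-rule manipulation leading to~\eqref{eq:thm31b} and the joint typicality test at Receiver~2 that couples $\tilde y_{1,b}^n$ to the candidate $u_{0,b}^n$. Your discussion of how the forward and backward recursions interlock (Receiver~2 recovers $M_{\FB,1,b}$ from block $b{+}1$'s cloud center before processing block~$b$) is exactly the bookkeeping the paper leaves implicit.
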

\begin{IEEEproof}
See Section~\ref{schm:1c}.
\end{IEEEproof}

{If superposition coding is used instead of Marton coding, Theorem \ref{Thm:oneside} reduces to the following corollary.}

\begin{Corollary}  \label{Cor:thm3}
The capacity region $\set{C}_{\textnormal{Fb}}$ includes the set $\set{R}_\textnormal{relay,sp}^{(1)}${\footnote{{The subscript ``sp" stands for superposition coding.}}}  of all nonnegative rate pairs $(R_1,R_2)$ that satisfy
\begin{subequations}\label{Coreq:region_relays}
\begin{IEEEeqnarray}{rCl}
R_1 &\leq& I(U;Y_1|Q)\label{eq:coro1a}\\
R_1\!+\!R_2 &\leq& I(U;Y_1|Q)+ I(X;Y_2,\tilde{Y}_1|U, \!Q)\label{eq:coro1b}\\
R_1\!+\!R_2 &\leq& I(X;Y_2|Q)- I(\tilde{Y}_1;Y_1|U,Y_2, Q)\label{eq:coro1c}
\IEEEeqnarraynumspace
\end{IEEEeqnarray}
\end{subequations}
for some pmf $P_{Q}P_{UX|Q}P_{\tilde{Y}_1|Y_1UQ}$ such that 
\begin{IEEEeqnarray}{rCl}\label{eq:fbRatecoro1}
I(\tilde{Y}_1;Y_1|U,Y_2,Q)&\leq&    R_{\textnormal{Fb},1}.
  \end{IEEEeqnarray}

The capacity region  $\set{C}_{\textnormal{Fb}}$  also includes the region $\set{R}_\textnormal{relay,sp}^{(2)}$ which is obtained by exchanging indices $1$ and $2$  in the above definition of $\set{R}_\textnormal{relay,sp}^{(1)}$.
\end{Corollary}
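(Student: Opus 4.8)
The plan is to derive Corollary~\ref{Cor:thm3} as the superposition-coding specialization of Theorem~\ref{Thm:oneside}. In the region $\set{R}_\textnormal{relay,hb}^{(1)}$ I would set $U_1$ to be a constant (so that Receiver~1's codebook has no satellite layer, exactly as in superposition coding), identify $U_0$ with $U$ and $U_2$ with $X$, and take $f$ to be the coordinate projection onto $U_2$, so that $X=U_2$ and the pair $(P_Q P_{U_0U_1U_2|Q},f)$ realizes exactly the pmf $P_Q P_{UX|Q}$ of the corollary. Under this identification the conditional law $P_{\tilde Y_1|Y_1U_0Q}$ becomes $P_{\tilde Y_1|Y_1UQ}$, as in the corollary, and since $\tilde Y_1$ is produced from $(Y_1,U,Q)$ alone we have the Markov chain $\tilde Y_1-(Y_1,U,Q)-(X,Y_2)$, which is the only structural fact beyond Theorem~\ref{Thm:oneside} that the proof uses.

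Next I would verify the side conditions of Theorem~\ref{Thm:oneside}. Because $U_0=U$, the quantity $\Delta_1=\max\{0,\,I(\tilde Y_1;Y_1|U_0,Y_2,Q)-R_{\FB,1}\}$ equals $\max\{0,\,I(\tilde Y_1;Y_1|U,Y_2,Q)-R_{\FB,1}\}$, which is $0$ by the corollary's feedback constraint~\eqref{eq:fbRatecoro1}. The feedback constraint~\eqref{eq:fbRateThm3}, $I(\tilde Y_1;U_1,Y_1|U_0,U_2,Y_2,Q)\le R_{\FB,1}$, specializes to $I(\tilde Y_1;Y_1|U,X,Y_2,Q)$, which by the Markov chain above does not exceed $I(\tilde Y_1;Y_1|U,Y_2,Q)$ and is therefore also bounded by $R_{\FB,1}$ through~\eqref{eq:fbRatecoro1}. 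Hence any pmf admissible for the corollary yields, after the identification, a pmf admissible for Theorem~\ref{Thm:oneside}.

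It then remains to show that the four rate constraints \eqref{eq:thm31a}--\eqref{eq:thm31d}, evaluated with $U_1$ constant and $\Delta_1=0$, reduce to the three constraints \eqref{eq:coro1a}--\eqref{eq:coro1c}. With $U_1$ constant we have $I(U_1;Y_1|U_0,Q)=0$ and $I(U_1;U_2|U_0,Q)=0$; thus \eqref{eq:thm31a} becomes $R_1\le I(U;Y_1|Q)$, which is \eqref{eq:coro1a}, and \eqref{eq:thm31c} becomes $R_1+R_2\le I(U;Y_1|Q)+I(X;Y_2,\tilde Y_1|U,Q)$, which is \eqref{eq:coro1b}. Both \eqref{eq:thm31b} and \eqref{eq:thm31d} reduce to bounds with the same right-hand side $A:=I(U,X;\tilde Y_1,Y_2|Q)-I(\tilde Y_1;U,X,Y_1|Y_2,Q)$, namely $R_2\le A$ and $R_1+R_2\le A$; since $R_1\ge 0$, the former is implied by the latter and may be discarded. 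Finally, a chain-rule expansion using $I(U;Y_2|X,Q)=0$ and cancelling the two identical terms $I(U,X;\tilde Y_1|Y_2,Q)$ gives $A=I(X;Y_2|Q)-I(\tilde Y_1;Y_1|U,X,Y_2,Q)$, and one further application of the Markov chain yields $A\ge I(X;Y_2|Q)-I(\tilde Y_1;Y_1|U,Y_2,Q)$. Hence \eqref{eq:coro1c} implies $R_1+R_2\le A$, so every rate pair in $\set{R}_\textnormal{relay,sp}^{(1)}$ satisfies the specialized constraints of Theorem~\ref{Thm:oneside} and therefore lies in $\set{C}_\textnormal{Fb}$; the inclusion $\set{R}_\textnormal{relay,sp}^{(2)}\subseteq\set{C}_\textnormal{Fb}$ follows by applying the same steps to $\set{R}_\textnormal{relay,hb}^{(2)}$.

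The only points requiring care are the bookkeeping in the chain-rule reduction of $A$ and the repeated use of the Markov chain $\tilde Y_1-(Y_1,U,Q)-(X,Y_2)$; the latter is also what makes it legitimate that \eqref{eq:coro1c} is stated with $I(\tilde Y_1;Y_1|U,Y_2,Q)$ rather than the possibly strictly smaller $I(\tilde Y_1;Y_1|U,X,Y_2,Q)$ that a literal specialization of Theorem~\ref{Thm:oneside} would produce. No fresh achievability or error-probability analysis is needed beyond Theorem~\ref{Thm:oneside}.
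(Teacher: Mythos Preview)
Your proposal is correct and follows essentially the same specialization of Theorem~\ref{Thm:oneside} as the paper's own proof: set $U_1=\textnormal{const.}$, $U_0=U$, $U_2=X$, and verify that \eqref{eq:thm31a}--\eqref{eq:thm31d} and \eqref{eq:fbRateThm3} are implied by \eqref{eq:coro1a}--\eqref{eq:coro1c} and \eqref{eq:fbRatecoro1}. Your treatment is in fact more explicit than the paper's: where the paper simply asserts that \eqref{eq:thm31d} and \eqref{eq:fbRateThm3} are ``looser'' than \eqref{eq:coro1c} and \eqref{eq:fbRatecoro1}, you spell out the chain-rule reduction of $A$ and the Markov-chain argument $\tilde Y_1-(Y_1,U,Q)-(X,Y_2)$ that justifies the inequality $I(\tilde Y_1;Y_1|U,X,Y_2,Q)\le I(\tilde Y_1;Y_1|U,Y_2,Q)$.
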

\begin{IEEEproof}
Let $\tilde{Y}_2=U_1=\textnormal{const.}$, $U=U_0$ and $X=U_2$. Constraint \eqref{eq:thm31a} then specializes to \eqref{eq:coro1a} and Constraint \eqref{eq:thm31b} is redundant compared to Constraint \eqref{eq:thm31d}. Observe that Constraints   \eqref{eq:thm31d} and \eqref{eq:fbRateThm3} are looser than Constraints  \eqref{eq:coro1c} and \eqref{eq:fbRatecoro1}, respectively. Also, by \eqref{eq:fbRatecoro1}, Constraint \eqref{eq:thm31c} reduces to \eqref{eq:coro1b}. Thus the capacity region  $\set{C}_{\textnormal{Fb}}$  includes the region $\set{R}_\textnormal{relay,sp}^{(1)}$. Similar arguments hold for  $\set{R}_\textnormal{relay,sp}^{(2)}$.
\end{IEEEproof}

\begin{Remark}
The region $\set{R}_{\textnormal{relay,hb}}^{(1)}$ contains the regions in Theorems~ \ref{theosw} and \ref{theobw} when these latter are specialized to $U_1=\textnormal{const.}$, $U_2=X$, and $R_{\FB,2}=0$. 
\end{Remark}

{In our Schemes~IA--IC the transmitter simply relays the compression information it received over each of the feedback links to the other receiver, as is the case also for our motivating scheme in the previous section~\ref{sec:simple}. 

Alternatively, the transmitter can  use this feedback information to first reconstruct the compressed versions of the channel outputs  and then  compress them jointly  with  the Marton codewords. The  indices resulting from this latter compression are then sent to the two receivers. {The following Theorem \ref{theo5} presents the rate region achieved by this Scheme~II.}}
 \begin{Theorem}\label{theo5}
The capacity region $\set{C}_{\textnormal{Fb}}$ includes the set $\set{R}_\textnormal{proc.}${\footnote{{The subscript ``proc." indicates that the transmitter processes the feedback information it receives.}}} of all nonnegative rate pairs $(R_1,R_2)$ that satisfy
 \begin{subequations}\label{eq:innerf}
\begin{IEEEeqnarray}{rCl}
R_{1}
&\leq& I(U_0,U_1;Y_1,\tilde{Y}_1,V|Q)\nonumber\\&&
-I(V;U_0,U_1,U_2, \tilde{Y}_2|\tilde{Y}_1,Y_1,Q) \\
R_2 &\leq& I(U_0,U_2;Y_2,\tilde{Y}_2,V|Q)\nonumber\\
&&-I(V;U_0,U_1,U_2,\tilde{Y}_1|\tilde{Y}_2,Y_2,Q) \\
R_1 +R_2 & \leq &I(U_0,U_1;Y_1,\tilde{Y}_1,V|Q)-I(U_1;U_2|U_0,{Q}) \nonumber \\ &&-I(V;U_0,U_1,U_2, \tilde{Y}_2|\tilde{Y}_1,Y_1,Q) \nonumber\\ &&+ I(U_2;Y_2,\tilde{Y}_2,V|U_0,Q)\\
R_1+R_2&\leq& I(U_0,U_2;Y_2,\tilde{Y}_2,V|Q)-I(U_1;U_2|U_0,{Q})\nonumber \\ &&-I(V;U_0,U_1,U_2,\tilde{Y}_1|\tilde{Y}_2,Y_2,Q) \nonumber \\
&& +I(U_1;Y_1,\tilde{Y}_1,V|U_0,Q) \\
R_1 +R_2 & \leq &I(U_0,U_1;Y_1,\tilde{Y}_1,V|Q)-I(U_1;U_2|U_0,{Q})\nonumber \\ &&-I(V;U_0,U_1,U_2,\tilde{Y}_2|\tilde{Y}_1,Y_1,Q) \nonumber \\&&-I(V;U_0,U_1,U_2, \tilde{Y}_1|\tilde{Y}_2,Y_2,Q)\nonumber \\
&&+   I(U_0,U_2;Y_2,\tilde{Y}_2,V|Q)
\end{IEEEeqnarray}
\end{subequations}
for some pmf $P_{Q}P_{U_0U_1U_2|Q}P_{\tilde{Y}_1|Y_1Q}P_{\tilde{Y}_2|Y_2Q}P_{V|U_0U_1U_2\tilde{Y}_1\tilde{Y}_2Q}$ and some function $f\colon \mathcal{X} \to \mathcal{U}_0\times\mathcal{U}_1\times \mathcal{U}_2\times \mathcal{Q}$  
where the feedback-rates have to satisfy
\begin{subequations}\label{eq:feedback_rates}
\begin{IEEEeqnarray}{rCl}
I(Y_1;\tilde{Y}_1|U_0,U_1,U_2,\tilde{Y}_2,Q) & \leq & R_{\FB,1}\\
I(Y_2;\tilde{Y}_2|U_0,U_1,U_2,\tilde{Y}_1,Q)  & \leq & R_{\FB,2}\\
I(Y_1,Y_2;\tilde{Y}_1,\tilde{Y}_2|U_0,U_1,U_2,Q)  & \leq & R_{\FB,1}\!+\!R_{\FB,2}\quad 
\end{IEEEeqnarray}
\end{subequations}
and where $X= f(U_0,U_1,U_2,Q)$.
\end{Theorem}
\begin{IEEEproof}
See Section~\ref{sec:proc}.
\end{IEEEproof}

{When $R_{\FB,1}, R_{\FB,2}$ are sufficiently large, so that we can choose $\tilde{Y}_1=Y_1$ and $\tilde{Y}_2=Y_2$, we have the following corollary to Theorem~\ref{theo5}.

\begin{Corollary} \label{Re:KimChia}
Let  $\set{R}_\textnormal{proc.}^{\infty}$ be the set of all nonnegative rate pairs $(R_1,R_2)$ that satisfy
 \begin{subequations}\label{eq:pri}
\begin{IEEEeqnarray}{rCl}
R_{1} &\leq& I(U_0,U_1;Y_1,V|Q)-I(V;U_0,U_1,U_2, {Y}_2|Y_1,Q) \nonumber\\\label{eq:inner1f}\\
R_2&\leq& I(U_0,U_2;Y_2,V|Q)-I(V;U_0,U_1,U_2,{Y}_1|Y_2,Q) \nonumber \\\label{eq:inner2f} \\
R_1 +R_2 & \leq &I(U_1;Y_1,V|U_0,Q)+I(U_2;Y_2,V|U_0,Q)\nonumber  \\ &&-I(U_1;U_2|U_0,{Q})+\min_{i\in\{1,2\}} \{ I(U_0;Y_i,V|Q)\nonumber\\
&&-I(V;U_0,U_1,U_2, Y_1,{Y}_2|Y_i,Q) \}\label{eq:corsum}\\
R_1 +R_2 & \leq &I(U_0,U_1;Y_1,V|Q)+   I(U_0,U_2;Y_2,V|Q) \nonumber \\ &&-I(V;U_0,U_1,U_2, {Y}_1|Y_2,Q)\nonumber \\&&-I(V;U_0,U_1,U_2,{Y}_2|Y_1,Q)\nonumber \\&& -I(U_1;U_2|U_0,{Q}) \label{eq:inner4f}
\end{IEEEeqnarray}
\end{subequations}
for some pmf $P_{Q}P_{U_0U_1U_2|Q} P_{V|U_0U_1U_2{Y}_1{Y}_2}$ and some function $f\colon \mathcal{X} \to \mathcal{U}_0\times\mathcal{U}_1\times \mathcal{U}_2\times \mathcal{Q}$, where $X= f(U_0,U_1,U_2,Q)$.

When $R_{\FB,1} \geq \log_2|\set{Y}_1|$ and $R_{\FB,{2}} \geq \log_2|\set{Y}_2|$,\footnote{Smaller feedback rates suffice in general; for simplicity we use these conditions on the feedback rates.}
  \begin{equation}
\set{R}_\textnormal{proc.}^{\infty}\in \set{C}_{\FB}.
\end{equation}
\end{Corollary}
{As we show in Subsection~\ref{sec:related1}, our new region $\set{R}_{\textnormal{proc.}}^{\infty}$ includes an important special case of the Shayevitz-Wigger region~\cite{wigger}.}
}

\section{Usefulness of Feedback}\label{Sec:Usefulness}
{Our Scheme~IC  (which leads to Theorem~\ref{Thm:oneside}) can be used to prove the following result on the usefulness of rate-limited feedback for DMBCs. (Similar results can be shown based on our other proposed schemes.)}

\begin{Theorem}\label{thm:useful}
Fix a DMBC. Consider random variables $(U_0^{\M},U_1^{\M}, U_2^{\M}, X^{\M})$ such that
\begin{equation}\label{eq:delta}
{\Gamma^{\M}} := I(U_0^{\M};Y_2^{\M})- I(U_0^{\M};Y_1^{\M} )>0.
\end{equation}
Let the rate pair $(R_1^{\M},R_2^\M)$  satisfy Marton's constraints~\eqref{eq:martonRegion} when evaluated for $(U_0^{\M},U_1^{\M}, U_2^{\M}, X^{\M})$ where Constraint~\eqref{eq:Marton2} has to hold with strict inequality. 

Also, let $(R_1^{\En}, R_2^{\En})$ be a rate pair in the capacity region  $\set{C}^{(1)}_{\textnormal{Enh}}$ of the enhanced DMBC.

If the feedback-rate from Receiver~1 is positive, $R_{\FB,1}>0$, then for all sufficiently small $\gamma\in(0,1)$, {the rate pair $(R_1,R_2)$,
 \begin{subequations}\label{eq:rate-pair}
\begin{IEEEeqnarray}{rCl}
R_1&=& (1-\gamma)R_1^{\M} + \gamma R_{1}^{\En}\\
R_2&=& (1-\gamma)R_2^{\M} + \gamma R_{2}^{\En}
\end{IEEEeqnarray}
 \end{subequations}
lies in $\set{R}_{\textnormal{relay,hb}}^{(1)}$,
\begin{equation}\label{eq:new}
(R_1,R_2)  \in \set{R}_{\textnormal{relay,hb}}^{(1)},
\end{equation} and is thus achievable.}

An analogous statement holds when indices $1$ and $2$ are exchanged.
\end{Theorem}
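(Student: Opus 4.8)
The plan is to exhibit the rate pair \eqref{eq:rate-pair} as a point in $\set{R}_{\textnormal{relay,hb}}^{(1)}$ by a time-sharing construction between two operating modes, using the $Q$-variable already present in Theorem~\ref{Thm:oneside} as the time-sharing symbol. I would split the blocklength into a fraction $1-\gamma$ of channel uses operated in ``Marton mode'' and a fraction $\gamma$ operated in ``enhanced mode,'' and encode this split into $Q$ (say $Q=0$ with probability $1-\gamma$ and $Q=1$ with probability $\gamma$), so that the achieved region contains the convex combination of the region achieved in each mode. The target rate pair then follows once I show that mode~$0$ contributes $(R_1^{\M},R_2^{\M})$ and mode~$1$ contributes $(R_1^{\En},R_2^{\En})$, each up to vanishing slack.

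For the Marton mode, conditioned on $Q=0$ I would set $\tilde Y_1=\textnormal{const.}$ and take $(U_0,U_1,U_2,X)$ to be the Marton variables $(U_0^{\M},U_1^{\M},U_2^{\M},X^{\M})$. With $\tilde Y_1$ constant, the feedback-rate constraint \eqref{eq:fbRateThm3} and the term $\Delta_1$ are trivially satisfied/zero, and constraints \eqref{eq:thm31a}--\eqref{eq:thm31d} collapse exactly to Marton's constraints \eqref{eq:martonRegion} evaluated for $(U_0^{\M},U_1^{\M},U_2^{\M},X^{\M})$ (the mutual informations become unconditional in this mode up to the $Q$-conditioning bookkeeping). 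Here is where the hypothesis matters: I need a bit of headroom in the enhanced mode, so I want to be able to route a sliver of the compression/feedback traffic. The key observation, already made in Section~\ref{sec:simple} and in Remark after Theorem~\ref{theobw}, is that since $\Gamma^{\M}>0$, Receiver~2 can decode a cloud center carrying an \emph{extra} message of rate up to $\Gamma^\M$ that is known a priori at Receiver~1 -- here that extra message is the Wyner--Ziv compression index fed back by Receiver~1 -- without disturbing Receiver~2's decoding, and the slack in \eqref{eq:Marton2} (strict inequality, as assumed) guarantees Receiver~2 still decodes its own satellite codeword. This is what makes $R_{\FB,1}>0$ suffice, no matter how small: whatever positive feedback rate is available can be used, and for small $\gamma$ the amount of compression information that needs to flow is correspondingly small.

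For the enhanced mode, conditioned on $Q=1$ I would take the feedback link on the fraction-$\gamma$ channel uses to carry a near-lossless description of $Y_1$ (this costs roughly $\gamma H(Y_1\mid Y_2)$ feedback bits on average over the whole blocklength, which is compatible with any fixed $R_{\FB,1}>0$ provided $\gamma$ is small enough -- this is exactly the small-$\gamma$ requirement). Once Receiver~2 effectively sees $(Y_1,Y_2)$ on these channel uses, the channel to the pair of receivers in this mode is the physically degraded enhanced DMBC, whose capacity region is \eqref{eq:Enh}; choosing $U=U_0$, $U_1=\textnormal{const.}$, $U_2=X$ and matching the variables in \eqref{eq:region_relays} shows that $\set{R}_{\textnormal{relay,hb}}^{(1)}$ restricted to $Q=1$ recovers any $(R_1^{\En},R_2^{\En})\in\set{C}^{(1)}_{\textnormal{Enh}}$ up to $O(\gamma)$ and vanishing-blocklength terms. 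The main obstacle I anticipate is the bookkeeping of the feedback-rate constraint \eqref{eq:fbRateThm3} under time-sharing: the constraint is on the \emph{average} feedback rate, and I must verify that the combined demand -- essentially $\gamma$ times the enhanced-mode demand plus a small term from routing the Marton-mode compression -- stays below $R_{\FB,1}$ for all sufficiently small $\gamma$, and simultaneously that the quantity it is compared against ($\Gamma^\M$ via $I(U_0;Y_2\mid Q)$ in the relevant constraint) does not shrink to zero. Finally, continuity of the mutual-information expressions in $\gamma$ around $\gamma=0$ gives the claimed linear interpolation \eqref{eq:rate-pair}, and the ``indices $1$ and $2$ exchanged'' statement follows by the symmetric region $\set{R}_{\textnormal{relay,hb}}^{(2)}$.
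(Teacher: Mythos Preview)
Your architecture---time-sharing via $Q$ between a ``Marton mode'' ($\tilde Y_1=\textnormal{const.}$, auxiliaries $(U_0^{\M},U_1^{\M},U_2^{\M},X^{\M})$) and an ``enhanced mode'' ($\tilde Y_1=Y_1$, $U_1=\textnormal{const.}$, $U_2=X$)---is exactly what the paper does. But you have misidentified where the two hypotheses $\Gamma^{\M}>0$ and the strict inequality in \eqref{eq:Marton2} are actually needed, and this leads to a false intermediate claim.

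Specifically, your assertion that the enhanced mode ``recovers any $(R_1^{\En},R_2^{\En})\in\set{C}_{\textnormal{Enh}}^{(1)}$ up to $O(\gamma)$'' is wrong. With $\tilde Y_1=Y_1$, constraints \eqref{eq:thm31b} and \eqref{eq:thm31d} carry the penalty term $I(\tilde Y_1;U_0,U_1,U_2,Y_1|Y_2)=H(Y_1^{\En}|Y_2^{\En})$, which is $O(1)$, not $O(\gamma)$. So the enhanced mode alone does \emph{not} deliver $\set{C}_{\textnormal{Enh}}^{(1)}$; only the sum-rate constraint \eqref{eq:thm31c} cleanly gives $I(U_0^{\En};Y_1^{\En})+I(X^{\En};Y_1^{\En},Y_2^{\En}|U_0^{\En})$. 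The point is that after time-sharing these penalties appear with weight $\gamma$, and must be absorbed by slack in the Marton portion. This is precisely where the hypotheses enter: the strict inequality $R_2^{\M}<I(U_0^{\M},U_2^{\M};Y_2^{\M})$ provides the slack in the $(1-\gamma)$-part of \eqref{eq:thm31b} to absorb $\gamma H(Y_1^{\En}|Y_2^{\En})$; and $\Gamma^{\M}>0$ makes the Marton portion of \eqref{eq:thm31d} exceed that of \eqref{eq:thm31c} by exactly $(1-\gamma)\Gamma^{\M}$, so \eqref{eq:thm31d} becomes redundant once $(1-\gamma)\Gamma^{\M}\ge \gamma H(Y_1^{\En}|Y_2^{\En})$. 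Neither hypothesis is about the feedback-rate constraint, which is trivial here: since $\tilde Y_1$ is constant in Marton mode there is no ``Marton-mode compression'' term, and \eqref{eq:fbRateThm3} reduces to $\gamma H(Y_1^{\En}|Y_2^{\En},X^{\En})\le R_{\FB,1}$.
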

\begin{IEEEproof}
See Appendix~\ref{sec:prooftheorem}.
\end{IEEEproof}
{
The following remark elaborates on the condition of the theorem that a rate pair satisfies Constraint~\eqref{eq:Marton2} with strict inequality. {It will be used in the proof of Corollary~\ref{theo3}.}

\begin{Corollary}\label{cor:gen}
Assume $R_{\FB,1}>0$. If there exists a rate pair $(R_1^{\M}, R_2^{\M})$ that satisfies the conditions in Theorem~\ref{thm:useful} and that lies on the boundary of $\set{R}_{\textnormal{Marton}}$ but strictly in the interior of $\set{C}_{\textnormal{Enh}}^{(1)}$, then 
\begin{equation}\label{eq:first}
 \set{R}_{\textnormal{Marton}} \subsetneq \set{C}_{\textnormal{Fb}}.
\end{equation}

If for the considered DMBC moreover $\set{R}_{\textnormal{Marton}}=\set{C}_{\textnormal{NoFB}}$,  
\begin{equation}\label{eq:second}
 \set{C}_{\textnormal{NoFB}} \subsetneq \set{C}_{\textnormal{Fb}}.
\end{equation}
\end{Corollary}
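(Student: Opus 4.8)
The plan is to derive both inclusions from Theorem~\ref{thm:useful} together with elementary set-theoretic observations about convex regions. First I would invoke the hypothesis: we are given a rate pair $(R_1^{\M},R_2^{\M})$ that satisfies the conditions of Theorem~\ref{thm:useful} — in particular Condition~\eqref{eq:delta} holds for the associated auxiliaries and Constraint~\eqref{eq:Marton2} is met with strict inequality — and that lies on $\textnormal{bd}(\set{R}_{\textnormal{Marton}})$ but in $\textnormal{int}(\set{C}_{\textnormal{Enh}}^{(1)})$. Since $(R_1^{\M},R_2^{\M})\in\textnormal{int}(\set{C}_{\textnormal{Enh}}^{(1)})$, I can pick a rate pair $(R_1^{\En},R_2^{\En})\in\set{C}_{\textnormal{Enh}}^{(1)}$ with $R_1^{\En}>R_1^{\M}$ and $R_2^{\En}>R_2^{\M}$ (strict improvement in both coordinates is possible because the point is interior). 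Applying Theorem~\ref{thm:useful} with $R_{\FB,1}>0$, for all sufficiently small $\gamma\in(0,1)$ the convex combination $(R_1,R_2)=((1-\gamma)R_1^{\M}+\gamma R_1^{\En},\,(1-\gamma)R_2^{\M}+\gamma R_2^{\En})$ lies in $\set{R}_{\textnormal{relay,hb}}^{(1)}\subseteq\set{C}_{\textnormal{Fb}}$. Because $R_i^{\En}>R_i^{\M}$ for $i\in\{1,2\}$, this new point strictly dominates $(R_1^{\M},R_2^{\M})$ componentwise.

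Next I would argue that this dominating point lies outside $\set{R}_{\textnormal{Marton}}$. Since $(R_1^{\M},R_2^{\M})$ is on the boundary of $\set{R}_{\textnormal{Marton}}$, at least one of Marton's constraints~\eqref{eq:martonRegion} is active at it. If it is one of the single-rate constraints \eqref{eq:Marton1} or \eqref{eq:Marton2}, or one of the sum-rate constraints \eqref{eq:Marton3}–\eqref{eq:Marton4}, then any point that is strictly larger in both coordinates violates that same constraint (each constraint is monotone nondecreasing in both $R_1$ and $R_2$ on the relevant side), hence cannot belong to $\set{R}_{\textnormal{Marton}}$. Therefore $(R_1,R_2)\in\set{C}_{\textnormal{Fb}}\setminus\set{R}_{\textnormal{Marton}}$, which combined with the general inclusion $\set{R}_{\textnormal{Marton}}\subseteq\set{C}_{\textnormal{Fb}}$ (feedback never hurts) yields the strict inclusion~\eqref{eq:first}. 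The second claim~\eqref{eq:second} is then immediate: if additionally $\set{R}_{\textnormal{Marton}}=\set{C}_{\textnormal{NoFB}}$ for the channel at hand, substituting into~\eqref{eq:first} gives $\set{C}_{\textnormal{NoFB}}\subsetneq\set{C}_{\textnormal{Fb}}$.

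The one subtlety — and the step I expect to need the most care — is the claim that from an interior point of $\set{C}_{\textnormal{Enh}}^{(1)}$ one can move strictly outward in \emph{both} coordinates while staying in $\set{C}_{\textnormal{Enh}}^{(1)}$. This is where the condition that~\eqref{eq:Marton2} holds with strict inequality at $(R_1^{\M},R_2^{\M})$ enters, as Remark~\ref{rem:conditions} clarifies: on the $R_2$-maximal dominant corner of a Marton pentagon one cannot increase $R_2$, so the hypothesis rules out exactly that degenerate case and guarantees there is room to raise $R_2$; raising $R_1$ is unproblematic since $(R_1^{\M},R_2^{\M})$ is interior to $\set{C}_{\textnormal{Enh}}^{(1)}$. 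I would also remark that the monotonicity of Marton's constraints used in the second paragraph is standard (each right-hand side is a fixed number once the auxiliaries are fixed, and the left-hand sides are $R_1$, $R_2$, or $R_1+R_2$), so no further argument is needed there. The analogous statement with indices $1$ and $2$ interchanged follows by the symmetric version of Theorem~\ref{thm:useful} and is omitted.
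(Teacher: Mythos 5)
Your proposal follows the same route as the paper: pick a componentwise strictly dominating point $(R_1^{\En},R_2^{\En})$ in $\set{C}_{\textnormal{Enh}}^{(1)}$ (possible because $(R_1^{\M},R_2^{\M})$ is interior to that set), invoke Theorem~\ref{thm:useful} to place the convex combination in $\set{R}_{\textnormal{relay,hb}}^{(1)}\subseteq\set{C}_{\textnormal{Fb}}$ for small $\gamma$, and observe that this point cannot belong to $\set{R}_{\textnormal{Marton}}$. The conclusion is right, but one sub-argument is not: you justify the exclusion from $\set{R}_{\textnormal{Marton}}$ by saying some constraint in~\eqref{eq:martonRegion} is active at $(R_1^{\M},R_2^{\M})$ and is violated by the dominating point. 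Those constraints are stated for a \emph{fixed} pmf $P_{U_0U_1U_2}$, while $\set{R}_{\textnormal{Marton}}$ is the union over all such pmfs; violating the constraints for the particular auxiliaries $(U_0^{\M},U_1^{\M},U_2^{\M},X^{\M})$ does not preclude membership in the region for some other choice. The clean argument is the one the paper uses implicitly: $\set{R}_{\textnormal{Marton}}$ is down-closed, so if a point strictly dominating $(R_1^{\M},R_2^{\M})$ in both coordinates belonged to it, then $(R_1^{\M},R_2^{\M})$ would be an interior point of $\set{R}_{\textnormal{Marton}}$, contradicting the boundary hypothesis.

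A second, smaller point: the strict inequality in~\eqref{eq:Marton2} is not what guarantees room to increase $R_2$ inside $\set{C}_{\textnormal{Enh}}^{(1)}$ --- interiority of $(R_1^{\M},R_2^{\M})$ in $\set{C}_{\textnormal{Enh}}^{(1)}$ already gives room in both coordinates. That strict inequality is a hypothesis needed for Theorem~\ref{thm:useful} itself (it is what makes constraint~\eqref{eq:two} in its proof survive for small $\gamma>0$), and here it is simply assumed as part of ``satisfies the conditions in Theorem~\ref{thm:useful}.'' Neither issue changes the structure of the proof, which is otherwise the paper's.
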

\begin{proof}
Inclusion~\eqref{eq:second} follows from \eqref{eq:first}. We show \eqref{eq:first}. Since $(R_1^{\M}, R_2^{\M})$ is  in the interior of $\set{C}_{\textnormal{Enh}}^{(1)}$, there exists a rate pair $(R_1^{\En}, R_2^{\En})\in\set{C}_{\textnormal{Enh}}^{(1)}$ with $R_{1}^{\En} > R_1^{\M}$ and $R_2^{\En}> R_2^{\M}$. Now, since $(R_1^{\M}, R_2^{\M})$ lies on the boundary of $\set{R}_{\textnormal{Marton}}$, the rate pair in~\eqref{eq:rate-pair} must lie  outside $\set{R}_{\textnormal{Marton}}$ for any $\gamma\in(0,1)$.  By Theorem~\ref{thm:useful}, Equation~\eqref{eq:new}, this rate pair {is  achievable with rate-limited feedback }
for all $\gamma\in(0,1)$ that are sufficiently close to 0.
\end{proof}

For many DMBCs such as the BSBC or  the BEBC with unequal cross-over probabilities or unequal erasure probabilities to the two receivers, or the BSC/BEC-BC where the two channels have different capacities, it is easily verified that the conditions of Corollary~\ref{cor:gen} hold whenever the DMBCs are not physically degraded. Thus, our corollary immediately shows that for these DMBCs rate-limited feedback strictly increases capacity. (See also Examples~\ref{ex:bsbc} and \ref{ex:2} in the next Section.)

For the BSBC and the BEBC,  Theorem~\ref{thm:useful} can even be used to show that all boundary points $(R_1>0, R_2>0)$ of  $\set{C}_{\textnormal{NoFB}}$ can be improved with rate-limited feedback, see the following Corollary~\ref{theo3} and the paragraph thereafter.

\begin{Remark}\label{rem:conditions} 
{For given  random variables  $U_0^{\M}, U_1^{\M}, U_2^{\M}, X^{\M}$ Marton's region, i.e., the rate region defined by Constraints~\eqref{eq:martonRegion}, is either a pentagon (both single-rate constraints as well as at least one of the sum-rates are active), a quadrilateral (only the two single-rate constraints are active), or a triangle (only one single-rate constraint and at least one of the sum-rate constraints are active).  

In the case of superposition coding with $U_1^{\M}=\textnormal{const.}$ and $U_2^{\M}=X^{\M}$ and when Condition~\eqref{eq:delta} holds, then the region is a quadrilateral and the only active constraints are~\eqref{eq:Marton1} and \eqref{eq:Marton3}. Thus, in this case, constraint~\eqref{eq:Marton2} holds with strict inequality for all rate pairs in this region.

Whenever the region defined by Marton's constraints~\eqref{eq:martonRegion} is a pentagon, then the only rate pair in this pentagon that satisfies Constraint~\eqref{eq:Marton2} with equality is the dominant corner point of maximum $R_2$-rate.}
\end{Remark} 

  \begin{Corollary}\label{theo3} Consider a DMBC where $Y_2$ is strictly essentially less-noisy than $Y_1$. Assume $R_{\textnormal{Fb},1}>0$. We have: 
\begin{enumerate}
\item If a rate pair $(R_1,R_2)$ lies on the boundary of $\set{C}_\textnormal{NoFB}$ but in the interior of $\set{C}_\textnormal{Enh}^{(1)}$, then  $(R_1,R_2)$ lies in the interior of $\set{C}_{\textnormal{Fb}}$, i.e., {with rate-limited feedback one can improve over this rate pair}.
\item If  $\set{C}_\textnormal{NoFB}$ does not coincide with $ \set{C}_{\textnormal{Enh}}^{(1)}$, then $\set{C}_\textnormal{NoFB}$ is also a strict subset of $\set{C}_{\textnormal{Fb}}$, i.e., feedback strictly improves capacity.
\end{enumerate}
Analogous statements hold if indices $1$ and $2$ are exchanged. 
\end{Corollary}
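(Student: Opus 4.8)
The plan is to derive Corollary~\ref{theo3} as a direct consequence of Corollary~\ref{cor:gen}, by verifying that the hypotheses of the latter are met whenever $Y_2$ is strictly essentially less-noisy than $Y_1$. First I would recall that for such a DMBC Marton's region coincides with the capacity region (since strictly essentially less-noisy is a subclass of essentially less-noisy, for which superposition coding is optimal by \cite{CNair'10}), so that $\set{R}_{\textnormal{Marton}}=\set{C}_{\textnormal{NoFB}}$; moreover, by the discussion following the definition of essentially less-noisy DMBCs, it suffices to restrict attention to input pmfs $P_{UX}$ with $I(U;Y_1)\leq I(U;Y_2)$, and by the \emph{strict} version of the definition this inequality is strict whenever $I(U;Y_1)>0$. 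This is precisely the ingredient needed to guarantee Condition~\eqref{eq:delta} of Theorem~\ref{thm:useful}: for any rate pair $(R_1,R_2)$ on $\textnormal{bd}(\set{C}_{\textnormal{NoFB}})$ with $R_1>0$, the supporting superposition-coding input $(U_0^{\M},X^{\M})$ (with $U_1^{\M}=\textnormal{const.}$, $U_2^{\M}=X^{\M}$) must have $I(U_0^{\M};Y_1^{\M})\geq R_1>0$, hence $\Gamma^{\M}=I(U_0^{\M};Y_2^{\M})-I(U_0^{\M};Y_1^{\M})>0$.

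Next, for part~1, I would take a rate pair $(R_1,R_2)$ on $\textnormal{bd}(\set{C}_{\textnormal{NoFB}})$ but in $\textnormal{int}(\set{C}_{\textnormal{Enh}}^{(1)})$. If $R_1>0$, the argument above gives $\Gamma^{\M}>0$ for the optimal superposition input achieving this point, and by Remark~\ref{rem:conditions} (superposition-coding case with $\Gamma^{\M}>0$ yields a quadrilateral whose only active constraints are \eqref{eq:Marton1} and \eqref{eq:Marton3}), Constraint~\eqref{eq:Marton2} holds with strict inequality at $(R_1,R_2)$. Thus all hypotheses of Theorem~\ref{thm:useful} hold, and taking $(R_1^{\En},R_2^{\En})\in\set{C}_{\textnormal{Enh}}^{(1)}$ with both coordinates strictly larger than those of $(R_1,R_2)$ (possible since the point is interior), the convex combination \eqref{eq:rate-pair} lies in $\set{R}_{\textnormal{relay,hb}}^{(1)}\subseteq\set{C}_{\textnormal{Fb}}$ and strictly dominates $(R_1,R_2)$ for small $\gamma>0$; hence $(R_1,R_2)\in\textnormal{int}(\set{C}_{\textnormal{Fb}})$. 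The edge case $R_1=0$ needs a separate remark: the point $(0,R_2)$ with $R_2$ maximal equals the corresponding corner of $\set{C}_{\textnormal{Enh}}^{(1)}$ (both are $\max I(X;Y_2)$ over degraded-type inputs), so it cannot be interior to $\set{C}_{\textnormal{Enh}}^{(1)}$ and the hypothesis excludes it; and the analogous statement with indices exchanged covers points with $R_2=0$.

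For part~2, suppose $\set{C}_{\textnormal{NoFB}}\neq\set{C}_{\textnormal{Enh}}^{(1)}$. Since $\set{C}_{\textnormal{Enh}}^{(1)}$ is an outer bound to $\set{C}_{\textnormal{NoFB}}$ (it is an outer bound even to $\set{C}_{\textnormal{Fb}}$), strict non-coincidence means there is a boundary point of $\set{C}_{\textnormal{NoFB}}$ lying in the interior of $\set{C}_{\textnormal{Enh}}^{(1)}$; by a continuity/convexity argument one can choose such a point with $R_1>0$ and $R_2>0$ (the two ``axis'' corners coincide for the two regions, as noted above, so the discrepancy occurs in the strictly positive quadrant). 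Applying part~1 to this point shows it lies in $\textnormal{int}(\set{C}_{\textnormal{Fb}})$ while being on $\textnormal{bd}(\set{C}_{\textnormal{NoFB}})$, giving $\set{C}_{\textnormal{NoFB}}\subsetneq\set{C}_{\textnormal{Fb}}$. The statement with indices $1$ and $2$ swapped follows by the symmetric version of Theorem~\ref{thm:useful} and Corollary~\ref{cor:gen}.

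The main obstacle I anticipate is the careful handling of the boundary points with a vanishing rate coordinate and, more subtly, making rigorous the claim in part~2 that non-coincidence of the two regions forces a discrepancy point in the open positive quadrant rather than only on the axes. Resolving this cleanly requires noting that the $R_1=0$ face of $\set{C}_{\textnormal{NoFB}}$ is $\{(0,R_2):R_2\leq \max_{P_X} I(X;Y_2)\}$ and likewise for $\set{C}_{\textnormal{Enh}}^{(1)}$ (since at $U=\textnormal{const.}$ the constraint $R_2\leq I(X;Y_1,Y_2|U)$ is never tighter than $I(X;Y_2)$ in the relevant sense, and conversely the no-feedback face is already the full segment), so the two regions share their axis faces and any strict inclusion is witnessed in the interior of the first quadrant; this is where I would spend the bulk of the effort, the rest being a direct invocation of Theorem~\ref{thm:useful} and Corollary~\ref{cor:gen}.
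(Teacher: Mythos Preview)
Your approach is essentially the same as the paper's: both derive the result from Theorem~\ref{thm:useful} by the argument of Corollary~\ref{cor:gen}, using (i) that for strictly essentially less-noisy DMBCs $\set{R}_{\textnormal{Marton}}=\set{C}_{\textnormal{NoFB}}$ is achieved by superposition coding, (ii) that the strict inequality in the definition forces $\Gamma^{\M}>0$ whenever $I(U_0^{\M};Y_1^{\M})>0$, and (iii) Remark~\ref{rem:conditions} to get Constraint~\eqref{eq:Marton2} strict. The paper's proof is terser---it simply writes ``2.) follows from 1.)'' and does not discuss the axis edge cases you worry about.

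One small slip in your edge-case handling: the $R_1=0$ corner of $\set{C}_{\textnormal{Enh}}^{(1)}$ is $\max_{P_X}I(X;Y_1,Y_2)$, not $\max_{P_X}I(X;Y_2)$ (take $U=\textnormal{const.}$ in~\eqref{eq:Enh}), so the two regions do \emph{not} in general share their $R_1=0$ face. Your conclusion that the $R_1=0$ case is excluded by the hypothesis is nonetheless correct, but for a simpler reason: since the paper uses the standard interior in $\mathbb{R}^2$, no point on a coordinate axis can lie in $\textnormal{int}(\set{C}_{\textnormal{Enh}}^{(1)})$, as any open ball around such a point contains points with a negative coordinate. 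This also dissolves the difficulty you anticipate for part~2: any boundary point of $\set{C}_{\textnormal{NoFB}}$ that lies in $\textnormal{int}(\set{C}_{\textnormal{Enh}}^{(1)})$ automatically has $R_1>0$ and $R_2>0$, so part~1 applies directly and no separate argument about axis faces is needed.
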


\begin{IEEEproof}[Proof of Corollary~\ref{theo3}]
2.) follows from 1.) We prove 1.)
For strictly essentially less-noisy DMBCs,  $\set{C}_{\textnormal{NoFB}}$ is achieved by superposition coding. Thus,  $\set{R}_{\textnormal{Marton}}= \set{C}_{\textnormal{NoFB}}$ and in the evaluation of Marton's region one can restrict attention to auxiliaries of the form $U_1=\textnormal{const.}$ and $U_2=X$. By the definition of strictly essentially-less noisy, when evaluating Marton's region we can further restrict attention to auxiliary random variables that satisfy~\eqref{eq:delta}. Thus, by  Remark~\ref{rem:conditions}, any boundary point of $\set{R}_{\textnormal{Marton}}$ satisfies the conditions of Theorem~\ref{thm:useful}.  Repeating the proof steps for Corollary~\ref{cor:gen}, we can prove that these boundary points cannot be boundary points of $\set{C}_{\FB}$ whenever they lie in the interior of $\set{C}_{\textnormal{Enh}}^{(1)}$. 
\end{IEEEproof}

As mentioned, all BSBCs and BEBCs with  unequal cross-over probabilities or unequal erasure probabilities to the two receivers are strictly essentially less-noisy.  Also,  for these BCs $\set{C}_{\textnormal{NoFB}}$ has no common boundary points $(R_1>0, R_2>0)$ with the sets  $\set{C}_{\textnormal{Enh}}^{(1)}$ or $\set{C}_{\textnormal{Enh}}^{(2)}$  unless the BC is physically degraded. Thus, for these BCs the corollary implies that, unless the BC is physically degraded, rate-limited feedback improves all boundary points $(R_1>0,R_2>0)$ of $\set{C}_{\textnormal{NoFB}}$ whenever $R_{\FB,1},R_{\FB,2}>0$.

Notice that when a DMBC is physically degraded in the sense that output $Y_1$ is a degraded version of $Y_2$, then $\set{C}_{\textnormal{NoFB}}=\set{C}_{\textnormal{Enh}}^{(1)}$. In this case, (even infinite-rate) feedback does not increase the capacity of physically degraded DMBCs \cite{gamal'78}. 

{
Theorem~\ref{thm:useful}  exhibits sufficient conditions that our coding schemes improve over the nofeedback capacity. These conditions however are not necessary. In fact, by  our discussions in Section~\ref{Sec:Relations}  and the examples in \cite{greek,wang,kimchia}, our schemes can improve over the nofeedback capacity even for DMBCs where for any choice of the auxiliaries $I(U_0;Y_2)= I(U_0;Y_1)$ holds.

On a related note, Bracher and Wigger~\cite{annina} showed that our type-I schemes can improve over the nofeedback capacity even for semideterministic\footnote{Semideterministic means that the outputs at one of the two receivers are deterministic functions of the inputs.} DMBCs. This might be surprising since without feedback the capacity region of semideterministic DMBCs is achieved by degenerate Marton coding \emph{without cloud center}~\cite{semideterministic}, i.e., $U_0=\textnormal{const.}$.

\section{Examples}\label{sec:Examples}
\begin{Example}\label{ex:bsbc}
Consider the BSBC with input $X$ and outputs $Y_1$ and $Y_2$ described by:
\begin{subequations}
   \begin{IEEEeqnarray}{rCl}
   Y_1=X\oplus Z_1,\quad Y_2=X\oplus Z_2,
 \end{IEEEeqnarray}
 \end{subequations}
 for $Z_1\sim \text{Bern}(p_1)$ and $Z_2\sim \text{Bern}(p_2)$ independent noises. 
 Let 
 $Q=\textnormal{const.}$, $U\sim  \text{Bern}(1/2)$, $W_1\sim \text{Bern}(\beta_1)$ and $W_2\sim \text{Bern}(\beta_2)$, for $\beta_1,\beta_2\in[0,1/2]$, where $U,W_1,W_2$ are independent. Also set  $X=U\oplus W_1$, and $\tilde{Y}_1=  Y_1\oplus W_2$.  Then 
 \begin{IEEEeqnarray*}{rCl}
 I(U;Y_1)=1-H_b(\beta_1 *p_1),\quad
 I(X;Y_2)=1-H_b(p_2),
\end{IEEEeqnarray*}
 and  
 \begin{IEEEeqnarray*}{rCl}
  I(X;\tilde{Y}_1,Y_2|U)&=& H(\alpha_1,\!\alpha_2,\!\alpha_3,\!\alpha_4) \!-\!H_b(p_2)\!-\!H_b(\beta_2*p_1)\nonumber\\
 I(\tilde{Y}_1;Y_1|Y_2,U)&=&H(\alpha_1,\!\alpha_2,\!\alpha_3,\!\alpha_4)\!-\!H_b(\beta_1*{p_2})
\!-\!H_b(\beta_2)
\end{IEEEeqnarray*}
 where   \begin{IEEEeqnarray*}{rCl}
 \alpha_1&=&(p_1*\beta_2)p_2\beta_1 +(1-p_1*\beta_2)\bar{p_2}\bar{\beta_1}\nonumber\\
 \alpha_2&=&(p_1*\beta_2)\bar{p_2}\beta_1+(1-p_1*\beta_2)p_2\bar{\beta_1}\nonumber\\
 \alpha_3&=&(p_1*\beta_2)\bar{p_2}\bar{\beta_1}+(1-p_1*\beta_2)p_2\beta_1\nonumber\\
 \alpha_4&=&(p_1*\beta_2)p_2\bar{\beta_1}+(1-p_1*\beta_2)\bar{p_2}\beta_1. 
  \end{IEEEeqnarray*}
For this choice, the region defined by the constraints in~Corollary~\ref{Cor:thm3} evaluates to:
 \begin{subequations}\label{eq:regBS}
   \begin{IEEEeqnarray}{rCl}
   R_1&\leq&  1-H_b(\beta_1 *p_1)\\
  R_1+ R_2&\leq& 1-H_b(\beta_1 *p_1)+H(\alpha_1,\alpha_2,\alpha_3,\alpha_4) \nonumber\\
  &&\quad-H_b(p_2)-H_b(\beta_2*p_1) \\
   R_1+ R_2&\leq& 1-H_b(p_2)-H(\alpha_1,\alpha_2,\alpha_3,\alpha_4)\nonumber\\
  &&\quad+H_b(\beta_1*{p_2})+H_b(\beta_2)
 \end{IEEEeqnarray}
 \end{subequations}
 for some $\beta_1, \beta_2\in[0,1/2]$ satisfying
 \begin{IEEEeqnarray}{rCl}
 H(\alpha_1,\alpha_2,\alpha_3,\alpha_4)\!-\!H_b(\beta_1*{p_2})\!-\!H_b(\beta_2)\leq R_{\textnormal{Fb},1}\quad 
 \end{IEEEeqnarray}
  and where $H(\alpha_1,\alpha_2,\alpha_3,\alpha_4)$  denotes the entropy of a quaternary random variable with probability masses $(\alpha_1,\alpha_2,\alpha_3,\alpha_4)$. 
  
 The region is plotted in Figure~\ref{fig:rateBSBC} against the no-feedback capacity region $\set{C}_\textnormal{NoFB}$. 
  \end{Example}
  \begin{figure}[!t]
\centering
\includegraphics[width=0.5\textwidth]{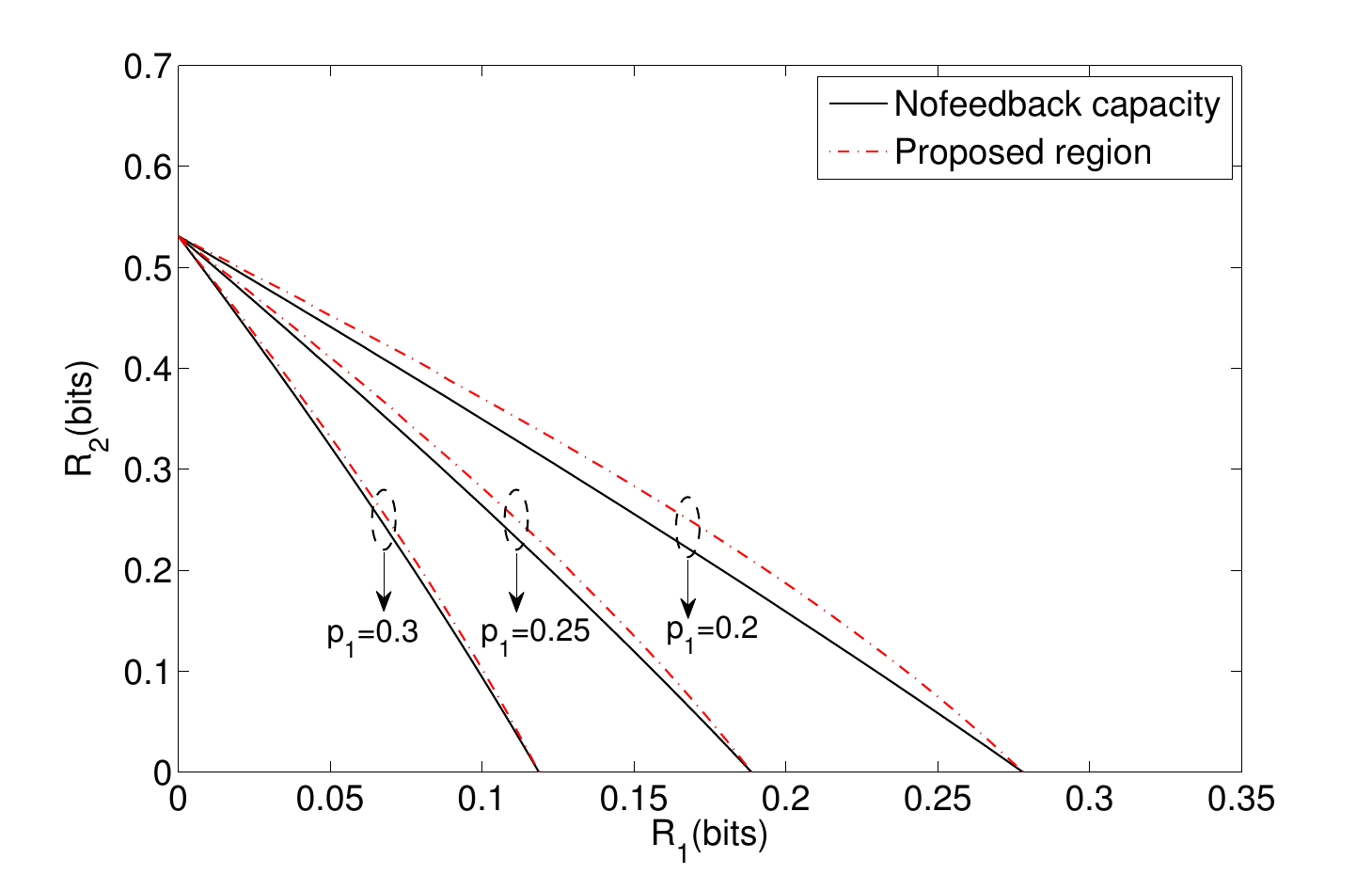}
\caption{$\set{C}_{\textnormal{NoFB}}$  and the achievable region in~\eqref{eq:regBS} are plotted for BSBCs with parameters $p_2=0.1$ and $p_1\in\{0.2,0.25,0.3\}$ and for  feedback rate $R_{\textnormal{Fb},1}=0.8$.} \label{fig:rateBSBC}
\vspace{-4mm}
\end{figure}

\begin{Example}\label{ex:2}
Consider a DMBC where the channel from $X$ to $Y_1$ is a BSC with cross-over probability $p\in(0,1/2)$, and the channel from $X$ to $Y_2$ is an independent BEC with erasure probability $e\in(0,1)$.
We show that our feedback regions $\set{R}_{\textnormal{relay,sp}}^{(1)}$ and $\set{R}_{\textnormal{relay,sp}}^{(2)}$ improve over a large part of the boundary points of $\set{C}_{\textnormal{NoFB}}$ for all values of $e, p$ unless $H_b(p)= e$, no matter how small $R_{\FB,1},R_{\FB,2}>0$.

We distinguish different parameter ranges of our channel. 
\begin{itemize}
\item  $0<e < H_b(p)$: In this case, the nofeedback capacity region $\set{C}_{\textnormal{NoFB}}$ \cite{CNair'10} is formed by the set of rate pairs $(R_1,R_2)$ that  for some $s\in[0,1/2]$ satisfy
\begin{subequations}
\begin{IEEEeqnarray}{rCl}
R_1&\leq &1- H_b(s * p),\label{eq:s1ep}\\
 R_2&\leq & (1-e)H_b(s), \label{eq:s2ep}\\
 R_1+R_2& \leq  & 1-e.\label{eq:sumep}
\end{IEEEeqnarray}
\end{subequations}
We specialize the region $\set{R}_\textnormal{relay,sp}^{(1)}$ to the following choices. Let $Q=\textnormal{const.}$, $U\sim \textnormal{Bern}(1/2)$,   $X=U\oplus  V$, where $V\sim \textnormal{Bern}(s)$ independent of $U$,  and  $\tilde{Y}_1= Y_1$ with probability $\gamma \in(0,1)$ and  $\tilde{Y}_1=\Delta$ with probability $1-\gamma$,
where 
\begin{equation}\label{eq:condgamma}
\gamma \leq   \frac{R_{\FB,1}}{(1-e)H_b(p)+eH_b(s*p)}.
\end{equation} 
Condition \eqref{eq:condgamma} assures that the described choice satisfies~\eqref{eq:fbRatecoro1}. Then,
\begin{IEEEeqnarray*}{rCl}
 I(U;Y_1)=1- H_b(s*p),\quad I(X;Y_2)=1-e,
\end{IEEEeqnarray*}
 and  
 \begin{IEEEeqnarray*}{rCl}
  I(X;\tilde{Y}_1,Y_2|U)&=& \gamma e\big(H_b(s*p)\!-\!H_b(p)\big)\nonumber\\
  &&+(1-e)H_b(s)\nonumber\\
 I(\tilde{Y}_1;Y_1|Y_2,U)&=&\gamma(H_b(p)(1-e)+eH_b(s*p)).
\end{IEEEeqnarray*}

When $R_{\FB,1}>0$, by Corollary~\ref{Cor:thm3}, all rate pairs $(R_1,R_2)$ satisfying
\begin{subequations}\label{eq:fbregion}
\begin{IEEEeqnarray}{rCl}
R_1 &\leq& 1- H_b(s*p)\\
R_1\!+\!R_2 &\leq&  1- H_b(s*p)+(1-e) H_b(s) \nonumber \\ 
&&+\gamma e ( H_b(s*p)-H_b(p))\\
R_1\!+\!R_2 &\leq&1\!-\!e\!-\!\gamma(H_b(p)(1\!-\!e)\!+\!eH_b(s\!*\!p)) \IEEEeqnarraynumspace
\end{IEEEeqnarray}
\end{subequations} 
are achievable for any $\gamma\in(0,1)$ satisfying~\eqref{eq:condgamma}.

As shown in \cite{CNair'10}, the points $(R_1,R_2)$ of the form
\begin{equation}\label{eq:boundary}
(1-H_b(s*p), (1-e)H_b(s)),\qquad s\in(0,s_0),
\end{equation} 
are all on the dominant boundary of $\set{C}_{\textnormal{NoFb}}$, 
where $s_0\in(0,1/2)$ is the unique solution to 
\begin{equation}
1-H_b(s_0*p)+(1-e)H_b(s_0)=1-e.
\end{equation} 
For these boundary points, only the single-rate constraints~\eqref{eq:s1ep} and \eqref{eq:s2ep} are active, but not~\eqref{eq:sumep}. Thus, comparing \eqref{eq:boundary} to our feedback region~\eqref{eq:fbregion}, we can conclude that by choosing $\gamma$ sufficiently small, all boundary points~\eqref{eq:boundary} lie strictly in the interior of our feedback region $\set{R}_{\textnormal{relay,sp}}^{(1)}$ when $R_{\FB,1}>0$.

\item $0<H_b(p)<e <1$: The nofeedback capacity region $\set{C}_{\textnormal{NoFb}}$ equals the time-sharing region given by the union of all rate pairs $(R_1,R_2)$ that for some $\alpha\in[0,1]$ satisfy
\begin{subequations}\label{eq:nofbregion2}
\begin{IEEEeqnarray}{rCl}
R_1 & \leq & \alpha (1-H_b(p))\\
R_2 & \leq & (1-\alpha) (1-e).
\end{IEEEeqnarray}
\end{subequations}

We specialize the region $\set{R}_{\textnormal{relay,sp}}^{(2)}$ to the following choices: $Q\sim\textnormal{Bern}(\alpha)$; if $Q=0$ then $U \sim\textnormal{Bern}(1/2)$,   $X=U$, and $\tilde{Y}_2=\textnormal{const.}$; if $Q=1$ then $U=\textnormal{const.}$, $X\sim\textnormal{Bern}(1/2)$, and $\tilde{Y}_2= Y_2$ with probability $\gamma\in(0,1)$ and $\tilde{Y}_2=\Delta$ with probability $1-\gamma$, where  in  order to satisfy the average feedback rate constraint,
\begin{equation}\label{eq:condgamma2}
\gamma \leq   \frac{R_{\FB,2}}{\alpha((1-e)H_b(p)+H_b(e))}.
\end{equation} 
When $R_{\FB,2}>0$, by Theorem~\ref{Thm:oneside}, all rate pairs $(R_1,R_2)$ satisfying\begin{subequations}\label{eq:yesfbregion2}
\begin{IEEEeqnarray}{rCl}
R_1 &\leq& \alpha (1- H_b(p)) +\alpha (1-e)\gamma H_b(p)\IEEEeqnarraynumspace\\
R_1+R_2 &\leq& (1-\alpha) (1-e) +\alpha (1- H_b(p)) \nonumber\\&&~+\alpha (1-e)\gamma H_b(p)\\
R_1+R_2 &\leq&   (1- H_b(p)) - (1-\alpha) \gamma H_b(e).
\end{IEEEeqnarray}
\end{subequations} 
are achievable for any $\gamma\in(0,1)$ satisfying~\eqref{eq:condgamma2}.

Since here $1-H_b(p)>1-e$,  for small $\gamma>0$ the feedback region in~\eqref{eq:yesfbregion2} improves over $\set{C}_{\textnormal{NoFB}}$ given in~\eqref{eq:nofbregion2}. In fact,~\eqref{eq:yesfbregion2} improves over all  boundary points $(R_1>0, R_2>0)$ of   $\set{C}_{\textnormal{NoFB}}$. \end{itemize}
\end{Example}
  \begin{figure}[!t]
\centering
\includegraphics[width=0.5\textwidth]{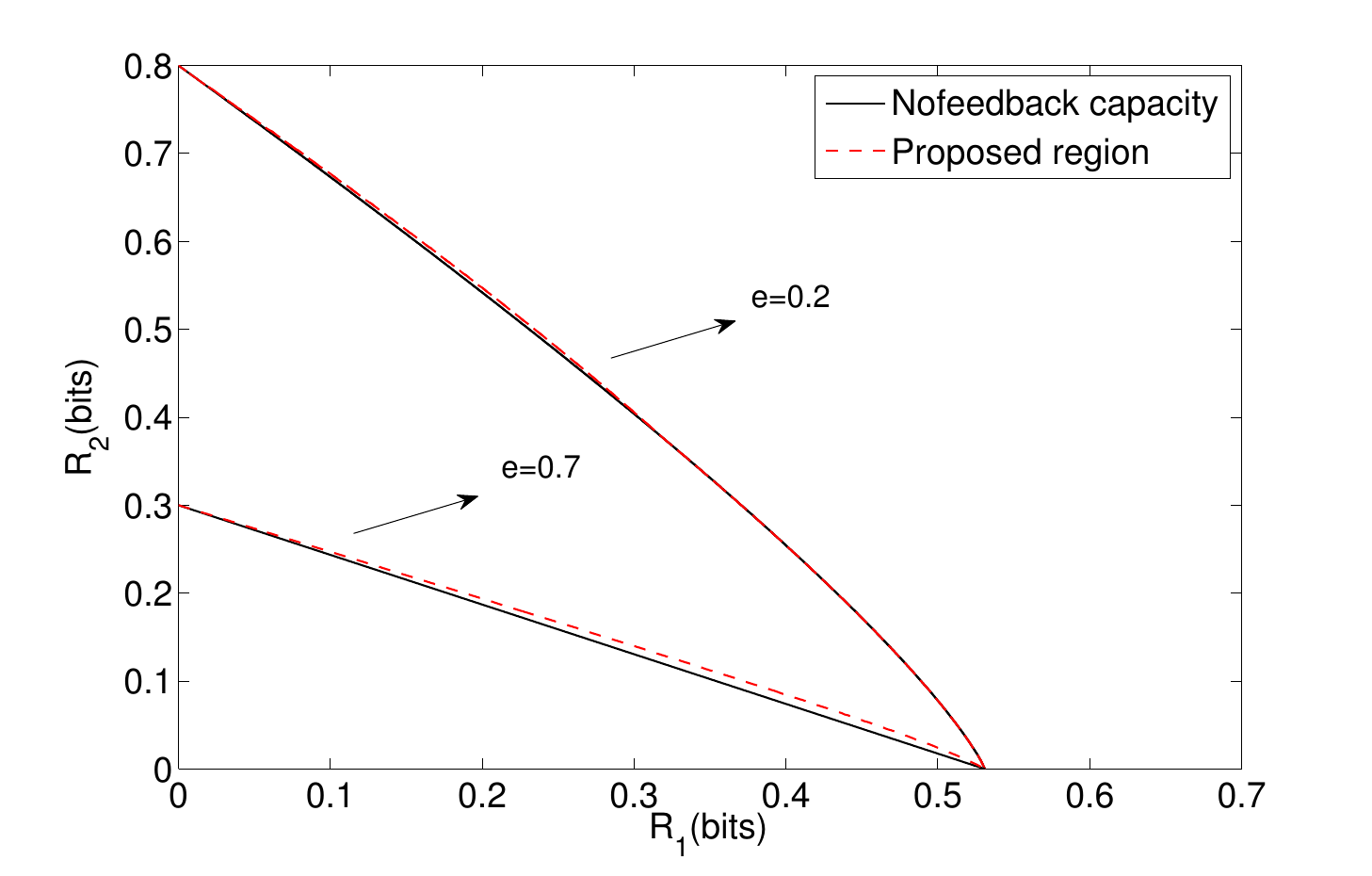}
\caption{$\set{C}_{\textnormal{NoFB}}$  and the achievable regions in~\eqref{eq:fbregion} and \eqref{eq:yesfbregion2} are plotted for a BSC/BEC-BC when the BSC has parameter $p=0.1$ and the BEC has parameter $e\in\{0.2,0.7\}$. Notice that $0.2 < H_b(p)<0.7$. The feedback rates $R_{\textnormal{Fb},1}=R_{\textnormal{Fb},2}=0.8$.} \label{fig:BSBEC}
\vspace{-4mm}
\end{figure}

\begin{Remark}The BSC/BEC-BC in Example~\ref{ex:2}, is particularly interesting, because depending on the values of the parameters $e$ and $p$, the BC is either degraded, less noisy, more capable, or  essentially less-noisy \cite{CNair'10}. We conclude that when Receiver~1 is ``stronger" than  Receiver~2,  $\mathcal{R}^{(2)}_{\textnormal {Relay,sp}}$ improves over the no-feedback capacity, for all these classes of BCs even with only one feedback link that is of arbitrary small, but positive rate. Similar arguments hold for  $\mathcal{R}_\textnormal{relay,sp}^{(1)}$ when Receiver 2 is ``stronger" than  Receiver 1. 

We plotted our regions~\eqref{eq:fbregion} and \eqref{eq:yesfbregion2} versus the nofeedback capacity region in Figure~\ref{fig:BSBEC} for $p=0.1$ and $e=0.2$ or $e=0.7$. In the first case the DMBC is more capable and in the second case it is essentially less-noisy.
\end{Remark}

{In the next example we consider the Gaussian BC with independent noises. We evaluate the region defined by the constraints of Corollary~\ref{Cor:thm3} for a set of jointly Gaussian distributions on the input and the auxiliary random variables. A rigorous proof that our achievability result in Corollary~\ref{Cor:thm3} holds also for the Gaussian BC and Gaussian random variables is omitted  for brevity.} 
\begin{Example}
Consider the  \emph{Gaussian broadcast channel} 
\begin{subequations}
   \begin{IEEEeqnarray}{rCl}
   Y_1&=&X+ Z_1\\
      Y_2&=&X+ Z_2
 \end{IEEEeqnarray}
 \end{subequations}
 where $Z_1\sim \set{N}(0,N_1)$ and $Z_2\sim \set{N}(0,N_2)$ are independent noises. Assume an average transmission power $P$, and $0<N_2<N_1<P$.

 Let $Q=\textnormal{const.}$,  $U\sim  \set{N}(0,\bar{\alpha} P)$, $W_1\sim \set{N}(0,{\alpha} P)$ and $W_2\sim \set{N}(0,\beta)$, for $\alpha\in[0,1], \beta> 0$,  where $U,W_1,W_2$ are independent.  {We use $C(x):=\frac{1}{2}\log(1+x)$, for $x\geq 0$}.
Setting
$X=U+ W_1$, $\tilde{Y}_1=Y_1+ W_2$, we have 
 \begin{IEEEeqnarray*}{rCl}
 I(U;Y_1)= C\Big(\frac{\bar{\alpha}P}{\alpha P+N_1}\Big),  \quad I(X;Y_2)=C\Big(\frac{P}{N_2}\Big),
\end{IEEEeqnarray*}
 and  
 \begin{IEEEeqnarray*}{rCl}
  I(X;Y_2,\tilde{Y}_1|U)&=&C\Big(\frac{\alpha P}{N_2}\Big) +C\Big(\frac{\alpha PN_2}{(\alpha P+N_2)(N_1+\beta)}\Big)\nonumber\\
 I(\tilde{Y}_1;Y_1|Y_2,U)&=&C\Big(\frac{\alpha P(N_1+N_2)+N_1N_2}{\beta(N_2+\alpha P)}\Big).
\end{IEEEeqnarray*}
For these choices, the region defined by the constraints in~Corollary~\ref{Cor:thm3} evaluates to:
\begin{subequations}\label{eq:GausBS}
   \begin{IEEEeqnarray}{rCl}
  R_1&\leq& C\Big(\frac{\bar{\alpha}P}{\alpha P+N_1}\Big)\\
 R_1+R_2&\leq&  C\Big(\frac{\bar{\alpha}P}{\alpha P+N_1}\Big)+C\Big(\frac{\alpha P}{N_2}\Big) \nonumber\\&&\quad+C\Big(\frac{\alpha PN_2}{(\alpha P+N_2)(N_1+\beta)}\Big) \\
 R_1+ R_2&\leq& C\Big(\frac{P}{N_2}\Big) \!-\!C\Big(\frac{\alpha P(N_1\!+\!N_2)\!+\!N_1N_2}{\beta(N_2+\alpha P)}\Big)\qquad 
   \end{IEEEeqnarray}
 \end{subequations}
  for some $\alpha \in[0,1]$ and $\beta> 0$ satisfying
  \begin{IEEEeqnarray}{rCl}\label{gauFb}
 C\Big(&&\frac{\alpha P(N_1+N_2)+N_1N_2}{\beta(N_2+\alpha P)}\Big) \leq  R_{\textnormal{Fb},1}.
 \end{IEEEeqnarray}
 
 The region is plotted in Figure~\ref{fig:rateGaussianBC} against the nofeedback capacity region $\set{C}_\textnormal{NoFB}$, linear-feedback capacity \cite{belhadj'14} and {the Ozarow-Leung's  achievable region  \cite{ozarow'84}  with perfect feedback.}

{From the plots and from expressions~\eqref{eq:GausBS} and \eqref{gauFb} above, it is evident that the capacity region is  increased for any positive feedback-rate $R_{\FB,1}, R_{\FB,2}>0$. By Proposition~\ref{prop:noisy_fb}, the same holds also when the feedback links are additive Gaussian noise channels of capacities $R_{\FB,1}$ and $R_{\FB,2}$. This result very much hinges upon the fact that we allow the receivers to code over the feedback link. In fact,  Pillai and Prabhakaran showed in a recent work \cite{pillai} that when the receivers cannot code over the feedback links but are obliged to send back their received channel outputs, then the capacities without feedback and with noisy feedback coincide whenever the noise variances on the two feedback links exceed a certain threshold.} 

\begin{figure}[!t]
\centering
\includegraphics[width=0.5\textwidth]{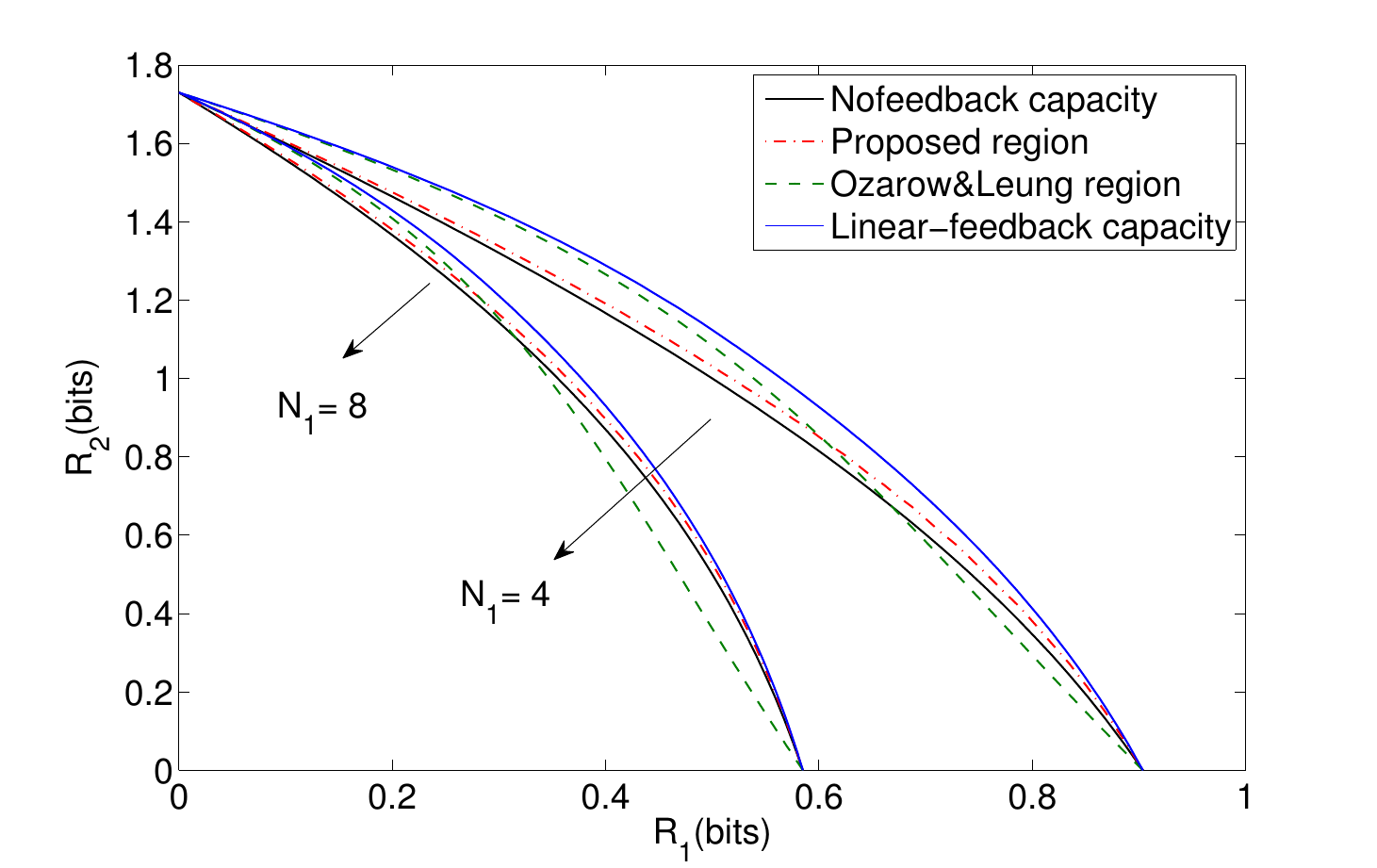}
\caption{$\set{C}_{\textnormal{NoFB}}$, linear-feedback capacity \cite{belhadj'14}, Ozarow and Leung's achievable region \cite{ozarow'84},  and the achievable region in~\eqref{eq:GausBS} are plotted for Gaussian BCs with parameters $P=10$, $N_2=1$, $N_1\in\{4,8\}$ and  feedback rate $R_{\textnormal{Fb},1}=0.8$.} \label{fig:rateGaussianBC}
\end{figure}

\end{Example}

\section{Coding schemes}\label{sec:allschm}
\subsection{Coding Scheme IA: Sliding-window decoding (Theorem~\ref{theosw})}\label{sec:schemeb}

For simplicity, we  only describe the scheme for $Q=\textnormal{const.}$ A general $Q$ can be introduced by coded time-sharing  \cite[Section 4.5.3]{book:gamal}. That means all the codebooks need to be {superposed} on a $P_Q$-i.i.d. random vector $Q^n$ that is revealed to transmitter and receivers, and this $Q^n$ sequence needs to be included in all the joint-typicality checks.

Choose nonnegative rates ${R}_1', {R}_2', \tilde{R}_1, \tilde{R}_2, \hat{R}_1,\hat{R}_2$, auxiliary finite alphabets $\set{U}_0,\set{U}_1,\set{U}_2, \set{\tilde{Y}}_1,\set{\tilde{Y}}_2$, a  function $f$ of the form $f$: $\set{U}_0\times\set{U}_1\times\set{U}_2\to \set{X}$, and pmfs $P_{U_0U_1U_2}$, $P_{\tilde{Y}_1|U_0Y_1}$, $P_{\tilde{Y}_2|U_0Y_2}$. 
Transmission takes place over $B+1$ consecutive blocks, with length $n$ for each block.  We denote the $n$-length blocks of inputs and outputs in block $b$ by $x^n_b$, $y_{1,b}^n$ and $y_{2,b}^n$. 

{Define $\set{J}_i:=\{1,\ldots,\lfloor 2^{n\hat{R}_i} \rfloor\}$,  $\set{K}_i:=\{1,\ldots,\lfloor 2^{n{R}'_i} \rfloor\}$, {and}
$\set{L}_i:=\{1,\ldots,\lfloor 2^{n\tilde{R}_i} \rfloor\}$, for $i\in\{1,2\}$.
The messages are in product form: 
{$M_i=(M_{i,1},\ldots, M_{i,B})$,  $i\in\{1,2\}$, with $M_{i,b}=(M_{c,i,b},M_{p,i,b})$ for $b\in\{1,\ldots,B\}$}. The submessages 
$M_{c,i,b}$, and $M_{p,i,b}$ are uniformly distributed over the sets 
$\set{M}_{c,i}:=\{1,\ldots,\lfloor 2^{nR_{c,i}} \rfloor\}$ and $\set{M}_{p,i}:=\{1,\ldots,\lfloor 2^{nR_{p,i}} \rfloor\}$, respectively, where $R_{p,i}, R_{c,i}>0$ and so that $R_i=R_{p,i}+R_{c,i}$\footnote{Due to the floor operations and {since transmission takes place over $B+1$ blocks whereas the messages $M_1$ and $M_2$ are split  into only $B$ submessages, $R_1$ and $R_2$ here do not exactly represent the transmission rates of messages $M_1$ and $M_2$. In the limit $n
\to\infty$ and $B\to\infty$, which is our case of interest, $R_1$ and $R_2$ however approach these transmission rates. Therefore, we neglect this technicality in the following.}}.
 The feedback message $M_{\FB,i,b-1}$  is uniformly distributed over the set $\set{L}_i$.  The common message ${M}_{c,b}:=(M_{c,1,b},M_{c,2,b})$ is   uniformly distributed over the set   $\set{M}_c:=\set{M}_{c,1}\times  \set{M}_{c,2}$.  
  \vspace{1mm}
 }
 
{The coding scheme is 
 {illustrated in} Table \ref{tab:Scheme1A}. In each block~$b$, the transmitter uses Marton's coding to send {both feedback messages $(M_{\FB,1,b-1},M_{\FB,2,b-1})$ and the common message~${M}_{c,b}$ in the cloud center $U^n_{0,b}({M}_{c,b},M_{\FB,1,b-1},M_{\FB,2,b-1})$, and} the private message $M_{p,i,b}$ in the satellite $U_{i,b}^n(M_{p,i,b},K_{i,b}|{M}_{c,b},M_{\FB,1,b-1},M_{\FB,2,b-1})$, for $K_{i,b}\in\set{K}_i$, $i\in\{1,2\}$. Receiver $1$ first decodes the messages sent in the cloud center $U^n_{0,b}$,  and then {simultaneously reconstructs Receiver 2's compression outputs  $\tilde{Y}^n_{2,b-1}$  and decodes its intended messages  sent in the satellite}  $U_{1,b-1}^n$. Finally, {it compresses its  channel outputs $Y^n_{1,b}$ by means of  Wyner-Ziv coding  and sends the compression} index $M_{\FB,1,b}$ over the feedback link.   Receiver~2 behaves in an analogous way.

 \begin{table*}[ht!]
\begin{center}
\caption{\textnormal{Coding scheme IA: The first four rows in the table depict the encoding process at the transmitter; the following two rows the compression mechanism at the receivers; and the last two rows the messages decoded at the two receivers. {The left-to-right arrows in the last two lines indicate that the receivers apply forward decoding.}}}
\begin{tabular}{>{\bfseries}lcccccc}
\toprule
Block &1& 2& $\ldots$ & $b$& $\ldots$  \\
\midrule
& $u^n_{0,1}({m}_{c,1},1,1)$ & $u^n_{0,1}({m}_{c,2},m_{\FB,1,1},m_{\FB,2,1})$& $\ldots$  & $u^n_{0,b}({m}_{c,b},m_{\FB,1,b-1},m_{\FB,2,b-1})$& $\ldots$  \\
$X$& $u_{1,1}^n(m_{p,1,1},k_{1,1}|{m}_{c,1},1,1)$ &  $u_{1,2}^n(m_{p,1,2},k_{1,2}|{m}_{c,2},m_{\FB,1,1},m_{\FB,2,1})$ & $\ldots$  & $u_{1,b}^n(m_{p,1,b},k_{1,b}|{m}_{c,b},m_{\FB,1,b-1},m_{\FB,2,b-1})$ &$\ldots$  \\
& $u_{2,1}^n(m_{p,2,1},k_{2,1}|{m}_{c,1},1,1)$ &  $u_{2,2}^n(m_{p,2,2},k_{2,2}|{m}_{c,2},m_{\FB,1,1},m_{\FB,2,1})$ & $\ldots$  & $u_{2,b}^n(m_{p,2,b},k_{2,b}|{m}_{c,b},m_{\FB,1,b-1},m_{\FB,2,b-1})$ &$\ldots$  \\
&$x^n_{1}(u^n_{0,1},u^n_{1,1}, u^n_{2,1})$ & $x^n_{2}(u^n_{0,2},u^n_{1,2}, u^n_{2,2})$ & $\ldots$ & $x^n_{b}(u^n_{0,b},u^n_{1,b}, u^n_{2,b})$ & $\ldots$\\ \midrule
$\tilde{Y}_1$& $\tilde{y}_{1,1}^n (m_{\FB,1,1}, j_{1,1}|{m}_{c,1},1,1\big)$ &  $\tilde{y}_{1,2}^n (m_{\FB,1,2}, j_{1,2}|{m}_{c,2},m_{\FB,1,1},m_{\FB,2,1}\big)$ & $\ldots$  & $\tilde{y}_{1,b}^n (m_{\FB,1,b}, j_{1,b}|{m}_{c,b},m_{\FB,1,b-1},m_{\FB,2,b-1}\big)$ &$\ldots$  \\
$\tilde{Y}_2$& $\tilde{y}_{2,1}^n (m_{\FB,2,1}, j_{2,1}|{m}_{c,1},1,1\big)$ &  $\tilde{y}_{2,2}^n (m_{\FB,2,2}, j_{2,2}|{m}_{c,2},m_{\FB,1,1},m_{\FB,2,1}\big)$ & $\ldots$  & $\tilde{y}_{2,b}^n (m_{\FB,2,b}, j_{2,b}|{m}_{c,b},m_{\FB,1,b-1},m_{\FB,2,b-1}\big)$ &$\ldots$  \\\midrule
${Y}_1$& $  {\hat{m}}_{c,1}^{{(1)}}$ &  $\rightarrow  ({\hat{m}}_{c,2}^{{(1)}}, \hat{m}_{\FB,2,1}), (\hat{j}_{2,1},\hat{m}_{p,1,1}, \hat{k}_{1,1})$ & $\ldots$  & $\rightarrow  ({\hat{m}}_{c,b}^{{(1)}}, \hat{m}_{\FB,2,b\!-\!1}), (\hat{j}_{2,b\!-\!1},\hat{m}_{p,1,b\!-\!1}, \hat{k}_{1,b\!-\!1})$ &$\ldots$  \\
${Y}_2$& $  {\hat{m}}_{c,1}^{{(2)}}$ &  $\rightarrow  ({\hat{m}}_{c,2}^{{(2)}}, \hat{m}_{\FB,1,1}), (\hat{j}_{1,1},\hat{m}_{p,2,1}, \hat{k}_{2,1})$ & $\ldots$  & $\rightarrow  ({\hat{m}}_{c,b}^{{(2)}}, \hat{m}_{\FB,1,b\!-\!1}), (\hat{j}_{1,b\!-\!1},\hat{m}_{p,2,b\!-\!1}, \hat{k}_{2,b\!-\!1})$ &$\ldots$  \\
\bottomrule
 \label{tab:Scheme1A}
\end{tabular}
\end{center}
\end{table*}
 
 }
 
\textit{1) Codebook generation}:
For each block $b \in \{1,\ldots,B+1\}$,  randomly and independently generate $\lfloor 2^{n(R_{c,1}+R_{c,2}+\tilde{R}_1+\tilde{R}_2)}\rfloor $ sequences $u^n_{0,b}({m}_{c,b},m_{\FB,1,b-1}, m_{\FB,2,b-1})$, for ${{{m}_{c,b}}} \in \set{M}_c$ and $m_{\FB,i,b-1}\in\set{L}_i$, for $i\in\{1,2\}$.  Each sequence $u^n_{0,b}({m}_{c,b},m_{\FB,1,b-1},m_{\FB,2,b-1})$ is drawn according to the product distribution $\prod_{t=1}^nP_{U_0}(u_{0,b,t})$, where $u_{0,b,t}$ denotes the $t$-th entry of $u_{0,b}^n({m}_{c,b},m_{\FB,1,b-1},m_{\FB,2,b-1})$.

For $i\in\{1,2\}$ and each $u^n_{0,b}({m}_{c,b},m_{\FB,1,b-1},m_{\FB,2,b-1})$ randomly and conditionally independently generate 
$\lfloor 2^{n(R_{p,i}+R'_i)}\rfloor$ sequences $u_{i,b}^n(m_{p,i,b},k_{i,b}|{m}_{c,b}, m_{\FB,1,b-1}$, $m_{\FB,2,b-1})$, for $m_{p,i,b} \in \set{M}_{p,i}$ and $k_{i,b}\in \set{K}_i$, where each  $u_{i,b}^n(m_{p,i,b},k_{i,b}|{m}_{c,b},m_{\FB,1,b-1},m_{\FB,2,b-1})$ is drawn according to the product distribution $\prod_{t=1}^nP_{U_i|U_0}(u_{i,b,t}|u_{0,b,t})$, where $u_{i,b,t}$ denotes the $t$-th entry of $u_{i,b}^n\big(m_{p,i,b},k_{i,b}|{m}_{c,b},m_{\FB,1,b-1},m_{\FB,2,b-1}\big)$.

{Similarly, for $i\in\{1,2\}$  and each tuple $({m}_{c,b},m_{\FB,1,b-1},m_{\FB,2,b-1})\in \set{M}_{c}\times \set{L}_{1}\times \set{L}_2$ randomly generate $\lfloor 2^{n(\tilde{R}_i+ \hat{R}_i)}\rfloor $ sequences $\tilde{y}_{i,b}^n (m_{\FB,i,b}, j_{i,b}|{m}_{c,b}$, $m_{\FB,1,b-1},m_{\FB,2,b-1}\big)$, for  $m_{\FB,i,b}\in\set{L}_i$ and $j_{i,b}\in\set{J}_i$, by drawing  each  $\tilde{y}_{i,b}^n (m_{\FB,i,b}, j_{i,b}|{m}_{c,b},m_{\FB,1,b-1},m_{\FB,2,b-1}\big)$ according to the product distribution $\prod_{t=1}^nP_{\tilde{Y}_i|U_0,Y_i}(\tilde{y}_{i,b,t}|u_{0,b,t})$ }where $\tilde{y}_{i,b,t}$ denotes the $t$-th entry of $\tilde{y}_{i,b}^n (m_{\FB,i,b}, j_{i,b}|{m}_{c,b},m_{\FB,1,b-1},m_{\FB,2,b-1}\big)$.

All codebooks are revealed to  transmitter and  receivers. 
\vspace{1mm}

\textit{2) Encoding}:
We describe the encoding for a fixed block $b\in\{1,\ldots, B+1\}$. 
Assume  that  $M_{c,i,b}=m_{c,i,b}$, $M_{p,i,b}=m_{p,i,b}$, for $i\in\{1,2\}$ and  that the feedback messages sent after block $b-1$ are $M_{\FB,1,b-1}=m_{\FB,1,b-1}$ and $M_{\FB,2,b-1}=m_{\FB,2,b-1}$. Define ${m}_{c,b}:=(m_{c,1,b}, m_{c,2,b})$. To simplify notation, let $m_{\FB,i,0}=m_{c,i,B+1}=m_{p,i,B+1}=1$, for $i\in\{1,2\}$ and ${m}_{c,B+1}=(1,1)$.

 The transmitter looks for a pair   $(k_{1,b},k_{2,b})\in \set{K}_1\times \set{K}_2$  that satisfies
 \begin{IEEEeqnarray*}{rCl}
 &&\big( u_{1,b}^n(m_{p,1,b},k_{1,b}|{m}_{c,b},m_{\FB,1,b-1},m_{\FB,2,b-1}) ,\nonumber \\
&& \quad u_{2,b}^n(m_{p,2,b},k_{2,b}|{m}_{c,b},m_{\FB,1,b-1},m_{\FB,2,b-1}),\nonumber\\
&& \qquad u^n_{0,b}({m}_{c,b},m_{\FB,1,b-1},m_{\FB,2,b-1}) \big)   \in \set{T}_{\varepsilon/16}^{(n)}(P_{U_0U_1U_2}). \quad
 \end{IEEEeqnarray*}
If there is exactly one pair $(k_{1,b},k_{2,b})$ that satisfies the above condition, the transmitter chooses this pair.  If there are multiple such pairs, it chooses one of them uniformly at random. Otherwise it chooses a pair $(k_{1,b},k_{2,b})$ uniformly at random over the entire set  $\set{K}_1\times \set{K}_2$. In block $b$ the transmitter then sends the inputs $x_b^n=(x_{b,1}, \ldots, x_{b,n})$,  where
 \begin{equation}
 x_{b,t}=f(u_{0,b,t},u_{1,b,t},u_{2,b,t}), \qquad t\in\{1,\ldots,n\},
 \end{equation}
and $u_{0,b,t}$, $u_{1,b,t}$, $u_{2,b,t}$ denote the $t$-th symbols of the chosen Marton codewords  $u^n_{0,b}({m}_{c,b},m_{\FB,1,b-1},m_{\FB,2,b-1})$, $u_{1,b}^n(m_{p,1,b},k_{1,b}|{m}_{c,b},m_{\FB,1,b-1},m_{\FB,2,b-1})$, and\\ $u_{2,b}^n(m_{p,2,b},k_{2,b}|{m}_{c,b},m_{\FB,1,b-1},m_{\FB,2,b-1})$. 
 \vspace{1mm}

\textit{3) Decoding  and generation of feedback messages at receivers}: We describe the operations performed at Receiver~1. Receiver~2 behaves in an analogous way.

After each block $b\in\{1,\ldots, B+1\}$, and after observing the outputs $y_{1,b}^n$, 
Receiver~1 looks  for a pair of indices $({\hat{m}}_{c,b}^{{(1)}}, \hat{m}_{\FB,2,b-1})\in \set{M}_{c}\times \set{L}_2$ that satisfies
 \begin{IEEEeqnarray*}{rCl}
\big(u^n_{0,b}({\hat{m}}_{c,b}^{{(1)}}, m_{\FB,1,b-1},\hat{m}_{\FB,2,b-1}),  y_{1,b}^n \big)\in \set{T}_{\varepsilon/8}^{(n)}(P_{U_0Y_1}).
\end{IEEEeqnarray*}
Notice that Receiver 1 already knows $m_{\FB,1,b-1}$ because it has created it itself after the previous block $b-1$. 

If there are multiple such pairs, the receiver chooses one of them at random. If there is no such pair, then it chooses $({\hat{m}}_{c,b}^{{(1)}}, \hat{m}_{\FB,2,b-1})$ randomly over the set $\set{M}_{c}\times \set{L}_2$.

After decoding the cloud center in block $b$, Receiver 1 looks for a tuple $(\hat{j}_{2,b-1},\hat{m}_{p,1,b-1}, \hat{k}_{1,b-1})\in \set{J}_2\times \set{M}_{p,1} \times \set{K}_1$ that satisfies
 \begin{IEEEeqnarray*}{rCl}
&&\big(\tilde{y}_{2,b\!-\!1}^n(\hat{m}_{\FB,2,b\!-\!1}, \hat j_{2,b\!-\!1}|\hat{{m}}_{c,b\!-\!1}^{{(1)}}, m_{\FB,1,b\!-\!2},\hat{m}_{\FB,2,b\!-\!2}), y_{1,b-1}^n,\nonumber\\
&&~\quad u^n_{1,b-1}(\hat{m}_{p,1,b-1},\hat{k}_{1,b-1}|\hat{{m}}_{c,b-1}^{{(1)}},m_{\FB,1,b\!-\!2},\hat{m}_{\FB,2,b\!-\!2}),\nonumber\\
&&\qquad u^n_{0,b-1}(\hat{{m}}_{c,b-1}^{{(1)}},\! m_{\FB,1,b\!-\!2},\!\hat{m}_{\FB,2,b\!-\!2}) \big)\in \set{T}_\varepsilon^{(n)}(P_{U_{0}U_{1}Y_1\tilde{Y}_2}).
\end{IEEEeqnarray*}
It further looks 
for a pair $(m_{\FB,1,b}, j_{1,b})\in \set{L}_1\times \set{J}_1$  that satisfies
 \begin{IEEEeqnarray*}{rCl}
 \lefteqn{
(\tilde{y}_{1,b}^n(m_{\FB,1,b}, j_{1,b}|{\hat{m}}_{c,b}^{{(1)}}, m_{\FB,1,b-1},\hat{m}_{\FB,2,b-1}),}  \qquad \\ 
& &  u^n_{0,b}({\hat{m}}_{c,b}^{{(1)}}, m_{\FB,1,b-1},\hat{m}_{\FB,2,b-1}), y_{1,b}^n)\in\set{T}_{\varepsilon/4}^{(n)}(P_{Y_1U_0\tilde{Y}_1})
\end{IEEEeqnarray*}
and sends the index $m_{\FB,1,b}$ over the feedback link. If there is more than one such pair $(m_{\FB,1,b}, j_{1,b})$ the encoder chooses one of them at random. If there is none, it chooses the index  $m_{\FB,1,b}$ that it sends over the feedback link uniformly at random over $\set{L}_1$. The receivers  thus only send a feedback message at the end of each block $1,\ldots,B$. 

After  decoding block~$B+1$, Receiver 1 produces the product message 
$\hat{m}_1=(\hat{m}_{1,1},\ldots,\hat{m}_{1,B})$ as its guess, where $\hat{m}_{1,b}=(\hat{m}_{c,1,b}^{{(1)}},\hat{m}_{p,1,b})$, for $b\in\{1,\ldots,B\}$, {and $\hat{m}_{c,1,b}^{{(1)}}$ denotes the first component of ${\hat{m}}_{c,b}^{{(1)}}$.}
 \vspace{1mm}

\textit{4) Analysis}: See Appendix~\ref{sec:analysis_relayb}.

\subsection{Coding Scheme IB: Backward decoding (Theorem~\ref{theobw})}\label{sec:backwarddecoding}

The coding scheme is similar to Scheme IA, except that here {the two receivers jointly decode the cloud center and their intended satellite using backward coding, and the Wyner-Ziv codes are not {superposed}  on the Marton cloud center. 
 The scheme is {illustrated in} Table \ref{tab:Scheme1B}.  

 \begin{table*}[ht!]
\begin{center}
\caption{ \textnormal{Coding scheme IB: The first four rows in the table depict the encoding process at the transmitter; the following two rows the compression mechanism at the receivers; and the last two rows the messages decoded at the two receivers. {The right-to-left arrows in the last two rows indicate that the receivers use backward decoding.}} }
\begin{tabular}{>{\bfseries}lcccccc}
\toprule
Block &1& 2& $\ldots$ & $b$& $\ldots$  \\
\midrule
& $u^n_{0,1}({m}_{c,1},1,1)$ & $u^n_{0,1}({m}_{c,2},m_{\FB,1,1},m_{\FB,2,1})$& $\ldots$  & $u^n_{0,b}({m}_{c,b},m_{\FB,1,b-1},m_{\FB,2,b-1})$& $\ldots$  \\
$X$& $u_{1,1}^n(m_{p,1,1},k_{1,1}|{m}_{c,1},1,1)$ &  $u_{1,2}^n(m_{p,1,2},k_{1,2}|{m}_{c,2},m_{\FB,1,1},m_{\FB,2,1})$ & $\ldots$  & $u_{1,b}^n(m_{p,1,b},k_{1,b}|{m}_{c,b},m_{\FB,1,b-1},m_{\FB,2,b-1})$ &$\ldots$  \\
& $u_{2,1}^n(m_{p,2,1},k_{2,1}|{m}_{c,1},1,1)$ &  $u_{2,2}^n(m_{p,2,2},k_{2,2}|{m}_{c,2},m_{\FB,1,1},m_{\FB,2,1})$ & $\ldots$  & $u_{2,b}^n(m_{p,2,b},k_{2,b}|{m}_{c,b},m_{\FB,1,b-1},m_{\FB,2,b-1})$ &$\ldots$  \\
&$x^n_{1}(u^n_{0,1},u^n_{1,1}, u^n_{2,1})$ & $x^n_{2}(u^n_{0,2},u^n_{1,2}, u^n_{2,2})$ & $\ldots$ & $x^n_{b}(u^n_{0,b},u^n_{1,b}, u^n_{2,b})$ & $\ldots$\\ 
\midrule
$\tilde{Y}_1$& $\tilde{y}_{1,1}^n (m_{\FB,1,1}, j_{1,1}\big)$ &  $\tilde{y}_{1,2}^n (m_{\FB,1,2}, j_{1,2}\big)$ & $\ldots$  & $\tilde{y}_{1,b}^n (m_{\FB,1,b}, j_{1,b}\big)$ &$\ldots$  \\
$\tilde{Y}_2$& $\tilde{y}_{2,1}^n (m_{\FB,2,1}, j_{2,1}\big)$ &  $\tilde{y}_{2,2}^n (m_{\FB,2,2}, j_{2,2}\big)$ & $\ldots$  & $\tilde{y}_{2,b}^n (m_{\FB,2,b}, j_{2,b}\big)$ &$\ldots$  \\\midrule
${Y}_1$& $  (\hat{j}_{2,1} ,\hat {{m}}_{c,1}^{{(1)}}, \hat{m}_{p,1,1}, \hat{k}_{1,1})$ &  $\leftarrow  (\hat{j}_{2,2} ,\hat {{m}}_{c,2}^{{(1)}}, \hat m_{\FB,2,1}, \hat{m}_{p,1,2}, \hat{k}_{1,2})$ &$\ldots$ & $\leftarrow  (\hat{j}_{2,b} ,\hat {{m}}_{c,b}^{{(1)}}, \hat m_{\FB,2,b-1}, \hat{m}_{p,1,b}, \hat{k}_{1,b})$ &$\ldots$  \\
${Y}_2$& $  (\hat{j}_{1,1} ,\hat {{m}}_{c,1}^{{(2)}}, \hat{m}_{p,2,1}, \hat{k}_{2,1})$ &  $\leftarrow  (\hat{j}_{1,2} ,\hat {{m}}_{c,2}^{{(2)}}, \hat m_{\FB,1,1}, \hat{m}_{p,2,2}, \hat{k}_{2,2})$ &$\ldots$ & $\leftarrow  (\hat{j}_{1,b} ,\hat {{m}}_{c,b}^{{(2)}}, \hat m_{\FB,1,b-1}, \hat{m}_{p,2,b}, \hat{k}_{2,b})$ &$\ldots$  \\
\bottomrule
 \label{tab:Scheme1B}
\end{tabular}
\end{center}
\end{table*}
 }
 
For simplicity, we describe the scheme without the coded time-sharing random variable $Q$, i.e., for $Q=\textnormal{const.}$  

Choose nonnegative rates ${R}_1', {R}_2', \tilde{R}_1, \tilde{R}_2, \hat{R}_1,\hat{R}_2$, auxiliary finite alphabets $\set{U}_0,\set{U}_1,\set{U}_2, \set{\tilde{Y}}_1,\set{\tilde{Y}}_2$, a  function $f$ of the form $f$: $\set{U}_0\times\set{U}_1\times\set{U}_2\to \set{X}$, and pmfs $P_{U_0U_1U_2}$, $P_{\tilde{Y}_1|Y_1}$, $P_{\tilde{Y}_2|Y_2}$. 
Transmission takes place over $B+1$ consecutive blocks, with length $n$ for each block.  We denote the $n$-length blocks of inputs and outputs in block $b$ by $x^n_b$, $y_{1,b}^n$ and $y_{2,b}^n$. 

Define $\set{J}_i:=\{1,\ldots,\lfloor 2^{n\hat{R}_i} \rfloor\}$,  $\set{K}_i:=\{1,\ldots,\lfloor 2^{n{R}'_i} \rfloor\}$, {and}
$\set{L}_i:=\{1,\ldots,\lfloor 2^{n\tilde{R}_i} \rfloor\}$ , for $i\in\{1,2\}$.
The messages are in product form: 
$M_i=(M_{i,1},\ldots, M_{i,B})$,  $i\in\{1,2\}$,  with $M_{i,b}=(M_{c,i,b},M_{p,i,b})$ for $b\in\{1,\ldots,B\}$. The submessages 
$M_{c,i,b}$, and $M_{p,i,b}$ are uniformly distributed over the sets 
$\set{M}_{c,i}:=\{1,\ldots,\lfloor 2^{nR_{c,i}} \rfloor\}$ and $\set{M}_{p,i}:=\{1,\ldots,\lfloor 2^{nR_{p,i}} \rfloor\}$, respectively, where $R_{p,i}, R_{c,i}>0$ and so that $R_i=R_{p,i}+R_{c,i}$.

\textit{1) Codebook generation}:
For each block $b \in \{1,\ldots,B+1\}$,  randomly and independently generate $\lfloor 2^{n(R_{c,1}+R_{c,2}+\tilde{R}_1+\tilde{R}_2)}\rfloor$ sequences $u^n_{0,b}({m}_{c,b},m_{\FB,1,b-1}, m_{\FB,2,b-1})$, for ${m}_{c,b} \in \set{M}_c:=\set{M}_{c,1}\times  \set{M}_{c,2}$ and $m_{\FB,i,b-1}\in\set{L}_i$, for $i\in\{1,2\}$.   Each sequence $u^n_{0,b}({m}_{c,b},m_{\FB,1,b-1},m_{\FB,2,b-1})$ is drawn according to the product distribution $\prod_{t=1}^nP_{U_0}(u_{0,b,t})$, where $u_{0,b,t}$ denotes the $t$-th entry of $u_{0,b}^n({m}_{c,b},m_{\FB,1,b-1},m_{\FB,2,b-1})$.

For $i\in\{1,2\}$ and each {tuple $({m}_{c,b},m_{\FB,1,b-1},m_{\FB,2,b-1})$ randomly  generate }
$\lfloor 2^{n(R_{p,i}+R'_i)} \rfloor$ sequences $u_{i,b}^n(m_{p,i,b},k_{i,b}|{m}_{c,b},m_{\FB,1,b-1},m_{\FB,2,b-1})$, for $m_{p,i,b} \in \set{M}_{p,i}$ and $k_{i,b}\in \set{K}_i$ {by randomly drawing each codeword}  $u_{i,b}^n(m_{p,i,b},k_{i,b}|{m}_{c,b},m_{\FB,1,b-1},m_{\FB,2,b-1})$ according to the product distribution $\prod_{t=1}^nP_{U_i|U_0}(u_{i,b,t}|u_{0,b,t})$, where $u_{i,b,t}$ denotes the $t$-th entry of $u_{i,b}^n\big(m_{p,i,b},k_{i,b}|{m}_{c,b},m_{\FB,1,b-1},m_{\FB,2,b-1}\big)$. 

Also, for $i\in\{1,2\}$, 
generate $\lfloor 2^{n(\tilde{R}_i+ \hat{R}_i)} \rfloor$ sequences $\tilde{y}_{i,b}^n (m_{\FB,i,b}, j_{i,b}\big)$, for  $m_{\FB,i,b}\in\set{L}_i$ and $j_{i,b}\in\set{J}_i$, by drawing all the entries independently according to the same distribution $P_{\tilde{Y}_i}$.

All codebooks are revealed to  transmitter and  receivers. %
\vspace{1mm}

\textit{2) Encoding}:
We describe the encoding for a fixed block $b\in\{1,\ldots, B+1\}$. 
Assume  that  $M_{c,i,b}=m_{c,i,b}$, $M_{p,i,b}=m_{p,i,b}$, for $i\in\{1,2\}$, and  that the feedback messages sent after block $b-1$ are $M_{\FB,1,b-1}=m_{\FB,1,b-1}$ and $M_{\FB,2,b-1}=m_{\FB,2,b-1}$. Define ${m}_{c,b}:=(m_{c,1,b}, m_{c,2,b})$. To simplify notation, let $m_{\FB,i,0}=m_{c,i,B+1}=m_{p,i,B+1}=1$, for $i\in\{1,2\}$ and $m_{c,B+1}=(1,1)$.

 The transmitter looks for a pair   $(k_{1,b},k_{2,b})\in \set{K}_1\times \set{K}_2$  that satisfies
 \begin{IEEEeqnarray*}{rCl}
&& \big( u_{1,b}^n(m_{p,1,b},k_{1,b}|{m}_{c,b},m_{\FB,1,b-1},m_{\FB,2,b-1}) ,\nonumber \\
&& \quad u_{2,b}^n(m_{p,2,b},k_{2,b}|{m}_{c,b},m_{\FB,1,b-1},m_{\FB,2,b-1}),
\nonumber\\
&&\qquad u^n_{0,b}({m}_{c,b},m_{\FB,1,b-1},m_{\FB,2,b-1}) \big )\in \set{T}_{\varepsilon/16}^{(n)}(P_{U_0U_1U_2}).\quad
 \end{IEEEeqnarray*}
If there is exactly one pair $(k_{1,b},k_{2,b})$ that satisfies the above condition, the transmitter chooses this pair.  If there are multiple such pairs, it chooses one of them uniformly at random. Otherwise it chooses a pair $(k_{1,b},k_{2,b})$ uniformly at random over the entire set  $\set{K}_1\times \set{K}_2$. In block $b$ the transmitter then sends the inputs $x_b^n=(x_{b,1}, \ldots, x_{b,n})$,  where
 \begin{equation}
 x_{b,t}=f(u_{0,b,t},u_{1,b,t},u_{2,b,t}), \qquad t\in\{1,\ldots,n\},
 \end{equation}
and $u_{0,b,t}$, $u_{1,b,t}$, $u_{2,b,t}$ denote the $t$-th symbols of the chosen Marton codewords  $u^n_{0,b}({m}_{c,b},m_{\FB,1,b-1},m_{\FB,2,b-1})$, $u_{1,b}^n(m_{p,1,b},k_{1,b}|{m}_{c,b},m_{\FB,1,b-1},m_{\FB,2,b-1})$, and\\ $u_{2,b}^n(m_{p,2,b},k_{2,b}|{m}_{c,b},m_{\FB,1,b-1},m_{\FB,2,b-1})$. 
 \vspace{1mm}

\textit{3) Generation of feedback messages at receivers}: We describe the operations performed at Receiver~1. Receiver~2 behaves in an analogous way.

After each block $b\in\{1,\ldots, B\}$, and after observing the outputs $y_{1,b}^n$, 
Receiver~1 looks for a pair $(m_{\FB,1,b}, j_{1,b})\in \set{L}_1\times \set{J}_1$  that satisfies
\begin{equation}
(\tilde{y}_{1,b}^n(m_{\FB,1,b}, j_{1,b}), y_{1,b}^n)\in\set{T}_{\varepsilon/4}^{(n)}(P_{\tilde{Y}_1{Y}_1})
\end{equation}
and sends the index $m_{\FB,1,b}$ over the feedback link. If there is more than one such pair $(m_{\FB,1,b}, j_{1,b})$ the encoder chooses one of them at random. If there is none, it chooses the index  $m_{\FB,1,b}$ that it sends over the feedback link uniformly at random over~$\set{L}_1$. 

In our scheme the receivers  thus only send a feedback message at the end of each block $1,\ldots,B$. 
\vspace{1mm}

\textit{4) Decoding at receivers}: 
We describe the operations performed at Receiver~1. Receiver~2 behaves in an analogous way.

The receivers apply backward decoding and thus start decoding only after the transmission terminates. Then, for each block $b\in\{1,\ldots,B+1\}$, starting with the last block $B+1$, Receiver~1 performs the following operations. From the previous decoding step in block $b+1$, it already knows the feedback message $m_{\FB,2,b}$. Moreover, it also knows its own feedback messages $m_{\FB,1,b-1}$ and $m_{\FB,1,b}$ because it has created them itself, see point \textit{3)}. Now, when observing ${y}_{1,b}^n$, Receiver~1  looks for a tuple $(\hat{j}_{2,b} ,\hat {{m}}_{c,b}^{{(1)}}, \hat m_{\FB,2,b-1}, \hat{m}_{p,1,b}, \hat{k}_{1,b})\in  \set{J}_2\times \set{M}_c\times \set{L}_2 \times \set{M}_{p,1} \times \set{K}_1$ that satisfies 
 \begin{IEEEeqnarray*}{rCl}\lefteqn{
\Big(u^n_{1,b}(\hat{m}_{p,1,b},\!\hat{k}_{1,b}|{\hat{m}}_{c,b}^{{(1)}},\!m_{\FB,1,b\!-\!1},\!\hat{m}_{\FB,2,b\!-\!1}),\tilde{y}_{2,b}^n({m}_{\FB,2,b}, \hat j_{2,b}), }\nonumber\\
&&\qquad u^n_{0,b}({\hat{m}}_{c,b}^{{(1)}},\! m_{\FB,1,b\!-\!1},\!\hat{m}_{\FB,2,b\!-\!1}),y_{1,b}^n \Big)\in \set{T}_ \varepsilon ^{(n)}(P_{U_0U_1Y_1\tilde{Y}_2}).
\end{IEEEeqnarray*}
After decoding block~$1$, Receiver 1 produces the product message 
$\hat{m}_1=(\hat{m}_{1,1},\ldots,\hat{m}_{1,B})$ as its guess, where $\hat{m}_{1,b}=(\hat{m}_{c,1,b}^{{(1)}},\hat{m}_{p,1,b})$, for $b\in\{1,\ldots,B\}$, {and $\hat{m}_{c,1,b}^{(1)}$ denotes the first component of ${\hat{m}}_{c,b}^{(1)}$.}
 \vspace{1mm}

\textit{5) Analysis}: See Appendix~\ref{sec:analysis_relay}. 

\subsection{Coding Scheme IC: Hybrid sliding-window decoding and backward decoding (Theorem~\ref{Thm:oneside})}\label{schm:1c}

Our third scheme~IC  is a mixture of the Scheme IA and IB:   Receiver~1 behaves as in  Scheme IA and   Receiver~2 as in Scheme IB. {It 
achieves region $\set{R}_\textnormal{relay,hb}^{(1)}$. A similar scheme achieves region $\set{R}_\textnormal{relay,hb}^{(2)}$.} 

{For simplicity we consider only $Q=\textnormal{const.}$ }

\textit{1) Codebook generation}:
The codebooks are generated  as  in Scheme {IA}, described in point 1) in Section \ref{sec:schemeb}, but where  $\tilde{R}_2=\hat{R}_2=0$. 

\textit{2) Encoding}:
The transmitter performs {the} same encoding  procedure as in Section \ref{sec:schemeb}, but where $m_{\FB,2,b-1}=1$ is constant for all blocks $b\in\{1,\ldots,B+1\}$. 

\textit{3) Receiver 1}:
In each block $b\in\{1,\ldots,B+1\}$,  Receiver 1 first simultaneously decodes the cloud center and its satellite. Specifically,   
Receiver 1 looks for a tuple $(\hat{{m}}_{c,b-1}, \hat{m}_{p,1,b-1}, \hat{k}_{1,b-1})\in \set{M}_c\times \set{M}_{p,1} \times \set{K}_1$ that satisfies
 \begin{IEEEeqnarray*}{rCl}
&&\big(u^n_{0,b-1}(\hat{{m}}_{c,b-1},\! m_{\FB,1,b\!-\!2},1), y_{1,b-1}^n,\nonumber\\
&&~ ~u^n_{1,b\!-\!1}(\hat{m}_{p,1,b\!-\!1},\!\hat{k}_{1,b\!-\!1}|\hat{{m}}_{c,b\!-\!1},\!m_{\FB,1,b\!-\!2},\!1) \big)\in \set{T}_ \varepsilon ^{(n)}(P_{U_{0}U_{1}Y_1}).
\end{IEEEeqnarray*}
It further compresses the outputs $y^n_{1,b}$ and sends the feedback message $m_{\FB,1,b}$ over the feedback link as in Scheme IA, see point 3) in Section \ref{sec:schemeb}.

\textit{4) Receiver 2}:
Receiver 2 performs backward decoding  as in Scheme IB, see point 4) in Section \ref{sec:backwarddecoding}.

\textit{5) Analysis}: {Similar to the analysis of the schemes IA and IB presented in appendices~\ref{sec:analysis_relayb} and \ref{sec:analysis_relay}. Details are omitted.} 

\subsection{Coding Scheme 2: Encoder processes feedback-info}\label{sec:processing}

The scheme described in this subsection differs from the previous scheme
in that in each block $b$, after receiving the feedback messages $M_{\FB,1,b}, M_{\FB,2,b}$, the encoder first reconstructs the compressed versions of the channel outputs, $\tilde{Y}_{1,b}^n$ and $\tilde{Y}_{2,b}^{n}$. It then newly compresses the quintuple consisting of  $\tilde{Y}_{1,b}^n$ and $\tilde{Y}_{2,b}^{n}$ and the Marton codewords $U_{0,b}^n$, $U_{1,b}^n$, $U_{2,b}^n$ that it had sent during block $b$. This new compression information is sent to the two receivers in the next-following block $b+1$ as part of the  cloud center of Marton's code.

Decoding at the receivers is based on backward decoding. For each block $b$,  each receiver~$i\in\{1,2\}$ uses its observed outputs $Y_{i,b}^n$ to  simultaneously reconstruct the encoder's compressed signal and  decode its intended messages sent in block~$b$.  {The scheme is illustrated in Table \ref{tab:Scheme2}.}


 \begin{table*}[ht!]
\begin{center}
\caption{ \textnormal{Coding scheme II: The first row depicts the compression mechanism at the transmitter; the following four rows the encoding mechanism at the transmitter; the subsequent two rows the compression mechanism at the receivers; and the last two rows the messages decoded at the two receivers. {The right-to-left arrows in the last two rows indicates that backward-decoding is used.} }}
\begin{tabular}{>{\bfseries}lcccccc}
\toprule
Block &1& 2& $\ldots$ & $b$& $\ldots$  \\
\midrule
& $v^n_{1}(n_1|1)$ & $v^n_{2}(n_2|n_{1})$& $\ldots$  & $v^n_{b}(n_b|n_{b-1})$& $\ldots$  \\\midrule
& $u^n_{0,b}({m}_{c,1},1)$ & $u^n_{0,2}({m}_{c,2},n_{1})$& $\ldots$  & $u^n_{0,b}({m}_{c,b},n_{b-1})$ & $\ldots$  \\ 
$X$& $u_{1,1}^n( m_{p,1,1}, k_{1,1}|{m}_{c,1}, 1)$ &  $u_{1,2}^n( m_{p,1,2}, k_{1,2}|{m}_{c,2}, n_{1})$& $\ldots$  & $u_{1,b}^n( m_{p,1,b}, k_{1,b}|{m}_{c,b}, n_{b-1})$ &$\ldots$  \\ 
& $u_{2,1}^n( m_{p,2,1}, k_{2,1}|{m}_{c,1}, 1)$ &  $u_{2,2}^n( m_{p,2,2}, k_{2,2}|{m}_{c,2}, n_{1})$& $\ldots$  & $u_{2,b}^n( m_{p,2,b}, k_{2,b}|{m}_{c,b}, n_{b-1})$ &$\ldots$  \\
&$x^n_{1}(u^n_{0,1},u^n_{1,1}, u^n_{2,1})$ & $x^n_{2}(u^n_{0,2},u^n_{1,2}, u^n_{2,2})$ & $\ldots$ & $x^n_{b}(u^n_{0,b},u^n_{1,b}, u^n_{2,b})$ & $\ldots$\\ 
\midrule
$\tilde{Y}_1$&  $\tilde{y}_{1,1}^n (m_{\FB,1,1}, j_{1,1})$&  $\tilde{y}_{1,2}^n (m_{\FB,1,2}, j_{1,2})$ & $\ldots$  &  $\tilde{y}_{i,b}^n (m_{\FB,1,b}, j_{1,b})$ &$\ldots$  \\
$\tilde{Y}_2$&  $\tilde{y}_{2,1}^n (m_{\FB,2,1}, j_{2,1})$&   $\tilde{y}_{2,2}^n (m_{\FB,2,2}, j_{2,2})$ & $\ldots$  &  $\tilde{y}_{2,b}^n (m_{\FB,2,b}, j_{2,b})$ &$\ldots$  \\ \midrule
${Y}_1$&   $  ({\hat{m}}_{c,1}^{{(1)}}, \hat{m}_{p,1,1}, \hat{k}_{1,1})$ &  $\leftarrow  ({\hat{m}}_{c,2}^{{(1)}}, \hat{m}_{p,1,2}, \hat{k}_{1,2},\hat{n}^{{(1)}}_{1})$  &$\ldots$ & $\leftarrow  ({\hat{m}}_{c,b}^{{(1)}}, \hat{m}_{p,1,b}, \hat{k}_{1,b},\hat{n}^{{(1)}}_{b-1})$ &$\ldots$  \\
${Y}_2$&   $  ({\hat{m}}_{c,1}^{{(2)}}, \hat{m}_{p,2,1}, \hat{k}_{2,1})$ &  $\leftarrow  ({\hat{m}}_{c,2}^{{(2)}}, \hat{m}_{p,2,2}, \hat{k}_{2,2},\hat{n}^{{(2)}}_{1})$  &$\ldots$ & $\leftarrow  ({\hat{m}}_{c,b}^{{(2)}}, \hat{m}_{p,2,b}, \hat{k}_{2,b},\hat{n}^{{(2)}}_{b-1})$ &$\ldots$   \\
\bottomrule
 \label{tab:Scheme2}
\end{tabular}
\end{center}
\end{table*}


\label{sec:proc}
For simplicity, we  only describe the scheme for $Q=\textnormal{const.}$ 

Choose nonnegative rates ${R}_1', {R}_2', \tilde{R}_1, \tilde{R}_2, \hat{R}_1,\hat{R}_2, \tilde{R}_v$, auxiliary finite alphabets $\set{U}_0,\set{U}_1,\set{U}_2, \set{\tilde{Y}}_1,\set{\tilde{Y}}_2$, $\set{V}$, a  function $f$ of the form $f$: $\set{U}_0\times\set{U}_1\times\set{U}_2\to \set{X}$, and pmfs $P_{U_0U_1U_2}$, $P_{\tilde{Y}_1|Y_1}$, $P_{\tilde{Y}_2|Y_2}$, and $P_{V|U_0U_1U_2\tilde{Y}_1\tilde{Y}_2}$. 
Transmission takes place over $B+1$ consecutive blocks, with length $n$ for each block.  We denote the $n$-length blocks of channel inputs and outputs in block $b$ by $x^n_b$, $y_{1,b}^n$ and $y_{2,b}^n$. 

Define $\set{J}_i:=\{1,\ldots,\lfloor 2^{n\hat{R}_i} \rfloor\}$,  $\set{K}_i:=\{1,\ldots,\lfloor 2^{n{R}'_i} \rfloor\}$, {and}
$\set{L}_i:=\{1,\ldots,\lfloor 2^{n\tilde{R}_i} \rfloor\}$, for $i\in\{1,2\}$, and $\set{N}:=\{1,\ldots,\lfloor 2^{n\tilde{R}_v} \rfloor\}$ 
The messages are in product form: 
$M_i=(M_{i,1},\ldots, M_{i,B})$, $i\in\{1,2\}$, with $M_{i,b}=(M_{c,i,b},M_{p,i,b})$ for  $b\in\{1,\ldots,B\}$. The submessages 
$M_{c,i,b}$, and $M_{p,i,b}$ are uniformly distributed over the sets 
$\set{M}_{c,i}:=\{1,\ldots,\lfloor 2^{nR_{c,i}} \rfloor\}$ and $\set{M}_{p,i}:=\{1,\ldots,\lfloor 2^{nR_{p,i}} \rfloor\}$, respectively, where $R_{p,i}, R_{c,i}>0$ and so that $R_i=R_{p,i}+R_{c,i}$. 

\textit{1) Codebook generation}:
For each block $b \in \{1,\ldots,B+1\}$,  randomly and independently generate $\lfloor 2^{n(R_{c,1}+R_{c,2}+\tilde{R}_1+\tilde{R}_2)}\rfloor$ sequences $u^n_{0,b}({m}_{c,b},n_{b-1})$, for ${m}_{c,b} \in \set{M}_c:=\set{M}_{c,1}\times  \set{M}_{c,2}$ and $n_{b-1}\in\set{N}$.   Each sequence $u^n_{0,b}({m}_{c,b},n_{b-1})$ is drawn according to the product distribution $\prod_{t=1}^nP_{U_0}(u_{0,b,t})$, where $u_{0,b,t}$ denotes the $t$-th entry of $u_{0,b}^n({m}_{c,b},n_{b-1})$.

For $i\in\{1,2\}$ and each pair $({m}_{c,b},n_{b-1})$ randomly generate 
$\lfloor 2^{n(R_{p,i}+R'_i)}\rfloor$ sequences $u_{i,b}^n(m_{p,i,b},k_{i,b}|{m}_{c,b},n_{b-1})$, for $m_{p,i,b} \in \set{M}_{p,i}$ and $k_{i,b}\in \set{K}_i$, by drawing each codeword  $u_{i,b}^n(m_{p,i,b},k_{i,b}|{m}_{c,b},n_{b-1})$ according to the product distribution $\prod_{t=1}^nP_{U_i|U_0}(u_{i,b,t}|u_{0,b,t})$, where $u_{i,b,t}$ denotes the $t$-th entry of $u_{i,b}^n\big(m_{p,i,b},k_{i,b}|{m}_{c,b},n_{b-1}\big)$. 

Also, for $i\in\{1,2\}$, 
generate $\lfloor 2^{n(\tilde{R}_i+ \hat{R}_i)}\rfloor$ sequences $\tilde{y}_{i,b}^n (m_{\FB,i,b}, j_{i,b})$, for  $m_{\FB,i,b}\in\set{L}_i$ and $j_{i,b}\in\set{J}_i$ by drawing all the entries independently according to the same distribution $P_{\tilde{Y}_i}$;

Finally, for each $n_{b-1}\in\set{N}$, generate $ \lfloor 2^{n{R}_v}\rfloor $ sequences $v^n_{b}(n_b|n_{b-1})$, for $n_b\in\set{N}$ by drawing all entries independently according to the same distribution $P_V$.

All codebooks are revealed to  transmitter and  receivers. %
\vspace{1mm}

\textit{2) Encoding}:
 We describe the encoding for a fixed block $b\in\{1,\ldots, B+1\}$. 
Assume  that in this block we wish to send messages  $M_{c,i,b}=m_{c,i,b}$, $M_{p,i,b}=m_{p,i,b}$, for $i\in\{1,2\}$, and define ${m}_{c,b}:=(m_{c,1,b}, m_{c,2,b})$. To simplify notation, let $m_{\FB,i,0}=m_{c,i,B+1}=m_{p,i,B+1}=1$, for $i\in\{1,2\}$,   and also $n_{-1}=n_0=1$. 

The first step in the encoding is to reconstruct the compressed outputs pertaining to the previous block $\tilde{Y}_{1,b-1}^n$ and $\tilde{Y}_{2,b-1}^n$. Assume that  after block $b-1$ the transmitter received the feedback messages $M_{\FB,1,b-1}=m_{\FB,1,b-1}$ and $M_{\FB,2,b-1}=m_{\FB,2,b-1}$, and  that in this previous block it had produced the Marton codewords
$u_{0,b-1}^n:=u_{0,b-1}^n({m}_{c,b-1}, n_{b-2})$, $u_{1,b-1}^n:=u_{1,b-1}^n( m_{p,1,b-1}, k_{1,b-1}|{m}_{c,b-1}, n_{b-2})$, and $u_{2,b-1}^n:=u_{2,b-1}^n( m_{p,2,b-1}, k_{2,b-1}|{m}_{c,b-1}, n_{b-2})$. The transmitter then looks for a pair $(\hat{j}_{1,b-1}, \hat{j}_{2,b-1}) \in  \set{J}_1\times \set{J}_2$ that satisfies
\begin{IEEEeqnarray*}{rCl}
\lefteqn{
\big(u_{0,b-1}^n, u_{1,b-1}^n, 
  u_{2,b-1}^n,  
 \tilde{y}_{1,b-1}^n( m_{\FB,1,b-1},\hat{j}_{1,b-1}),} \nonumber \\
 & &\hspace{1cm}  \tilde{y}_{2,b-1}^n({m}_{\FB,2,b-1}, \hat{j}_{2,b-1})\big) \in \set{T}^{(n)}_{\varepsilon/4}(P_{U_0U_1U_2\tilde{Y}_1\tilde{Y}_2}).
\end{IEEEeqnarray*}
In a second step the encoder produces the new compression information pertaining to block $b-1$, which  it  then sends to the receivers during block $b$. To this end, it looks for an index $ n_{b-1}\in\set{N}$ that satisfies 
\begin{IEEEeqnarray*}{rCl}
\lefteqn{
\big(u_{0,b-1}^n, u_{1,b-1}^n, u_{2,b-1}^n,  \tilde{y}_{1,b-1}^n( m_{\FB,1,b-1}, \hat{j}_{1,b-1}),} \nonumber \\
&& \hspace{1.6cm} \tilde{y}_{2,b-1}^n({m}_{\FB,2,b-1}, \hat{j}_{2,b-1}), v_{b-1}^n( n_{b-1}|n_{b-2})\big)\nonumber \\ & & \qquad \hspace{4cm}  \in \set{T}_{\varepsilon/2}^{(n)}(P_{U_0U_1U_2\tilde{Y}_1\tilde{Y}_2V}).
\end{IEEEeqnarray*}
The transmitter now sends the fresh data and the compression message ${n}_{b-1}$ over the channel:  It thus looks for a pair   $(k_{1,b},k_{2,b})\in \set{K}_1\times \set{K}_2$  that satisfies
 \begin{IEEEeqnarray}{rCl}\lefteqn{
 \big(u^n_{0,b}({m}_{c,b},{n}_{b-1}),} \quad  \nonumber \\ 
 &&  u_{1,b}^n(m_{p,1,b},k_{1,b}|{m}_{c,b},{n}_{b-1}) ,\nonumber \\
&& \quad u_{2,b}^n(m_{p,2,b},k_{2,b}|{m}_{c,b},{n}_{b-1})\big)\in \set{T}_{\varepsilon/64}^{(n)}(P_{U_0U_1U_2}).\nonumber 
 \end{IEEEeqnarray}
If there is exactly one pair $(k_{1,b},k_{2,b})$ that satisfies the above condition, the transmitter chooses this pair.  If there are multiple such pairs, it chooses one of them uniformly at random. Otherwise it chooses a pair $(k_{1,b},k_{2,b})$ uniformly at random over the entire set  $\set{K}_1\times \set{K}_2$. In block $b$ the transmitter then sends the inputs $x_b^n=(x_{b,1}, \ldots, x_{b,n})$,  where
 \begin{equation}
 x_{b,t}=f(u_{0,b,t},u_{1,b,t},u_{2,b,t}), \qquad t\in\{1,\ldots,n\}.
 \end{equation}
and $u_{0,b,t}$, $u_{1,b,t}$, $u_{2,b,t}$ denote the $t$-th symbols of the chosen Marton codewords  $u^n_{0,b}({m}_{c,b},{n}_{b-1})$, $u_{1,b}^n(m_{p,1,b},k_{1,b}|{m}_{c,b},{n}_{b-1})$, and $u_{2,b}^n(m_{p,2,b},k_{2,b}|{m}_{c,b},{n}_{b-1})$. 

\textit{3) Generation of feedback messages at receivers}: We describe the operations performed at Receiver~1. Receiver~2 behaves in an analogous way.

After each block $b\in\{1,\ldots, B\}$, and after observing the outputs $y_{1,b}^n$, 
Receiver~1 looks for a pair of indices $(m_{\FB,1,b}, j_{1,b})\in \set{L}_1\times \set{J}_1$  that satisfies
\begin{equation}
(\tilde{y}_{1,b}^n(m_{\FB,1,b}, j_{1,b}), y_{1,b}^n)\in\set{T}_{\varepsilon/16}^{(n)}(P_{\tilde{Y}_1{Y}_1})
\end{equation}
and sends the index $m_{\FB,1,b}$ over the feedback link. If there is more than one such pair $(m_{\FB,1,b}, j_{1,b})$ the encoder chooses one of them at random. If there is none, it chooses the index  $m_{\FB,1,b}$ sent over the feedback link uniformly at random over $\set{L}_1$. 

In our scheme the receivers  thus only send a feedback message at the end of each block. 
\vspace{1mm}

\textit{4) Decoding at receivers}: 
We describe the operations performed at Receiver~1. Receiver~2 behaves in an analogous way.

The receivers apply backward decoding, so they wait until the end of the transmission. Then, for each block $b\in\{1,\ldots,B+1\}$, starting with the last block $B+1$, Receiver~1 performs the following operations. From the previous decoding step in block $b+1$, it already knows the compression index $n_{b}$. 
Now, when observing ${y}_{1,b}^n$, Receiver~1  
looks for a 
 tuple $({\hat{m}}_{c,b}^{{(1)}}, \hat{m}_{p,1,b}, \hat{k}_{1,b},\hat{n}_{b-1})\in\set{M}_c \times \set{M}_{p,1} \times \set{K}_1\times \set{N}$ that satisfies
 \begin{IEEEeqnarray*}{rCl}\lefteqn{
\big(u^n_{0,b}({\hat{m}}_{c,b}^{{(1)}}, \hat{n}_{b-1}),u^n_{1,b}(\hat{m}_{p,1,b},\hat{k}_{1,b}|{\hat{m}}_{c,b}^{{(1)}},\hat{n}_{b-1}),}\nonumber\\
&&\hspace{2.5cm}\quad{v}_{b}^n(n_{b}| \hat{n}_{b-1}), y_{1,b}^n, \tilde{y}_{1,b}^n(m_{\FB,1,b},j_{1,b}) \big)\nonumber \\ & & \hspace{5.4cm}\in \set{T}_ \varepsilon ^{(n)}(P_{U_0U_1VY_1\tilde{Y}_1}),
\end{IEEEeqnarray*}
where recall that Receiver~1 knows the indices $m_{\FB,1,b}$ and $j_{1,b}$ because it has constructed them itself under \textit{3)}.

After decoding block~$1$, Receiver 1 produces the product message 
$\hat{m}_1=(\hat{m}_{1,1},\ldots,\hat{m}_{1,B})$ as its guess, where $\hat{m}_{1,b}=(\hat{m}_{c,1,b}^{{(1)}},\hat{m}_{p,1,b})$, for $b\in\{1,\ldots,B\}$, and $\hat{m}_{c,1,b}^{{(1)}}$ denotes the first component of $\hat{m}_{c,1,b}^{{(1)}}$.\vspace{1mm}
 
\textit{5) Analysis}: See Appendix~\ref{sec:analysis_proc}. 


{
\section{Related setups and previous works}\label{Sec:Relations}
In this section we compare our results to previous works, and we discuss the related setups of DMBCs with noisy feedback channels and state-dependent DMBCs with strictly-causal state-information at the transmitter and causal state-information at the receivers.

\subsection{The Shayevitz-Wigger region and our region $\set{R}_{\textnormal{proc.}}^\infty$}\label{sec:related1}
{
Consider a restricted form of the  Shayevitz-Wigger region for output feedback where in the evaluation of Constraints~\eqref{eq:inner}  we limit the choices of  the auxiliaries to 
\begin{equation}\label{eq:choice1}
V_1=V_2=V_0=V.
\end{equation}
This restricted Shayevitz-Wigger region for output feedback is included in our new achievable region $\set{R}_{\textnormal{proc.}}^\infty$ when the feedback rates  satisfy
\begin{equation}
R_{\textnormal{Fb},1}\geq \log_2|\set{Y}_1| \qquad \textnormal{and} \qquad R_{\textnormal{Fb},2}\geq \log_2|\set{Y}_2|,
\end{equation}
 which is seen as follows. Recall that our achievable region $\set{R}_{\textnormal{proc.}}^\infty$ is characterized by Constraints~\eqref{eq:pri}. When the constraints in~\eqref{eq:inner} are specialized to \eqref{eq:choice1}, then \eqref{eq:inner1}, \eqref{eq:inner2}, and \eqref{eq:inner4} coincide with Constraints \eqref{eq:inner1f}, \eqref{eq:inner2f}, and \eqref{eq:inner4f}, and  the sum-rate constraint~\eqref{eq:swSum1} is tighter than the sum-rate constraint in \eqref{eq:corsum}. The latter holds because $\min_{i=\{1,2\}}\{a_i-b_i\}\geq \min_{i\in\{1,2\}}{a_i}-\max_{i\in\{1,2\}}{b_i}$ for any nonnegative $\{a_i,b_i\}^2_{i=1}$, 


As explained at the end of the next subsection, choices as in \eqref{eq:choice1} are particularly interesting, see also \eqref{eq:choice2} ahead.}

\subsection{The Shayevitz-Wigger region and our region $\set{R}_{\textnormal{relay,bw}}$} \label{sec:rel_relay}

Consider  a restricted Shayevitz-Wigger region where in the evaluation of Constraints~\eqref{eq:inner}  we limit the choices of  the auxiliaries to 
\begin{IEEEeqnarray}{rCl}\label{eq:choice2}
V_1=V_2=V_0= V=(\psi_1(Y_1, Q), \psi_2(Y_2,Q))
\end{IEEEeqnarray}
for some functions $\psi_1$ and $\psi_2$ on appropriate domains.

For choices satisfying~\eqref{eq:choice2}, the Shayevitz-Wigger rate-constraints in \eqref{eq:inner} reduce to
\begin{subequations}\label{eq:innerw}
\begin{IEEEeqnarray}{rCl}
 R_1 &\leq& I(U_0,U_1;Y_1,V|Q) - I(Y_2;V|Q,Y_1)\label{eq:inner1w}
\\
 R_2 &\leq& I(U_0,U_2;Y_2,V|Q) - I(Y_1;V|Q,Y_2)\label{eq:inner2w}
\\
 R_1+R_2 &\leq& I(U_1; Y_1,V|Q,U_0) + I(U_2; Y_2,V|Q,U_0) \nonumber \\
 & &+ \min_{i\in\{1,2\}}I(U_0;Y_i,V|Q) \nonumber \\
 & &-\max_{i\in\{1,2\}} I(Y_1,Y_2;V|Q,Y_i) - I(U_1;U_2|Q,U_0) \nonumber\\ \label{eq:swSum1w}
\\
\nonumber R_1+R_2 &\leq& I(U_0,U_1; Y_1,V|Q) + I(U_0,U_2; Y_2,V|Q) \\
&& - I(U_1;U_2|Q,U_0) - I(Y_2;V|Q,Y_1)\nonumber \\
&& -I(Y_1;V|Q,Y_2)\label{eq:inner4w}
\end{IEEEeqnarray}
\end{subequations}

Our new achievable region $\set{R}_\textnormal{relay,bw}$, which is characterized by Constraints~\eqref{eq:region_relay}, improves over this restricted Shayevitz-Wigger region whenever the feedback rates $R_{\FB,1}$, $R_{\FB,2}$ are sufficiently large  so that in our new region we can choose 
 \begin{equation}
 \label{eq:choice12}
  \tilde{Y}_1=\psi_1(Y_1,Q) \quad \textnormal{and} \quad\tilde{Y}_2=\psi_2(Y_2,Q)
 \end{equation} and so that  $\Delta_1,\Delta_2$ satisfy \eqref{eq:DeltaPfb}. 

In fact, for choices as in 
\eqref{eq:choice2}, the rate constraints in \eqref{eq:inner1w}, \eqref{eq:inner2w}, and \eqref{eq:inner4w} coincide with constraints \eqref{In2R1}, \eqref{In2R2}, and \eqref{In2Sum3}  when in the latter we choose $\tilde{Y}_1$ and $\tilde{Y}_2$ as described by \eqref{eq:choice12}. 
Moreover, the combination of the two sum-rate constraints~\eqref{In2Sum1} and \eqref{In2Sum2} is looser than the sum-rate constraint~\eqref{eq:swSum1w}, because the former involves a ``$\min_{i=\{1,2\}}\{a_i-b_i\}$"-term whereas the latter involves the smaller ``$\min_{i\in\{1,2\}}{a_i}-\max_{i\in\{1,2\}}{b_i}$"-term.

Choices as in \eqref{eq:choice1} or \eqref{eq:choice2}  describe an important special case of the Shayevitz-Wigger scheme, as all evaluations of the Shayevitz-Wigger region known to date are based on them. In particular, the capacity-achieving scheme for the two-user BEC-BC when both receivers are informed of all erasure events \cite{wang, greek} is a special case of the Shayevitz-Wigger scheme with a choice of auxiliaries as in \eqref{eq:choice2}. See also the following Subsections~\ref{Sec:Relations_Kim} and \ref{Sec:Relations_David} for further usages of this choice.

\subsection{Noisy feedback channels}\label{sec:ExNoisyFB}
Consider the related scenario where the feedback links are replaced by noisy channels of capacities $R_{\textnormal{Fb},1}$ and $R_{\textnormal{Fb},2}$ and where \emph{the decoders can code over these feedback links}. The following three modifications to our coding schemes suffice to ensure that our achievable regions remain valid: 
\begin{itemize} 
\item We {interleave} two instances of our coding schemes: one scheme operates during the odd blocks of the  BC and occupies the even blocks on the feedback links; the other scheme operates during the even blocks of the  BC and occupies the odd blocks on the feedback links. 
\item Instead of sending  after each block an uncoded feedback message  over the feedback links, the receivers  encode them using capacity-achieving codes for their feedback links and  send these codewords during the next block.  
\item After each block, the transmitter first decodes the messages sent over the feedback links during this block, and then uses the decoded feedback-messages in the same way as it used them in  the original scheme. 
\end{itemize}
\begin{Proposition} \label{prop:noisy_fb}Consider a DMBC with noisy feedback channels of capacities $R_{\FB,1}$ and $R_{\FB,2}$ where the receivers can code over the feedback channels. When modified as described above, our coding schemes in Section~\ref{sec:allschm} achieve the rate regions in 
 Theorems~\ref{theosw}--\ref{theo5}. 
\end{Proposition}
\begin{IEEEproof}
See Appendix~\ref{sec:proof_noisy}.
\end{IEEEproof}

\subsection{State-dependent DMBCs with state-information at the transmitter and the receivers}\label{sec:state}

{Consider a  state-dependent DMBC of channel law $P_{Y_1Y_2|XS}$, where the state $S$ is observed causally at both receivers and strictly-causally at the transmitter. That means, the transmitter learns the time-$t$ state $S_t$ right after producing the time-$t$ input $X_t$, and the receivers  learn $S_t$ at the same time as their outputs $Y_{1,t}$ or $Y_{2,t}$. 
There is no feedback. 

Our coding schemes in the previous section also apply to this related scenario when the following modifications are applied: {Let the ``feedback messages" $\{M_{1,\FB,b}\}$ and $\{M_{2,\FB,b}\}$, for $b\in\{1,\ldots,B\}$, be computed in function of the state symbols $S^n_b$ only. 
In this case, the receivers do not have to feed back anything, since the transmitter observes $S^n_{b}$ and can generate  $\{M_{1,\FB,b}\}$ and $\{M_{2,\FB,b}\}$ locally. }

{This means that }
our new achievable regions 
in Theorems~\ref{theosw}--\ref{theo5} are also valid for this related  scenario, if
we replace 
\begin{itemize}
\item the channel law $P_{Y_1Y_2|X}$ by a state-dependent law $P_{Y_1Y_2|XS}$; 
\item the compression variables $\tilde{Y}_1$ and $\tilde{Y}_2$ by the state $S$; 
\item the outputs $Y_1$ and $Y_2$ by the state-augmented outputs $(Y_1,S)$ and $(Y_2,S)$; 
\end{itemize} and we set $R_{\FB,1}, R_{\FB,2} >H(S)$.  (This last condition implies that Constraints~\eqref{eq:feedback_rates} are always satisfied {and} thus can be dropped.)

In particular, we have the following corollary to Theorem~\ref{theo5}:
\begin{Corollary}\label{cor:state}
Consider  a state-dependent DMBC $P_{Y_1Y_2|XS}$ {where the transmitter observes the state-sequence strictly-causally, and the receivers observe it causally.} 

For this scenario the region  $\set{R}_\textnormal{proc.}^{\textnormal{state}}$ is achievable, where  $\set{R}_\textnormal{proc.}^{\textnormal{state}}$ is the set of all nonnegative rate pairs $(R_1,R_2)$ that satisfy
 \begin{subequations}\label{eq:prig}
\begin{IEEEeqnarray}{rCl}
R_{1} &\leq& I(U_0,U_1;Y_1,V|S,Q)\nonumber \\
&& \; -I(V;U_0,U_1,U_2, {Y}_2|Y_1,S,Q)\label{eq:inner1g}\\
R_2&\leq& I(U_0,U_2;Y_2,V|S,Q)\nonumber \\
&& \;-I(V;U_0,U_1,U_2,{Y}_1|Y_2,S,Q) \label{eq:inner2g} \\
R_1 +R_2 & \leq &I(U_1;Y_1,V|U_0,S,Q)+I(U_2;Y_2,V|U_0,S,Q)\nonumber  \\ &&\;-I(U_1;U_2|U_0,S,{Q})\nonumber  \\ &&\;+\min_{i\in\{1,2\}} \{ I(U_0;Y_i,V|S,Q)\nonumber\\
&& \qquad -I(V;U_0,U_1,U_2, Y_1,{Y}_2|Y_i,S,Q) \}\label{eq:corsumg}\\
R_1 +R_2 & \leq &I(U_0,U_1;Y_1,V|S,Q)+   I(U_0,U_2;Y_2,V|S,Q) \nonumber \\ &&-I(V;U_0,U_1,U_2, {Y}_1|Y_2,S,Q)\nonumber \\&&-I(V;U_0,U_1,U_2,{Y}_2|Y_1,S,Q)\nonumber \\&& -I(U_1;U_2|U_0,{Q}) \label{eq:inner4g}
\end{IEEEeqnarray}
\end{subequations}
for some pmf $P_{Q}P_{U_0U_1U_2|Q} P_{V|U_0U_1U_2S{Q}}$ and some function $f\colon \mathcal{X} \to \mathcal{U}_0\times\mathcal{U}_1\times \mathcal{U}_2\times \mathcal{Q}$, where $X= f(U_0,U_1,U_2,Q)$.
\end{Corollary}

The state-dependent setup considered in this subsection can also be modelled as a DMBC with generalized feedback~\cite{wigger}.  Consequently, the Shayevitz-Wigger achievable region for generalized feedback \cite{wigger}\footnote{The Shayevitz-Wigger region for generalized feedback \cite{wigger} is characterized by constraints as shown in \eqref{eq:inner} but where in some places the outputs $Y_1$ and $Y_2$ have to be replaced by the generalized feedback output~$\tilde{Y}$.} (see in particular also \cite[Corollary~1]{kimchia}) applies. 

By similar arguments as used in the previous two subsections, it can be shown that  for state-dependent DMBCs with causal and strictly-causal state-information at the receivers and the transmitter our new achievable region $\set{R}_{\textnormal{proc.}}^{\textnormal{state}}$ improves over a restricted version of the Shayevitz-Wigger region for generalized feedback when in the latter the choice of auxiliaries is limited to~\eqref{eq:choice1}.

\subsection{The Kim-Chia-El Gamal region for state-dependent DMBCs and our regions  $\set{R}_{\textnormal{proc.}}^{\textnormal{state}}$ and $\set{R}_{\textnormal{relay,bw}}$} \label{Sec:Relations_Kim}
 
{Consider the class of \emph{deterministic} state-dependent DMBCs where  the two outputs $Y_1$ and $Y_2$ are deterministic functions of the input $X$ and the random state $S$.  As in the previous Subsection we assume that the transmitter has strictly-causal state-information and both receivers have causal state-information.

 Kim, Chia, and El Gamal \cite{kimchia} evaluated the maximum sum-rate achieved by the Shayevitz-Wigger scheme with generalized feedback \cite{wigger} for deterministic state-dependent DMBCs {when the auxiliary random variables of the Shayevitz-Wigger scheme are restricted to one of the following two choices: }
\begin{itemize} 
\item Choice~1 (coded time-sharing):
\begin{subequations}\label{eq:firstchoiceT}
\begin{equation}
Q=\begin{cases} 
0 & \textnormal{w. p. }~~1-2p \\
1 &  \textnormal{w. p. }~~p\\
2 &  \textnormal{w. p. }~~ p
\end{cases},
\end{equation}
\begin{equation}
V_0=V_1=V_2= \begin{cases} \emptyset &   \textnormal{if } Q=0\\ 
Y_2& \textnormal{if } Q=1\\
Y_1 & \textnormal{if } Q=2
\end{cases},
\end{equation}
and 
\begin{equation}
X= \begin{cases} 
U_0 &   \textnormal{if } Q=0\\ 
U_1& \textnormal{if } Q=1\\
U_2 & \textnormal{if } Q=2
\end{cases}
\end{equation}
\end{subequations}
for $0\leq p\leq 0.5$ and some pmf $P_{U_0U_1U_2}=P_{U_0}P_{U_1}P_{U_2}$.  \\
\item Choice~2 (randomized  superposition coding): 
\begin{subequations}\label{eq:secondchoiceT}
\begin{equation}
Q=\begin{cases} 1 &  \textnormal{w. p.}~~1/2\\
2 &  \textnormal{w. p.} ~~1/2
\end{cases},
\end{equation}
\begin{equation}
V_0=V_1=V_2= \begin{cases}
Y_2& \textnormal{if } Q=1\\
Y_1 & \textnormal{if } Q=2
\end{cases},
\end{equation}
and 
\begin{equation}
X= \begin{cases} 
U_1& \textnormal{if } Q=1\\
U_2 & \textnormal{if } Q=2
\end{cases}
\end{equation}
\end{subequations}
for some  pmf  $P_{U_0U_1U_2}=P_{U_0}P_{U_1|U_0} P_{U_2|U_0}$.  
\end{itemize}
Choice~1 essentially results in a coded time-sharing scheme, and was shown \cite{kimchia} to be optimal for  some deterministic state-dependent DMBCs, e.g., the state-dependent broadcast erasure channels \cite{greek, maddah} and the finite field deterministic channel \cite{maddah}. Choice~2 results in a randomized superposition coding scheme and can achieve larger rate regions than Choice~1. For no state-dependent DMBC a better choice for the auxiliaries for the Shayevitz-Wigger region  is known.

Since Kim, Chia, and El Gamal's choices of auxiliaries in~\eqref{eq:firstchoiceT} and \eqref{eq:secondchoiceT} satisfy Condition~\eqref{eq:choice1}, by the discussion in the last paragraph of  the previous subsection~\ref{sec:state},  our new region $\set{R}_{\textnormal{proc.}}^{\textnormal{state}}$ recovers  the sum-rates obtained by Kim-Chia-El Gamal.
}
 
\subsection{The Maddah-Ali\&Tse DoF region and our regions $\set{R}_{\textnormal{proc.}}^{\textnormal{state}}$ and $\set{R}_{\textnormal{relay,bw}}$} \label{Sec:Relations_David}

As pointed out in \cite{kimchia}, the celebrated Maddah-Ali \& Tse scheme \cite{maddah} for the two-user i.i.d. fading BC with strictly causal state-information is a special case of the Shayevitz-Wigger scheme for generalized feedback, when the scheme is extended to real-valued alphabets and {specialized to the choice of auxiliaries in \eqref{eq:firstchoiceT}}. 
Since their choice also satisfies \eqref{eq:choice12}, by the discussion at the end of Section~\ref{sec:state}, their DoF result for the two-user i.i.d. fading BC is also included in our new achievable region $\set{R}_{\textnormal{proc.}}^{\textnormal{state}}$. The same holds for the capacity of Maddah-Ali \& Tse's deterministic approximation of the i.i.d. fading BC, which is a deterministic state-dependent DMBC.

Various works have improved and extended Maddah-Ali \& Tse's DoF result. For example, Yang, Kobayashi, Gesbert, and Yi \cite{YangGesbert} modified and improved Maddah-Ali \& Tse's scheme to apply to the more general setup where the transmitters also obtain imperfect (rate-limited) \emph{causal} state-information. They showed that, under some mild assumptions, their improved scheme achieves the optimal DoF region for arbitrary stationary and ergodic fading processes. In this sense, they could bridge the gap between Maddah-Ali \& Tse's setup  and a setup where the transmitter learns the state causally (and not only strictly causally).

Chen and Elia \cite{ChenEliaAxiv} proposed a coding scheme for an even more general setup where the transmitter accumulates state-information about each fading sample over time. They were able to determine the optimal DoF region under some mild assumptions. 

Both the Yang et al. scheme \cite{YangGesbert} and the Chen and Elia scheme  \cite{ChenEliaAxiv} are related to the Shayevitz-Wigger scheme with a choice of auxiliaries as in \eqref{eq:choice2}, and thus also to our new region $\set{R}_{\textnormal{proc.}}^{\textnormal{state}}$. See \cite[Chapter~4]{Youlong'thesis} for a more detailed discussion.

}

\subsection{Comparison with noisy network coding}
Our  coding schemes are reminiscent of the compress-and-forward relay strategy \cite{elgamalcover79} or the noisy network coding for general networks \cite{Lim'11, Kramer'13} in the sense that the two receivers compress their channel outputs and send the compression indices over the feedback links.  The operations at the transmitter are however very different from noisy network coding. On one hand, we use Marton coding  to send independent private and common messages to the two receivers. On the other hand,  the transmitter either reconstructs the quantized version of the receivers' outputs (our type-II scheme) or it simply relays the compression messages (our type-I schemes). In particular, as explained in Section~\ref{sec:ExNoisyFB}, when the feedback links are noisy, the transmitter first decodes the compression messages sent over the feedback links and then operates on these compression messages. In noisy network coding the transmitter would  compress and forward the observed noisy feedback signals, without attempting to recover the transmitted compression messages.

}

\section*{Acknowledgment}
The authors would like to thank R. Timo for helpful discussions. 

\appendices

\section{Analysis of Scheme~IA (Theorem~\ref{theosw})}\label{sec:analysis_relayb}
By the symmetry of our code construction, the probability of error does not depend on the realizations of $M_{c,i,b}$, $M_{p,i,b}$, $K_{i,b}$, $J_{i,b}$, $M_{\FB,i,b}$,  for $i\in\{1,2\}$ and $b\in\{1,\ldots, B\}$. To simplify exposition we  therefore assume that  $M_{c,i,b}=M_{p,i,b}=K_{i,b}=J_{i,b}=M_{\FB,i,b}=1$ for all $i\in\{1,2\}$ and $b\in\{1,\ldots, B\}$. 
Under this assumption, an error  occurs if, and only if,  for some $b\in \{1,\ldots,B\}$, 
\[( \hat{M}_{p,1, b}, \hat{M}_{p,2, b}, \hat{M}_{c,1,b}^{(1)}, \hat{M}_{c,2,b}^{(2)}) \neq (1,1,1,1).\]

Epsilon
For each $b\in\{1,\ldots, B\}$, let $\Epsilon_b$ denote the event that in our coding scheme  at least one of the following holds for $i\in\{1,2\}$: 
\begin{itemize}
\item $\hat{J}_{i,b-1} \neq 1$;
\item $\hat{K}_{i,b-1} \neq 1$; 
\item $\hat{M}_{\FB,i,b-1}\neq  1$;
\item $\hat{M}_{p,i,b-1} \neq  1$; 
\item $\hat{{M}}_{c,b}^{(i)} \neq (1,1)$;
\item There is no pair  $(k_{1,b},k_{2,b})\in\set{K}_1\times\set{K}_2$   that satisfies \begin{IEEEeqnarray*}{rCl}
 \big(&&U_{0,b}^n(\mathbf{1}_{[4]}), U_{1,b}^n(1,k_{1,b}|\mathbf{1}_{[4]}), U_{2,b}^n(1,k_{2,b}|\mathbf{1}_{[4]})\big)\nonumber\\
 &&\hspace{4cm}\in \set{T}_{\varepsilon/16}^{(n)}(P_{U_0U_1U_2})
 \end{IEEEeqnarray*}
  \item $\big(U^n_{0,b-1}(\mathbf{1}_{[4]}), U^n_{1,b-1}(1, 1|\mathbf{1}_{[4]}), U_{2,b-1}^n(1,1,|\mathbf{1}_{[4]}),$\\ $ Y_{1,b-1}^n, Y_{2,b-1}^n \big) \notin \set{T}_{\varepsilon/12}^{(n)}(P_{U_0U_1{U_2}Y_1Y_2})$
 \item There is no pair $({m}_{\FB,i,b}, j_{i,b})\in  \set{L}_i\times \set{J}_{i}$   that satisfies\[ \big( \tilde{Y}_{i,b}^n(m_{\FB,i,b}, j_{i,b}| \mathbf{1}_{[4]}),U_{0,b}^n(\mathbf{1}_{[4]}), Y_{i,b}^n\big)\in \set{T}_{\varepsilon/4}^{(n)}(P_{\tilde{Y}_iU_0Y_i}).\]
\end{itemize}
Then, 
\begin{equation}\label{eq:proberror1b}
P_e^{(N)} \leq  \Prv{ \bigcup_{b=1}^{B+1} \Epsilon_b}  \leq {\sum_{b=2}^{B+1} \Prv{\Epsilon_b | \Epsilon_{b-1}^{c}} + \Prv{\Epsilon_{1}}}.
\end{equation}
In the following we analyze the probabilities of these events averaged over the random code construction. In particular, we shall identify conditions such that  for each {$b\in\{2,\ldots, B+1\}$}, the probability {$\Prv{\Epsilon_b | \Epsilon_{b-1}^{c}}$} tends to 0 as $n\to \infty$. Similar arguments can be used to show that under the same conditions also  {$\Prv{\Epsilon_{1}} \to 0$} as $n\to \infty$. Using standard arguments one can then conclude that there must exist a deterministic code for which the probability of error $P_e^{(N)}$ tends to 0 as $N\to \infty$ when the mentioned conditions are satisfied.

Fix {$b\in\{2,\ldots, B+1\}$} and $\varepsilon>0$, and define 
the following events. 
\begin{itemize}
\item Let $\Epsilon_{0,b}$ be the event that there is no pair  $(k_{1,b},k_{2,b})\in\set{K}_1\times\set{K}_2$  that satisfies
 \begin{IEEEeqnarray*}{rCl}
 \big(&&U_{0,b}^n(\mathbf{1}_{[4]}), U_{1,b}^n(1,k_{1,b}|\mathbf{1}_{[4]}), U_{2,b}^n(1,k_{2,b}|\mathbf{1}_{[4]})\big)\nonumber\\
 &&\quad\quad\in \set{T}_{\varepsilon/16}^{(n)}(P_{U_0U_1U_2}).
 \end{IEEEeqnarray*}
 {By the Covering Lemma \cite{book:gamal}, $\Pr(\Epsilon_{0,b})$ tends to 0 as $n\to \infty$ if}
 \begin{IEEEeqnarray}{rCl} \label{conMarton1b}
R'_1+R'_2 \geq I(U_1;U_2|U_0)+\delta(\varepsilon),
\end{IEEEeqnarray}
where throughout this section $\delta(\varepsilon)$ stands for some function that tends to 0 as $\varepsilon\to 0$.

\item Let $\Epsilon_{1,b}$ be the event that 
\begin{IEEEeqnarray*} {rCl}\lefteqn{
\big(U^n_{0,b}(\mathbf{1}_{[4]}), U^n_{1,b}(1, 1|\mathbf{1}_{[4]}), U_{2,b}^n(1,1,|\mathbf{1}_{[4]}), Y_{1,b}^n, Y_{2,b}^n \big) }\qquad \nonumber \\&&\hspace{4cm}\notin \set{T}_{\varepsilon/12}^{(n)}(P_{U_0U_1{U_2}Y_1Y_2}).
\end{IEEEeqnarray*}
{Since the channel is memoryless, by  the law of large numbers, $\Pr(\Epsilon_{1,b}|\Epsilon^{c}_{0,b})$ tends to 0 as $n\to \infty$.}

\item   Let $\Epsilon_{2,1,b}$ be the event that there is no tuple $({\hat{{m}}^{(1)}_{c,b}}, \hat{m}_{\FB,2,b-1})\in  \set{M}_c\times \set{L}_{2}$ that is not equal to $(\mathbf{1}_{[2]},1)$ and  that satisfies
\begin{IEEEeqnarray*} {rCl}
\big( U_{0,b}^n({\hat{{m}}^{(1)}_{c,b}}, 1, \hat{m}_{\FB,2,b-1}),Y_{1,b}^n\big)\in \set{T}_{\varepsilon/8}^{(n)}(P_{U_0Y_1}).
\end{IEEEeqnarray*}
{By the {Packing} Lemma, $\Pr(\Epsilon_{2,1,b}|\Epsilon^{c}_{1,b})$ tends to 0 as $n\to \infty$, if}
\begin{IEEEeqnarray}{rCl} 
\tilde{R}_2 + R_{c,1}+R_{c,2} \leq I(U_0;Y_1)+\delta(\varepsilon).
\end{IEEEeqnarray}

\item   Let $\Epsilon_{2,2,b}$ be the event that there is no tuple $({\hat{{m}}^{(2)}_{c,b}}, \hat{m}_{\FB,1,b-1})\in  \set{M}_c\times \set{L}_{1}$ with $({\hat{{m}}^{(2)}_{c,b}}, \hat{m}_{\FB,1,b-1})$ not equal to $(\mathbf{1}_{[2]},1)$  that satisfies
\begin{IEEEeqnarray*} {rCl}
\big( U_{0,b}^n({\hat{{m}}^{(2)}_{c,b}}, \hat{m}_{\FB,1,b-1},1),Y_{2,b}^n\big)\in \set{T}_{\varepsilon/8}^{(n)}(P_{U_0Y_2}).
\end{IEEEeqnarray*}
{By  the Packing Lemma, $\Pr(\Epsilon_{2,2,b}|\Epsilon^{c}_{1,b})$ tends to 0 as $n\to \infty$, if}
\begin{IEEEeqnarray}{rCl} 
\tilde{R}_1 + R_{c,1}+R_{c,2} \leq I(U_0;Y_2)+\delta(\varepsilon).
\end{IEEEeqnarray}

\item Let $\Epsilon_{3,1,b}$ be the event that 
\begin{IEEEeqnarray*}{rCl}
&&\big(U^n_{0,b-1}(\mathbf{1}_{[4]}), U^n_{1,b-1}(1, 1|\mathbf{1}_{[4]}),\nonumber\\&& \qquad\tilde{Y}_{2,b-1}^n(1,1), Y_{1,b-1}^n\big)\notin \set{T}_{{\varepsilon/2}}^{(n)}(P_{U_0U_1\tilde{Y}_2Y_1}).
\end{IEEEeqnarray*}
{By the Markov Lemma \cite{book:gamal}, $\Pr(\Epsilon_{3,1,b}|  \Epsilon^{c}_{b-1})$ tends to 0 as $n\to \infty$.}

\item Let $\Epsilon_{3,2,b}$ be the event that 
\begin{IEEEeqnarray*}{rCl}
&&\big(U^n_{0,b-1}(\mathbf{1}_{[4]}), U^n_{2,b-1}(1, 1|\mathbf{1}_{[4]}), \nonumber\\&&\qquad\tilde{Y}_{1,b-1}^n(1,1), Y_{2,b-1}^n\big)\notin \set{T}_{{\varepsilon/2}}^{(n)}(P_{U_0U_2\tilde{Y}_1Y_2}).
\end{IEEEeqnarray*}
 {By the Markov Lemma, $\Pr(\Epsilon_{3,2,b}| \Epsilon^{c}_{b-1})$ tends to 0 as $n\to \infty$. }

\item Let $\Epsilon_{4,1,b}$ be the event that there exists a tuple  $(\hat{m}_{p,1,b-1},\hat{k}_{1,b-1}, \hat{j}_{2,b-1}) \in\set{M}_{p,1}\times\set{K}_1\times\set{J}_2$ not equal to the all-one tuple  and  that satisfies
\begin{IEEEeqnarray*}{rCl}
&&\big(U^n_{0,b-1}(\mathbf{1}_{[4]}),U^n_{1,b-1}(\hat{m}_{p,1,b-1},\hat{k}_{1,b-1}|\mathbf{1}_{[4]}),\nonumber\\
&& ~\quad \tilde{Y}_{2,b-1}^n(1,\hat{j}_{2,b-1}|\mathbf{1}_{[4]}), Y_{1,b-1}^n\big)\in \set{T}_{\varepsilon}^{(n)}(P_{U_0U_1\tilde{Y}_2Y_1}).
\end{IEEEeqnarray*}
{By the Packing Lemma, $\Pr(\Epsilon_{4,1,b}|\Epsilon_{3,1,b}^c)$ tends to zero as $n\to \infty$, if}
\begin{IEEEeqnarray}{rCl}
\hat{R}_2& \leq& I(\tilde{Y}_2;U_1,Y_1|U_0)-\delta(\varepsilon)\\
R_{p,1}+R'_1&\leq& I(U_1;Y_1,\tilde{Y}_2|U_0)-\delta(\varepsilon)\\
R_{p,1}+R'_1+\hat{R}_2&\leq &I(U_1;Y_1,\tilde{Y}_2|U_0)\nonumber\\
&&+I(\tilde{Y}_2;Y_1|U_0)-\delta(\varepsilon).\label{error61np1b}
\end{IEEEeqnarray}

\item Let $\Epsilon_{4,2,b}$ be the event that there exists a tuple  $(\hat{m}_{p,2,b-1},\hat{k}_{2,b-1}, \hat{j}_{1,b-1}) \in\set{M}_{p,2}\times\set{K}_2\times\set{J}_1$ not equal to the all-one tuple  and  that satisfies
\begin{IEEEeqnarray*}{rCl}
&&\big(U^n_{0,b-1}(\mathbf{1}_{[4]}),U^n_{2,b-1}(\hat{m}_{p,2,b-1},\hat{k}_{2,b-1}|\mathbf{1}_{[4]}),\nonumber\\
&& \quad \tilde{Y}_{1,b-1}^n(1,\hat{j}_{1,b-1}|\mathbf{1}_{[4]}), Y_{2,b-1}^n\big)\in \set{T}_{{\varepsilon}}^{(n)}(P_{U_0U_2\tilde{Y}_1Y_2}).
\end{IEEEeqnarray*}
 {By  the Packing Lemma, $\Pr(\Epsilon_{4,2,b}|\Epsilon_{3,2,b}^c)$ tends to zero as $n\to \infty$, if}
\begin{IEEEeqnarray}{rCl}
\hat{R}_1& \leq& I(\tilde{Y}_1;U_2,Y_2|U_0)-\delta(\varepsilon)\\
R_{p,2}+R'_2&\leq& I(U_2;Y_2,\tilde{Y}_1|U_0)-\delta(\varepsilon)\quad \\
R_{p,2}+R'_2+\hat{R}_1&\leq &I(U_2;Y_2,\tilde{Y}_1|U_0)\nonumber\\
&&+I(\tilde{Y}_1;Y_2|U_0)-\delta(\varepsilon).\label{error62np1b}
\end{IEEEeqnarray}

\item For $i\in\{1,2\}$, let $\Epsilon_{5,i,b}$ be the event that there is no pair $({m}_{\FB,i,b}, j_{i,b})\in  \set{L}_i\times \set{J}_{i}$   that satisfies
\begin{IEEEeqnarray*} {rCl}
\big( \tilde{Y}_{i,b}^n(m_{\FB,i,b}, j_{i,b}| \mathbf{1}_{[4]}),U_{0,b}^n(\mathbf{1}_{[4]}), Y_{i,b}^n\big)\in \set{T}_{{\varepsilon/4}}^{(n)}(P_{\tilde{Y}_iU_0Y_i}).
\end{IEEEeqnarray*}
 {By   the Covering Lemma, $\Pr(\Epsilon_{5,i,b}|\Epsilon^{c}_{1,b})$ tends to 0 as $n\to \infty$, if}
\begin{IEEEeqnarray}{rCl}\label{eq:fbconSLWdm}
\tilde{R}_i +\hat{R}_i \geq I(\tilde{Y}_i;Y_i|U_0)+\delta(\varepsilon).
\end{IEEEeqnarray}

\end{itemize}
{Whenever the event $\Epsilon_{b-1}^{c}$ occurs but none of the  events $\{ \Epsilon_{0,b}, \Epsilon_{1,b}, \Epsilon_{2,i,b}, \Epsilon_{3,i,b}, \Epsilon_{3,i,b}, \Epsilon_{4,i,b},\Epsilon_{5,i,b}\}$ above, for $i=1,2$, then $\Epsilon^c_b$}. Therefore,
{\begin{IEEEeqnarray*}{rCl}
\lefteqn{
\Prv{\Epsilon_b|\Epsilon_{b-1}^c} }\nonumber \\& \leq& \textnormal{Pr}\Big[  \Epsilon_{0,b} \!\cup \!\Epsilon_{1,b}\! \cup \!\bigcup_{i=1}^2 \Big( \Epsilon_{2,i,b} \!\cup\! \Epsilon_{3,i,b}\cup \Epsilon_{4,i,b}\cup \Epsilon_{5,i,b}\Big)\Big|\Epsilon_{b-1}^c\Big] \nonumber \\
& \leq& \Prv{ \Epsilon_{0,b}|\Epsilon^c_{b-1}}+\Prv{\Epsilon_{1,b} |\Epsilon_{0,b}^c,\Epsilon_{b-1}^c}  \nonumber\\&& \quad+ \sum_{i=1}^2\Big(  \Prv{\Epsilon_{2,i,b} |\Epsilon_{1,b}^c,\Epsilon_{b-1}^c} \!+\! \Prv{ \Epsilon_{3,i,b}| \Epsilon^{c}_{b\!-\!1} }  \nonumber\\&& \quad +\Prv{ \Epsilon_{4,i,b}|\Epsilon_{3,i,b}^c, \Epsilon^{c}_{b-1}} \! +\!\Prv{ \Epsilon_{5,i,b}|\Epsilon^{c}_{1,b},\Epsilon_{b-1}^c} \Big)\nonumber\\
& =& \Prv{ \Epsilon_{0,b}}+\Prv{\Epsilon_{1,b} |\Epsilon_{0,b}^c} \nonumber\\&& \quad+ \sum_{i=1}^2\Big(  \Prv{\Epsilon_{2,i,b} |\Epsilon_{1,b}^c} \!+\! \Prv{ \Epsilon_{3,i,b}| \Epsilon^{c}_{b-1} }  \nonumber\\&& \quad\qquad +\Prv{ \Epsilon_{4,i,b}|\Epsilon_{3,i,b}^c} \! +\!\Prv{ \Epsilon_{5,i,b}|\Epsilon^{c}_{1,b}} \Big). 
\label{eq:probe1b}
\end{IEEEeqnarray*}}
{The last equality holds because the channel is memoryless and the codebooks employed in blocks $b-1$ and $b$ are drawn independently. 
As explained in the previous paragraphs, the remaining terms in the last three lines tend to 0 as $n\to \infty$, if Constraints~\eqref{conMarton1b}--\eqref{eq:fbconSLWdm} are satisfied. }
Thus, by~\eqref{eq:proberror1b} and \eqref{eq:probe1b} we conclude that the probability of error $P_e^{(N)}$ (averaged over all code constructions) vanishes as $n\to\infty$ if Constraints~\eqref{conMarton1b}--\eqref{eq:fbconSLWdm} hold. Letting $\varepsilon \to 0$, 
 we obtain that the probability of  error can be made to tend to 0 as $n\to \infty$ whenever
 \begin{subequations}\label{method1b}
\begin{IEEEeqnarray}{rCl}
R'_1+R'_2 &>& I(U_1;U_2|U_0) \\
\tilde{R}_2 \!+\! R_{c,1}\!+\!R_{c,2} &<& I(U_0;Y_1)\\
\tilde{R}_1 \!+\! R_{c,1}\!+\!R_{c,2} &< &I(U_0;Y_2)\\
\hat{R}_1 & < & I(\tilde{Y}_1;U_2,Y_2|U_0)\\
\hat{R}_2 & < & I(\tilde{Y}_2;U_1,Y_1|U_0)\\
R_{p,1}+R_1'&<&I(U_1;Y_1,\tilde{Y}_2|U_0)\\
R_{p,2}+R_2'&>&I(U_2;Y_2,\tilde{Y}_1|U_0)\\
R_{p,1}\!+\!R'_1\!+\!\hat{R}_2&< &I(U_1;Y_1,\tilde{Y}_2|U_0)\!+\!I(\tilde{Y}_2;Y_1|U_0)~\\
R_{p,2}\!+\!R'_2\!+\!\hat{R}_1&< &I(U_2;Y_2,\tilde{Y}_1|U_0)\!+\!I(\tilde{Y}_1;Y_2|U_0)~\\
\hat{R}_1+\tilde{R}_1& > &I(\tilde{Y}_1;Y_1|U_0)\\
\hat{R}_2+\tilde{R}_2 &>& I(\tilde{Y}_2;Y_2|U_0).
\end{IEEEeqnarray}
Moreover, the feedback-rate constraints~\eqref{consFB0} impose that:
\begin{IEEEeqnarray}{rCl}
\tilde{R}_1 & \leq & R_{\FB,1}\\
\tilde{R}_2 & \leq & R_{\FB,2}.
\end{IEEEeqnarray}
\end{subequations}
Applying the Fourier-Motzkin elimination algorithm to these constraints, we obtain the desired result in Theorem~\ref{theosw} with the additional constraint that 
\begin{IEEEeqnarray}{rCl}\label{eq:Martonextra1}
\Delta_1+\Delta_2 -I(U_1;U_2|U_0) &\geq& 0 
\IEEEeqnarraynumspace
\end{IEEEeqnarray}
Notice that we can ignore  Constraint \eqref{eq:Martonextra1} because for any tuple $(U_0,U_1,U_2, X, {Y}_1,Y_2, \tilde{Y}_1,\tilde{Y}_2)$ that violates~\eqref{eq:Martonextra1}, the region defined by the constraints in Theorem~\ref{theosw} is contained in the time-sharing region.  


\section{Analysis of the  Scheme~IB (Theorem~\ref{theobw})}\label{sec:analysis_relay}
An error occurs whenever
\[\hat{M}_{1,b}\neq M_{1,b}~\text{or}~ \hat{M}_{2,b}\neq M_{2,b},~\text{for some}~ b\in \{1,\ldots,B\}.\]
For each $b\in\{1,\ldots, B+1\}$, let $\Epsilon_b$ denote the event that in our coding scheme  at least one of the following holds for $i\in\{1,2\}$: 
\begin{IEEEeqnarray}{rCl}
\hat{J}_{i,b} \neq J_{i,b}\\
\hat{K}_{i,b} \neq K_{i,b}\\
\hat{M}_{\FB,i,b-1} \neq M_{\FB,i,b-1}\\
{\hat{M}_{p,i,b} \neq  M_{p,i,b}}\\
{\hat{{M}}^{(i)}_{c,b} \neq {M}^{(i)}_{c,b}}.
\end{IEEEeqnarray}
Then, 
\begin{equation}\label{eq:proberror}
P_e^{(N)} \leq  \Prv{ \bigcup_{b=1}^{B+1} \Epsilon_b}  \leq \sum_{b=1}^B \Prv{\Epsilon_b | \Epsilon_{b+1}^{c}} + \Prv{\Epsilon_{B+1}}.
\end{equation}
In the following we analyze the probabilities of these events averaged over the random code construction. In particular, we shall identify conditions such that  for each $b\in\{1,\ldots, B\}$, the probability $\Prv{\Epsilon_b | \Epsilon_{b+1}^{c}}$ tends to 0 as $n\to \infty$. Similar arguments can be used to show that under the same conditions also  $\Prv{\Epsilon_{B+1}} \to 0$ as $n\to \infty$. Using standard arguments one can then conclude that there must exist a deterministic code for which the probability of error $P_e^{(N)}$ tends to 0 as $N\to \infty$ when the mentioned conditions are satisfied.

Fix $b\in\{1,\ldots, B\}$ and $\varepsilon>0$. By the symmetry of our code construction, the probability $\Prv{\Epsilon_b | \Epsilon_{b+1}^{c}}$ does not depend on the realization of $M_{c,i,b}$, $M_{p,i,b}$, $K_{i,b}$, $J_{i,b}$, $M_{\FB,i,b}$, $M_{\FB,i,b-1}$, for $i\in\{1,2\}$. To simplify exposition we  therefore assume that  $M_{c,i,b}=M_{p,i,b}=K_{i,b}=J_{i,b}=M_{\FB,i,b}=M_{\FB,i,b-1}=1$. 

Define  the following events. 
\begin{itemize}
\item Let $\Epsilon_{0,b}$ be the event that there is no pair  $(k_{1,b},k_{2,b})\in\set{K}_1\times\set{K}_2$  that satisfies
 \begin{IEEEeqnarray*}{rCl}
 \big(&&U_{0,b}(\mathbf{1}_{[4]}), U_{1,b}^n(1,k_{1,b}|\mathbf{1}_{[4]}), U_{2,b}^n(1,k_{2,b}|\mathbf{1}_{[4]})\big)\nonumber\\
 &&\quad\quad\in \set{T}_{\varepsilon/16}^{(n)}(P_{U_0U_1U_2}).
 \end{IEEEeqnarray*}
 By  the Covering Lemma, $\Pr(\Epsilon_{0,b})$ tends to 0 as $n\to \infty$, if
 \begin{IEEEeqnarray}{rCl} \label{conMarton}
R'_1+R'_2 \geq I(U_1;U_2|U_0)+\delta(\varepsilon),
\end{IEEEeqnarray}
where throughout this section $\delta(\varepsilon)$ stands for some function that tends to 0 as $\varepsilon\to 0$.

\item Let $\varepsilon_{1,b}$ be the event that 
\begin{IEEEeqnarray*} {rCl}\lefteqn{
\big(U^n_{0,b}(\mathbf{1}_{[4]}), U^n_{1,b}(1, 1|\mathbf{1}_{[4]}, U_{2,b}^n(1,1,|\mathbf{1}_{[4]}), Y_{1,b}^n, Y_{2,b}^n \big) }\qquad \nonumber \\&&\hspace{4cm}\notin \set{T}_{\varepsilon/8}^{(n)}(P_{U_0U_1Y_2Y_1Y_2}).
\end{IEEEeqnarray*}
Since the channel is memoryless, according to the law of large numbers, $\Pr(\Epsilon_{1,b}|\Epsilon^{c}_{0,b})$ tends to 0 as $n\to \infty$.
%

\item For $i\in\{1,2\}$, let $\Epsilon_{2,i,b}$ be the event that there is no pair $({m}_{\FB,i,b}, j_{i,b})\in  \set{L}_i\times \set{J}_{i}$   that satisfies
\begin{IEEEeqnarray*} {rCl}
\big( \tilde{Y}_{i,b}^n(m_{\FB,i,b}, j_{i,b}),Y_{i,b}^n\big)\in \set{T}_{\varepsilon/4}^{(n)}(P_{\tilde{Y}_iY_i}).
\end{IEEEeqnarray*}
 By the Covering Lemma, $\Pr(\Epsilon_{2,i,b}|\Epsilon^{c}_{1,b})$ tends to 0 as $n\to \infty$ if
\begin{IEEEeqnarray}{rCl} \label{conR1fb}
\tilde{R}_i +\hat{R}_i \geq I(\tilde{Y}_i;Y_i)+\delta(\varepsilon).
\end{IEEEeqnarray}

\item Let $\Epsilon_{3,1,b}$ be the event that 
\begin{IEEEeqnarray*}{rCl}
\big(U^n_{0,b}(\mathbf{1}_{[4]}), &&U^n_{1,b}(1, 1|\mathbf{1}_{[4]}),\nonumber\\&& \tilde{Y}_{2,b}^n(1,1), Y_{1,b}^n\big)\notin \set{T}_{3\varepsilon/4}^{(n)}(P_{U_0U_1\tilde{Y}_2Y_1}).
\end{IEEEeqnarray*}
 By the Markov Lemma, $\Pr(\Epsilon_{3,1,b}|\Epsilon_{2,2,b}^c, \Epsilon_{1,b}^c)$ tends to 0 as $n\to \infty$.

\item Let $\Epsilon_{3,2,b}$ be the event that 
\begin{IEEEeqnarray*}{rCl}
\big(U^n_{0,b}(\mathbf{1}_{[4]}), &&U^n_{2,b}(1, 1|\mathbf{1}_{[4]}), \nonumber\\&&\tilde{Y}_{1,b}^n(1,1), Y_{2,b}^n\big)\notin \set{T}_{3\varepsilon/4}^{(n)}(P_{U_0U_2\tilde{Y}_1Y_2}).
\end{IEEEeqnarray*}
 By the Markov Lemma, $\Pr(\Epsilon_{3,2,b}|\Epsilon_{2,1,b}^c, \Epsilon_{1,b}^c)$ tends to 0 as $n\to \infty$.

\item Let $\Epsilon_{4,1,b}$ be the event that 
there exists a tuple  $(\hat{j}_{2,b},{{\hat{m}}^{(1)}_{c,b}},\hat{m}_{\FB,2,b-1},\hat{m}_{p,1,b},\hat{k}_{1,b})\in \set{J}_2\times\set{M}_c\times \set{L}_{2}\times \set{M}_{p,1}\times\set{K}_1$ not equal to the all-one tuple $(1, \mathbf{1}_{[2]}, 1, 1,1)$  and that satisfies
\begin{IEEEeqnarray*}{rCl}\lefteqn{
\Big(U^n_{0,b}({{\hat{m}}^{(1)}_{c,b}}, 1, \hat{m}_{\FB,2,b-1}),}\qquad \\ 
&&U^n_{1,b}(\hat{m}_{p,1,b},\hat{k}_{1,b}|{{\hat{m}}^{(1)}_{c,b}}, 1, \hat{m}_{\FB,2,b-1}),\nonumber\\
&&\hspace{1cm} \tilde{Y}_{2,b}^n(1,\hat{j}_{2,b}), Y_{1,b}^n\Big)\in \set{T}_\varepsilon^{(n)}(P_{U_0U_1\tilde{Y}_2Y_1}).
\end{IEEEeqnarray*}
By the Packing Lemma, we conclude that $\Pr(\Epsilon_{4,1,b}|\Epsilon_{3,1,b}^c)$ tends to zero as $n\to \infty$ if
\begin{IEEEeqnarray}{rCl}
&&\hat{R}_2\leq I(U_0,U_1,Y_1;\tilde{Y}_2|U_0)\!-\!\delta(\varepsilon)\nonumber\\
&&R_{p,1}+R'_1\leq I(U_1;Y_1,\tilde{Y}_2|U_0)-\delta(\varepsilon)\nonumber\\
&&R_{1}+R_{c,2}+\tilde{R}_2+R'_1\leq I(U_0,U_1;Y_1, \tilde{Y}_2)-\delta(\varepsilon)\nonumber\IEEEeqnarraynumspace\\
&&R_{1}\!+\!R_{c,2}\!+\!\tilde{R}_2\!+\!R'_1\!+\!\hat{R}_2\leq I(U_0,U_1;Y_1,\tilde{Y}_2)\nonumber\\
&&\qquad\qquad\qquad\quad\qquad\qquad+I(Y_1;\tilde{Y}_2)-\delta(\varepsilon)\nonumber\\
&&R_{p,1}+R'_1+\hat{R}_2\leq I(U_1;Y_1,\tilde{Y}_2|U_0)\nonumber\\
&&\qquad\qquad\qquad\quad\qquad\qquad+I(\tilde{Y}_2;Y_1,U_0)\!-\!\delta(\varepsilon).
\end{IEEEeqnarray}

\item Let $\Epsilon_{4,2,b}$ be the event that there exists a tuple  $(\hat{j}_{1,b},{{\hat{m}}^{(2)}_{c,b}},\hat{m}_{\FB,1,b-1},\hat{m}_{p,2,b},\hat{k}_{2,b})\in\set{J}_1\times \set{M}_c\times \set{L}_{1}\times \set{M}_{p,2}\times\set{K}_2$ not equal to the all-one tuple  and  that satisfies
\begin{IEEEeqnarray*}{rCl}\lefteqn{
\big(U^n_{0,b}({{\hat{m}}^{(2)}_{c,b}}, \hat{m}_{\FB,1, b-1}, 1),}\qquad \\ 
&&U^n_{1,b}(\hat{m}_{p,2,b},\hat{k}_{2,b}|{{\hat{m}}^{(2)}_{c,b}}, \hat{m}_{\FB,1, b-1}, 1),\nonumber\\
&&\hspace{1cm} \tilde{Y}_{1,b}^n(1,\hat{j}_{1,b}), Y_{2,b}^n\big)\in \set{T}_\varepsilon^{(n)}(P_{U_0U_2\tilde{Y}_1Y_2}).
\end{IEEEeqnarray*}
By the Packing Lemma, we conclude that $\Pr(\Epsilon_{4,2,b}|\Epsilon_{3,2,b}^c)$ tends to zero as $n\to \infty$ if
\begin{IEEEeqnarray}{rCl}
&&\hat{R}_1\leq I(U_0,U_2,Y_2;\tilde{Y}_1|U_0)\!-\!\delta(\varepsilon)\nonumber\\
&&R_{p,2}+R'_2\leq I(U_2;Y_2,\tilde{Y}_1|U_0)-\delta(\varepsilon)\nonumber\\
&&R_{2}+R_{c,1}+\tilde{R}_1+R'_2\leq I(U_0,U_2;Y_2, \tilde{Y}_1)-\delta(\varepsilon)\nonumber\\
&&R_{2}\!+\!R_{c,1}\!+\!\tilde{R}_1\!+\!R'_2\!+\!\hat{R}_1\leq I(U_0,U_2;Y_2,\tilde{Y}_1)\nonumber\\
&&\qquad\qquad\qquad\quad\qquad\qquad+I(Y_2;\tilde{Y}_1)-\delta(\varepsilon)\nonumber\\
&&R_{p,2}+R'_2+\hat{R}_1\leq I(U_2;Y_2,\tilde{Y}_1|U_0)\nonumber\\
&&\qquad\qquad\qquad\qquad\quad+I(\tilde{Y}_1;Y_2,U_0)\!-\!\delta(\varepsilon).\label{eq:e42b}
\end{IEEEeqnarray}

\end{itemize}
Whenever the event $\Epsilon_{b+1}^{c}$ occurs but none of the  events above, then $\Epsilon^c_b$. Therefore,
\begin{IEEEeqnarray}{rCl}
\lefteqn{
\Prv{\Epsilon_b|\Epsilon_{b+1}^c} }\nonumber \\& \leq&  \Prv{ \Epsilon_{0,b} \cup \Epsilon_{1,b} \cup\bigcup_{i=1}^2 \Big( \Epsilon_{2,i,b} \cup \Epsilon_{3,i,b}\cup \Epsilon_{4,i,b}\Big)\Big|\Epsilon_{b+1}^c} \nonumber \\
& \leq& \Prv{ \Epsilon_{0,b}|\Epsilon_{b+1}^c}+\Prv{ \Epsilon_{3,1,b}|\Epsilon_{1,b}^c, \!\Epsilon_{2,2,b}^c,\!\Epsilon_{b+1}^c}   \nonumber \\ &&  \quad+  \Prv{\Epsilon_{1,b} |\Epsilon_{0,b}^c,\Epsilon_{b+1}^c} \!+\!\Prv{ \Epsilon_{3,2,b}|\Epsilon_{1,b}^c,\! \Epsilon_{2,1,b}^c,\!\Epsilon_{b+1}^c } \nonumber \\ &&  \quad +\sum_{i=1}^2\Big(  \Prv{\Epsilon_{2,i,b} |\Epsilon_{1,b}^c ,\Epsilon_{b+1}^c}\! +\!\Prv{ \Epsilon_{4,i,b}|\Epsilon_{3,i,b}^c ,\Epsilon_{b+1}^c }  \Big) 
\nonumber \\
& = &  \Prv{ \Epsilon_{0,b}}+\Prv{\Epsilon_{1,b} |\Epsilon_{0,b}^c}   \nonumber \\ &&  \quad + \Prv{ \Epsilon_{3,1,b}|\Epsilon_{1,b}^c, \Epsilon_{2,2,b}^c}  +\Prv{ \Epsilon_{3,2,b}|\Epsilon_{1,b}^c, \Epsilon_{2,1,b}^c } \nonumber \\ &&  \quad +\sum_{i=1}^2\Big(  \Prv{\Epsilon_{2,i,b} |\Epsilon_{1,b}^c } +\Prv{ \Epsilon_{4,i,b}|\Epsilon_{3,i,b}^c}  \Big), \label{eq:probe}\IEEEeqnarraynumspace
\end{IEEEeqnarray} 
where the last equality follows because the channel is memoryless and the codebooks for blocks $b$ and $b+1$ have been generated independently.
As explained in the previous paragraphs, each of the terms in the last three lines tends to 0 as $n\to \infty$, if Constraints~\eqref{conMarton}--\eqref{eq:e42b} are satisfied. 
Thus, by~\eqref{eq:proberror} and \eqref{eq:probe} we conclude that the probability of error $P_e^{(N)}$ (averaged over all code constructions) vanishes as $n\to\infty$ if constraints~\eqref{conMarton}--\eqref{eq:e42b} hold. Letting $\varepsilon \to 0$, 
 we obtain that the probability of  error can be made to tend to 0 as $n\to \infty$ whenever
 \begin{subequations}\label{method2}
\begin{IEEEeqnarray}{rCl}
R'_1+R'_2 &>& I(U_1;U_2|U_0) \\
\hat{R}_1+\tilde{R}_1& > &I(\tilde{Y}_1;Y_1)\\
\hat{R}_2+\tilde{R}_2 &>& I(\tilde{Y}_2;Y_2)\\
\hat{R}_1&<& I(U_0,U_2,Y_2;\tilde{Y}_1|U_0)\\
\hat{R}_2 & < & I(U_0,U_1,Y_1;\tilde{Y}_2|U_0)\qquad\\
R_{p,1}+R_1'&<&I(U_1;Y_1,\tilde{Y}_2|U_0)\\
R_{p,2}+R_2'&<&I(U_2;Y_2,\tilde{Y}_1|U_0)\\
R_{1}+R_{c,2}+\tilde{R}_2+R'_1
&<& I(U_0,U_1;Y_1,\tilde{Y}_2)\\
R_{2}+R_{c,1}+\tilde{R}_1+R'_2
&<& I(U_0,U_2;Y_2,\tilde{Y}_1)\\
R_{1}\!+\!R_{c,2}\!+\!\tilde{R}_2\!+\!R'_1\!+\!\hat{R}_2&<& I(U_0,U_1;Y_1,\tilde{Y}_2)\nonumber\\
&&~+I(Y_1;\tilde{Y}_2) \\
R_{2}\!+\!R_{c,1}\!+\!\tilde{R}_1\!+\!R'_2\!+\!\hat{R}_1&<& I(U_0,U_2;Y_2,\tilde{Y}_1)\nonumber\\
&&~+I(Y_2;\tilde{Y}_1)\\
R_{p,1}+R'_1+\hat{R}_2&<& I(U_1;Y_1,\tilde{Y}_2|U_0)\nonumber\\
&&~+I(\tilde{Y}_2;Y_1,U_0)\\
R_{p,2}+R'_2+\hat{R}_1&<& I(U_2;Y_2,\tilde{Y}_1|U_0)\nonumber\\
&&~+I(\tilde{Y}_1;Y_2,U_0).
\end{IEEEeqnarray}
Moreover, the feedback-rate constraints~\eqref{consFB0} impose that:
\begin{IEEEeqnarray}{rCl}
\tilde{R}_1 & \leq & R_{\FB,1}\\
\tilde{R}_2 & \leq & R_{\FB,2}.
\end{IEEEeqnarray}
\end{subequations}
Applying the Fourier-Motzkin elimination algorithm to these constraints, we obtain the desired result in Theorem~\ref{theobw} with the additional constraint that 
\begin{subequations}\label{eq:Martonextra2}
\begin{IEEEeqnarray}{rCl}
\Delta_1 &\geq& 0\label{eq:Martonextra2c}\\
\Delta_2 &\geq& 0 \label{eq:Martonextra2b}\\
\Delta_1+\Delta_2  &\geq& I(U_1;U_2|U_0). \label{eq:Martonextra2a}
\IEEEeqnarraynumspace
\end{IEEEeqnarray}
\end{subequations}
We can ignore  Constraint \eqref{eq:Martonextra2a} because for any tuple $(U_0,U_1,U_2, X, {Y}_1,Y_2, \tilde{Y}_1,\tilde{Y}_2)$ that violates~\eqref{eq:Martonextra2a}, the region defined by the constraints in Theorem~\ref{theobw} is contained in the time-sharing region. Constraint \eqref{eq:Martonextra2c}   can  also be ignored because for any tuple $(U_0,U_1,U_2, X, {Y}_1,Y_2, \tilde{Y}_1,\tilde{Y}_2)$ that violates~\eqref{eq:Martonextra2c}, the region defined by the constraints in Theorem~\ref{theobw} is contained in the region in Theorem~\ref{theobw}  for the choice $\tilde{Y}_2=\textnormal{const.}$, for which \eqref{eq:Martonextra2c} is always satisfied. Constraint \eqref{eq:Martonextra2b} can be ignored by analogous arguments. 

\section{Analysis of   Scheme~II (Theorem~\ref{theo5})}\label{sec:analysis_proc}

An error occurs whenever
\[\hat{M}_{1,b}\neq M_{1,b}~\text{or}~ \hat{M}_{2,b}\neq M_{2,b},~\text{for some}~ b\in \{1,\ldots,B\}.\]
For each $b\in\{1,\ldots, B+1\}$, let $\Epsilon_b$ denote the event that in our coding scheme  at least one of the following holds for $i\in\{1,2\}$: 
\begin{IEEEeqnarray}{rCl}
\hat{J}_{i,b} \neq J_{i,b}\\
\hat{K}_{i,b} \neq K_{i,b}\\
\hat{M}_{\FB,i,b} \neq M_{\FB,i,b}\\
{\hat{M}_{p,i,b} \neq  M_{p,i,b}}\\
{\hat{{M}}^{(i)}_{c,b} \neq {M}^{(i)}_{c,b}}
\end{IEEEeqnarray}
or when 
\begin{equation}
\hat{N}_{b-1}\neq N_{b-1}.
\end{equation}
Then, 
\begin{equation}\label{eq:proberrorSch2}
P_e^{(n)} \leq  \Prv{ \bigcup_{b=1}^{B+1} \Epsilon_b}  \leq \sum_{b=1}^B \Prv{\Epsilon_b | \Epsilon_{b+1}^{c}} + \Prv{\Epsilon_{B+1}}.
\end{equation}
In the following we analyze the probabilities of these events averaged over the random code construction. In particular, we shall identify conditions such that  for each $b\in\{1,\ldots, B\}$, the probability $\Prv{\Epsilon_b | \Epsilon_{b+1}^{c}}$ tends to 0 as $n\to \infty$. Similar arguments can be used to show that under the same conditions also  $\Prv{\Epsilon_{B+1}} \to 0$ as $n\to \infty$. Using standard arguments one can then conclude that there must exist a deterministic code for which the probability of error $P_e^{(N)}$ tends to 0 as $N\to \infty$ when the mentioned conditions are satisfied.

Fix $b\in\{1,\ldots, B\}$ and $\varepsilon>0$. By the symmetry of our code construction, the probability $\Prv{\Epsilon_b | \Epsilon_{b+1}^{c}}$ does not depend on the realizations of $N_{b-1}$, $N_{b}$, or $M_{c,i,b}$, $M_{p,i,b}$, $K_{i,b}$, $J_{i,b}$, $M_{\FB,i,b}$, for $i\in\{1,2\}$. To simplify exposition we  therefore assume that  for $i\in\{1,2\}$, $M_{c,i,b}=M_{p,i,b}=K_{i,b}=J_{i,b}=M_{\FB,i,b}=1$,  and $N_{b}=N_{b-1}=1$.

Define  the following events. 
\begin{itemize}
\item Let $\Epsilon_{0,b}$ be the event that there is no pair  $(k_{1,b},k_{2,b})\in\set{K}_1\times\set{K}_2$  that satisfies
 \begin{IEEEeqnarray*}{rCl}
 \big(&&U_{0,b}(\mathbf{1}_{[3]}), U_{1,b}^n(1,k_{1,b}|\mathbf{1}_{[2}), U_{2,b}^n(1,k_{2,b}|\mathbf{1}_{[3]})\big)\nonumber\\
 &&\quad\quad\in \set{T}_{\varepsilon/64}^{(n)}(P_{U_0U_1U_2}).
 \end{IEEEeqnarray*}
 By  the Covering Lemma, $\Pr(\Epsilon_{0,b})$ tends to 0 as $n\to \infty$ if
 \begin{IEEEeqnarray}{rCl} \label{conMartonSch2}
R'_1+R'_2 \geq I(U_1;U_2|U_0)+\delta(\varepsilon),
\end{IEEEeqnarray}
where throughout this section $\delta(\varepsilon)$ stands for some function that tends to 0 as $\varepsilon\to 0$.

\item Let $\Epsilon_{1,b}$ be the event that 
\begin{IEEEeqnarray*} {rCl}\lefteqn{
\big(U^n_{0,b}(\mathbf{1}_{[3]}), U^n_{1,b}(1, 1|\mathbf{1}_{[3]}), U_{2,b}^n(1,1,|\mathbf{1}_{[3]}), Y_{1,b}^n, Y_{2,b}^n \big) }\qquad \nonumber \\&&\hspace{4cm}\notin \set{T}_{\varepsilon/32}^{(n)}(P_{U_0U_1U_2Y_1Y_2}).
\end{IEEEeqnarray*}
Since the channel is memoryless, according to the law of large numbers, $\Pr(\Epsilon_{1,b}|\Epsilon^{c}_{0,b})$ tends to 0 as $n\to \infty$.

\item For $i\in\{1,2\}$, let $\Epsilon_{2,i,b}$ be the event that there is no pair $({m}_{\FB,i,b}, j_{i,b})\in  \set{L}_i\times \set{J}_{i}$   that satisfies
\begin{IEEEeqnarray*} {rCl}\label{compressY1Mar}
\big(\tilde{Y}_{i,b}^n(m_{\FB,i,b}, j_{i,b}),Y_{i,b}^n\big)\in \set{T}_{\varepsilon/16}^{(n)}(P_{\tilde{Y}_iY_i}).
\end{IEEEeqnarray*}
 By the Covering Lemma, $\Pr(\Epsilon_{2,i,b}|\Epsilon_{1,b}^c)$ tends to 0 as $n\to \infty$ if
\begin{IEEEeqnarray}{rCl} \label{conR1fb2}
\tilde{R}_i +\hat{R}_i \geq I(\tilde{Y}_i;Y_i)+\delta(\Epsilon).
\end{IEEEeqnarray}

\item Let $\Epsilon_{3,b}$ be the event that 
\begin{IEEEeqnarray*}{rCl}
\lefteqn{
\big(U^n_{0,b}(\mathbf{1}_{[3]}), U^n_{1,b}(1, 1|\mathbf{1}_{[3]}), U^n_{2,b}(1, 1|\mathbf{1}_{[3]}),}\nonumber \\
 & & \qquad  \hspace{2cm} \tilde{Y}_{1,b}^n(1,1), \tilde{Y}_{2,b}^n(1,1), Y_{1,b}^n, Y_{2,b}^n\big)\nonumber \\ & & \hspace{4cm}\notin \set{T}_{\varepsilon/6}^{(n)}(P_{U_0U_1U_2\tilde{Y}_1\tilde Y_2Y_1Y_2}).
\end{IEEEeqnarray*}
 By the Markov Lemma, $\Pr(\Epsilon_{3,b}|\Epsilon_{2,1,b}^c, \Epsilon_{2,2,b}^c,\Epsilon_{1,b}^c)$ tends to 0 as $n\to \infty$.

\item {Let $\Epsilon_{4,b}$ be the event that there is a pair of indices $\hat j_{1,b}\in\set{J}_1$ and $\hat j_{2,b}\in\set{J}_2$ not equal to the all-one pair $(1,1)$ and that satisfies
\begin{IEEEeqnarray*}{rCl}
\lefteqn{
\big(U^n_{0,b}(\mathbf{1}_{[3]}), U^n_{1,b}(1, 1|\mathbf{1}_{[3]}), U^n_{2,b}(1, 1|\mathbf{1}_{[3]}),}\nonumber \\
 & & \quad  \tilde{Y}_{1,b}^n(1,\hat{j}_{1,b}), \tilde{Y}_{2,b}^n(1,\hat{j}_{2,b})\big)\!\in\! \set{T}_{\varepsilon/4}^{(n)}(P_{U_0U_1U_2\tilde{Y}_1\tilde Y_2}).
\end{IEEEeqnarray*}
By the Packing Lemma, $\Pr(\Epsilon_{4,b}|\Epsilon_{3,b}^c)$ tends to 0 as $n
 \to \infty$, if}
 \begin{IEEEeqnarray}{rCl}
 \hat{R}_1 & \leq & I(U_0,U_1,U_2,\tilde{Y}_2;\tilde{Y}_1)-\delta(\varepsilon)\\
 \hat{R}_2 & \leq & I(U_0,U_1,U_2,\tilde{Y}_1;\tilde{Y}_2)-\delta(\varepsilon)\\ 
  \hat{R}_1 +\hat{R}_2& \leq & I(U_0,U_1,U_2;\tilde{Y}_1,\tilde{Y}_2)\!+\! I(\tilde{Y}_1;\tilde{Y}_2)\!-\!\delta(\varepsilon).\nonumber \\
 \end{IEEEeqnarray}
 
\item {Let $\Epsilon_{5,b}$ be the event that there is no index ${n}_{b}\in\set{N}$ that satisfies 
\begin{IEEEeqnarray*}{rCl}
\lefteqn{
\big(U^n_{0,b}(\mathbf{1}_{[3]}), U^n_{1,b}(1, 1|\mathbf{1}_{[3]}), U^n_{2,b}(1, 1|\mathbf{1}_{[3]}),}\nonumber \\
 & & \qquad \qquad \qquad  \tilde{Y}_{1,b}^n(1,1), \tilde{Y}_{2,b}^n(1,1), V_{b}^n({n}_{b}|1)\big)\nonumber \\ & &\hspace{4cm} \in \set{T}_{\varepsilon/2}^{(n)}(P_{U_0U_1U_2\tilde{Y}_1\tilde Y_2V}).
\end{IEEEeqnarray*}
By the Covering Lemma, $\Pr(\Epsilon_{5,b}|\Epsilon_{3,b}^c)$ tends to 0 as $n\to \infty$, if}
\begin{equation}
\tilde{R}_v \geq I(U_0,U_1,U_2,\tilde{Y}_1, \tilde{Y}_2;V)+\delta(\varepsilon).
\end{equation}


\item Let $\Epsilon_{6,1,b}$ be the event that 
\begin{IEEEeqnarray*}{rCl}\lefteqn{
\Big(U^n_{0,b}(\mathbf{1}_{[3]},  1),U^n_{1,b}(1,1|\mathbf{1}_{[3]},  1),}\qquad \\ 
&&\quad V_{b}^n(1|1), Y_{1,b}^n, \tilde{Y}_{1,b}^n(1,1)\Big)   \in \set{T}_\varepsilon^{(n)}(P_{U_0U_1VY_1\tilde{Y}_1}).
\end{IEEEeqnarray*}
By the Markov Lemma $\Pr(\Epsilon_{6,1,b}|\Epsilon_{3,b}^c,\Epsilon_{5,b}^c)$ tends to zero as $n\to \infty$.

\item Let $\Epsilon_{6,2,b}$ be the event that 
\begin{IEEEeqnarray*}{rCl}\lefteqn{
\Big(U^n_{0,b}(\mathbf{1}_{[3]},  1),U^n_{2,b}(1,1|\mathbf{1}_{[3]},  1),}\qquad \\ 
&&\quad V_{b}^n(1|1), Y_{2,b}^n, \tilde{Y}_{2,b}^n(1,1)\Big)  \in \set{T}_\varepsilon^{(n)}(P_{U_0U_2VY_2\tilde{Y}_2}).
\end{IEEEeqnarray*}
By the Markov Lemma $\Pr(\Epsilon_{6,2,b}|\Epsilon_{3,b}^c,\Epsilon_{5,b}^c)$ tends to zero as $n\to \infty$.

\item Let $\Epsilon_{7,1,b}$ be the event that 
there is a tuple  $({{\hat{m}}^{(1)}_{c,b}},\hat{n}_{b-1},\hat{m}_{p,1,b},\hat{k}_{1,b}) \in\set{M}_c\times \set{N}\times \set{M}_{p,1}\times \set{K}_{1}$ that is not equal to the all-one tuple $(\mathbf{1}_{[3]}, 1, 1, 1)$ and that satisfies
\begin{IEEEeqnarray*}{rCl}\lefteqn{
\Big(U^n_{0,b}({{\hat{m}}^{(1)}_{c,b}},  \hat{n}_{b-1}),U^n_{1,b}(\hat{m}_{p,1,b},\hat{k}_{1,b}|{{\hat{m}}^{(1)}_{c,b}}, \hat{n}_{b-1}),}\qquad \\ 
&&\hspace{1cm} V_{b}^n(1|\hat{n}_{b-1}), Y_{1,b}^n, \tilde{Y}_{1,b}^n(1,1)\Big)\\ 
&&\hspace{3cm}  \in \set{T}_\varepsilon^{(n)}(P_{U_0U_1VY_1\tilde{Y}_1}).
\end{IEEEeqnarray*}
By the  Packing Lemma, we conclude that $\Pr(\Epsilon_{7,1,b}|\Epsilon_{6,1,b}^c)$ tends to zero as $n\to \infty$ if
\begin{IEEEeqnarray}{rCl}
&&R_{1}+R_{c,2}+R'_1\leq I(U_0,U_1;Y_1, \tilde{Y}_1,V)-\delta(\varepsilon)\IEEEeqnarraynumspace\\
&&R_{1}\!+\!R_{c,2}\!+\!\tilde{R}_v\!+\!R'_1\leq I(U_0,U_1;Y_1, \tilde{Y}_1,V)\nonumber \\ 
& &\hspace{3.8cm}+ I(V;\tilde{Y}_1,Y_1)-\delta(\varepsilon)\IEEEeqnarraynumspace\\
&&R_{p,1}+R'_1\leq I(U_1;Y_1,\tilde{Y}_1,V|U_0)-\delta(\varepsilon).\IEEEeqnarraynumspace
\end{IEEEeqnarray}
\item Let $\Epsilon_{7,2,b}$ be the event that 
there is a tuple  $({{\hat{m}}^{(2)}_{c,b}},\hat{n}_{b-1},\hat{m}_{p,2,b},\hat{k}_{2,b}) \in\set{M}_c\times \set{N}\times \set{M}_{p,2}\times \set{K}_{2}$ that is not equal to the all-one tuple $(\mathbf{1}_{[3]}, 1, 1, 1)$ and  that satisfies
\begin{IEEEeqnarray*}{rCl}\lefteqn{
\Big(U^n_{0,b}({{\hat{m}}^{(2)}_{c,b}},  \hat{n}_{b-1}),U^n_{2,b}(\hat{m}_{p,2,b},\hat{k}_{2,b}|{{\hat{m}}^{(2)}_{c,b}}, \hat{n}_{b-1}),}\qquad \\ 
&&\hspace{1cm} V_{b}^n(1|\hat{n}_{b-1}), Y_{2,b}^n, \tilde{Y}_{2,b}^n(1,1)\Big) \nonumber \\ 
& &\hspace{4cm} \in \set{T}_{\varepsilon}^{(n)}(P_{U_0U_2VY_2\tilde{Y}_2}).
\end{IEEEeqnarray*}
By the Markov Lemma and the Packing Lemma, we conclude that $\Pr(\Epsilon_{7,2,b}|\Epsilon_{6,2,b}^c)$ tends to zero as $n\to \infty$, if
\begin{IEEEeqnarray}{rCl}
&&R_{2}+R_{c,1}+R'_2\leq I(U_0,U_2;Y_2, \tilde{Y}_2,V)-\delta(\varepsilon)\IEEEeqnarraynumspace\\
&&R_{2}\!+\!R_{c,1}\!+\!\tilde{R}_v\!+\!R'_2\leq I(U_0,U_2;Y_2, \tilde{Y}_2,V)\nonumber \\ 
& &\hspace{3.6cm}+ I(V;\tilde{Y}_2,Y_2)-\delta(\varepsilon)\IEEEeqnarraynumspace\\
&&R_{p,2}+R'_2\leq I(U_2;Y_2,\tilde{Y}_2,V|U_0)-\delta(\varepsilon).\IEEEeqnarraynumspace \label{eq:lastcon}
\end{IEEEeqnarray}
\end{itemize}

Whenever the event $\Epsilon_{b+1}^{c}$ occurs but none of the  events above, then $\Epsilon^c_b$. Therefore,
\begin{IEEEeqnarray}{rCl}
\lefteqn{
\Prv{\Epsilon_b|\Epsilon_{b+1}^c} }\nonumber \\& \leq&  \textnormal{Pr}\Big[ \Epsilon_{0,b} \cup \Epsilon_{1,b} \cup \Epsilon_{2,1,b} \cup \Epsilon_{2,2,b} \cup \Epsilon_{3,b} \nonumber \\ & & 
 \qquad\qquad \cup \Epsilon_{4,b}\cup \Epsilon_{5,b} \cup \Epsilon_{6,1,b}\cup \Epsilon_{6,2,b}\big|\Epsilon_{b+1}^c \Big] \nonumber \\
 & \leq& \Prv{ \Epsilon_{0,b}\big|\Epsilon_{b+1}^c }+\Prv{\Epsilon_{1,b} |\Epsilon_{0,b}^c,\Epsilon_{b+1}^c} \nonumber \\ & &+ \sum_{i=1}^2  \Prv{\Epsilon_{2,i,b} |\Epsilon_{1,b}^c,\Epsilon_{b+1}^c}  \nonumber \\ & &
 + \Prv{ \Epsilon_{3,b}|\Epsilon_{1,b}^c, \Epsilon_{2,1,b}^c, \Epsilon_{2,2,b}^c,\Epsilon_{b+1}^c}+\Prv{ \Epsilon_{4,b}|\Epsilon_{3,b}^c,\Epsilon_{b+1}^c } \nonumber \\
  & &+\Prv{ \Epsilon_{5,b}|\Epsilon_{3,b}^c,\Epsilon_{b+1}^c} + \sum_{i=1}^2 \Prv{\Epsilon_{6,i,b}| \Epsilon_{3,b}^c,\Epsilon_{b+1}^c}\nonumber \\
& =& \Prv{ \Epsilon_{0,b}}+\Prv{\Epsilon_{1,b} |\Epsilon_{0,b}^c}+ \sum_{i=1}^2  \Prv{\Epsilon_{2,i,b} |\Epsilon_{1,b}^c}  \nonumber \\ & &
 + \Prv{ \Epsilon_{3,b}|\Epsilon_{1,b}^c, \Epsilon_{2,1,b}^c, \Epsilon_{2,2,b}^c}+\Prv{ \Epsilon_{4,b}|\Epsilon_{3,b}^c } \nonumber \\ 
& &+\Prv{ \Epsilon_{5,b}|\Epsilon_{3,b}^c}+ \sum_{i=1}^2 \Prv{\Epsilon_{6,i,b}| \Epsilon_{3,b}^c},\label{eq:probeSch2}\IEEEeqnarraynumspace
\end{IEEEeqnarray} 
where the last equality follows because the channel is memoryless and the codebooks in blocks $b$ and $b+1$ have been chosen independently. 
As explained in the previous paragraphs, each of the terms in the last five lines tends to 0 as $n\to \infty$, if Constraints~\eqref{conMartonSch2}--\eqref{eq:lastcon} are satisfied. 
Thus, by~\eqref{eq:proberrorSch2} and \eqref{eq:probeSch2} we conclude that the probability of error $P_e^{(N)}$ (averaged over all code constructions) vanishes as $n\to\infty$ if Constraints~\eqref{conMartonSch2}--\eqref{eq:lastcon} hold. Letting $\varepsilon \to 0$, 
 we obtain that the probability of  error can be made to tend to 0 as $n\to \infty$ whenever
 \begin{subequations}\label{method4}
\begin{IEEEeqnarray}{rCl}
R'_1+R'_2 &>& I(U_1;U_2|U_0) \\
\hat{R}_1+\tilde{R}_1& > &I(\tilde{Y}_1;Y_1)\\
\hat{R}_2+\tilde{R}_2 &>& I(\tilde{Y}_2;Y_2)\\
\hat{R}_1 & < & I(U_0,U_1,U_2,\tilde{Y}_2;\tilde{Y}_1)\\
\hat{R}_2 & < & I(U_0,U_1,U_2,\tilde{Y}_1;\tilde{Y}_2)\\
\hat{R}_1 +\hat{R}_2& < & I(U_0,U_1,U_2;\tilde{Y}_1,\tilde{Y}_2)\nonumber \\ &&+I(\tilde{Y}_1;\tilde{Y}_2)\\
\tilde{R}_v & > & I(U_0,U_1,U_2,\tilde{Y}_1,\!\tilde{Y}_2;V\!)\qquad \\
R_{1}+R_{c,2}+\tilde{R}_v+R'_1
&<& I(U_0,U_1;Y_1,\tilde{Y}_1,V)\nonumber \\ &&+I(V;\tilde{Y}_1,Y_1)\\
R_{1}+R_{c,2}+R'_1
&<& I(U_0,U_1;Y_1,\tilde{Y}_1,V)\\
R_{c,1}+R_{2}+R'_2
&<& I(U_0,U_2;Y_2,\tilde{Y}_2,V)\\
R_{c,1}+R_{2}+\tilde{R}_v+R'_2
&<& I(U_0,U_2;Y_2,\tilde{Y}_2,V)\nonumber \\ &&+I(V;\tilde{Y}_2,Y_2)\\
R_{p,1}+R_1'&<&I(U_1;Y_1,\tilde{Y}_1,V|U_0)\\
R_{p,2}+R_2'&<&I(U_2;Y_2,\tilde{Y}_2,V|U_0).
\end{IEEEeqnarray}
Moreover, the feedback-rate constraints~\eqref{consFB0} impose that:
\begin{IEEEeqnarray}{rCl}
\tilde{R}_1 & \leq & R_{\FB,1}\\
\tilde{R}_2 & \leq & R_{\FB,2}.
\end{IEEEeqnarray}
\end{subequations}
Eliminating the auxiliaries $\tilde{R}_1, \tilde{R}_2, \hat{R}_1, \hat{R}_2, \tilde{R}_v$ from the above (using the Fourier-Motzkin algorithm), we obtain: 
 \begin{subequations}\label{method5}
\begin{IEEEeqnarray}{rCl}
R'_1+R'_2 &>& I(U_1;U_2|U_0) \\
R_{1}+R_{c,2}+R'_1
&<& I(U_0,U_1;Y_1,\tilde{Y}_1,V)\nonumber \\ &&-I(V;U_0,U_1,U_2, \tilde{Y}_2|\tilde{Y}_1,Y_1)\\
R_{c,1}+R_{2}+R'_2
&<& I(U_0,U_2;Y_2,\tilde{Y}_2,V)\nonumber \\ &&-I(V;U_0,U_1,U_2,\tilde{Y}_1|\tilde{Y}_2,Y_2)\qquad \\
R_{p,1}+R_1'&<&I(U_1;Y_1,\tilde{Y}_1,V|U_0)\\
R_{p,2}+R_2'&<&(U_2;Y_2,\tilde{Y}_2,V|U_0)
\end{IEEEeqnarray}
\end{subequations}
\begin{subequations}\label{eq:method6}
where the feedback-rate constraints have to satisfy
\begin{IEEEeqnarray}{rCl}
I(Y_1;\tilde{Y}_1|U_0,U_1,U_2,\tilde{Y}_2) & \leq & R_{\FB,1}\\
I(Y_2;\tilde{Y}_2|U_0,U_1,U_2,\tilde{Y}_1)  & \leq & R_{\FB,2}\\
I(Y_1,Y_2;\tilde{Y}_1,\tilde{Y}_2|U_0,U_1,U_2)  & \leq & R_{\FB,1}+R_{\FB,2}.\quad
\end{IEEEeqnarray}
\end{subequations}
Applying again the Fourier-Motzkin elimination algorithm to Constraints~\eqref{method5} and keeping Constraints~\eqref{eq:method6}, we obtain the desired result in Theorem~\ref{theo5} with the additional constraint that 
\begin{equation}\label{eq:Martonextra3}
I(U_1;U_2|U_0)  \leq I(U_1;Y_1,\!\tilde{Y}_1,\!V|U_0)\! +\!I(U_2;Y_2,\!\tilde{Y}_2,\!V|U_0).
\end{equation}
Finally, this last constraint can be ignored because for any tuple $(U_0,U_1,U_2, X, {Y}_1,Y_2, \tilde{Y}_1,\tilde{Y}_2)$ that violates~\eqref{eq:Martonextra3}, the region defined by the constraints in Theorem~\ref{theo5} is contained in the time-sharing region. 

\section{Proof of Theorem~\ref{thm:useful}}\label{sec:prooftheorem}
Let $R_{\FB,1}>0$. 
Fix   a tuple $(U_0^{\M},U_1^{\M}, U_2^{\M}, X^{\M})$ and rate pairs $(R_1^{\M},R_2^\M)$ and   $(R_1^{\En}, R_2^{\En}) \in \set{C}_{\textnormal{Enh}}^{(1)}$ as stated in the theorem. Then, by  the assumptions in the theorem,
\begin{subequations}
\begin{IEEEeqnarray}{rCl}
R_1^{\M}\!\!&\leq& I(U_0^{\M}\!\!,U_1^{\M};Y_1^{\M}) \label{eq:M1}\\
R_2^{\M}\!\!&<& I(U_0^{\M}\!\!,U_2^{\M};Y_2^{\M})\label{eq:M2}\\
R_1^{\M}\!\!+\!R_2^{\M}\!&\leq&I(U_0^{\M}\!\!,U_1^{\M};Y_1^{\M})\!+\!I(U_2^{\M};\!Y_2^{\M}|U_0^{\M})\nonumber \\ & & -I(U_1^{\M};U_2^{\M}|U_0^{\M}),\label{eq:M3}
\end{IEEEeqnarray}
\end{subequations}
where $Y_{1}^{\M}$ and $Y_{2}^{\M}$ denote the outputs of the considered DMBC corresponding to input $X^{\M}$. (Notice the strict inequality of the second constraint.)

 By the definition of $\set{C}_{\textnormal{Enh}}^{(1)}$ we can  identify random variables $U_0^{\En}$ and $X^{\En}$ such that 
 \begin{subequations}
 \begin{IEEEeqnarray}{rCl}
R_1^{\En}&\leq&  I(U_0^{\En};Y_1^{\En})\label{eq:E1} \\
R_2^{\En}&\leq&  I(X^{\En};Y_1^{\En}, Y_2^{\En}|U_0^{\En}),\label{eq:E2}
 \end{IEEEeqnarray}
 \end{subequations}
where $Y_{1}^{\En}$ and $Y_{2}^{\En}$ denote the outputs of the considered DMBC corresponding to input $X^{\En}$. 

Define further $U_{1}^{\En}=\textnormal{const.}$, $U_2^{\En}=X^{\En}$, $\tilde{Y}_1^{\En}=Y_1^{\En}$,  $\tilde{Y}_1^{M}=$const,  and  a binary random variable $Q$ independent of all  previously defined random variables and of pmf
\begin{IEEEeqnarray}{rCl}
  P_Q(q)=\left\{
  \begin{array}{ll}
\gamma, & q=\textnormal{Enh}\\
  1-\gamma,  & q=\textnormal{M}.
  \end{array} \right.
 \end{IEEEeqnarray}

We show that when $\gamma$ is sufficiently small, then the random variables  
\begin{IEEEeqnarray}{rCl}\label{eq:choice33}
\lefteqn{U_0:=U_0^{(Q)}, \; U_1:=U_1^{(Q)},
\;U_2:=U_2^{(Q)}\;}\qquad \nonumber \\ && X:=X^{(Q)}, \textnormal{ and } \tilde{Y}_1:=\tilde{Y}_1^{(Q)}
\end{IEEEeqnarray}
 satisfy the feedback rate constraints~\eqref{eq:fbRateThm3} and  the rate pair $(R_1', R_2')$,
\begin{subequations}\label{eq:pointout}
\begin{IEEEeqnarray}{rClCl}\label{barRQ}
{R}'_1&:=&(1-\gamma) R^{\M}_1+ \gamma R^{\En}_1 \label{eq:Rt1}\\
{R}'_2&:=&(1-\gamma) R^{\M}_2+\gamma R^{\En}_2  \label{eq:Rt2},\end{IEEEeqnarray}
 \end{subequations}
 satisfies the constraints in~\eqref{eq:region_relays} for the choice in \eqref{eq:choice33}. The two imply that the rate pair $(R_1',R_2')$ lies in $\set{R}_{\textnormal{relay,hb}}^{(1)}$ and concludes our proof.

Notice that the pmf of the tuple $U_0,U_1,U_2, X, Y_1,Y_2, \tilde{Y}_1$ has the desired form 
\begin{equation}\label{fbMC}
P_Q  P_{U_0U_1U_2|Q} P_{X|U_0U_1U_2Q} P_{Y_1Y_2|X} P_{\tilde{Y}_1| Y_1Q}.
\end{equation}
where $P_{Y_1Y_2|X}$ denotes the channel law.

For the  described choice of random variables~\eqref{eq:choice33}, the feedback-rate constraint~\eqref{eq:fbRateThm3} specializes to
\begin{equation}
\gamma H(Y_1^{\En}|Y_2^{\En}, {X^{\En}}) \leq R_{\FB,1},
\end{equation}
which is satisfied for all sufficiently small $\gamma\in(0,1)$. 
Moreover, for this choice the constraints in~\eqref{eq:region_relays} specialize to
\begin{subequations}
\begin{IEEEeqnarray}{rCl}
R_1 & \leq &(1- \gamma) I(U_0^{\M}, U_1^{\M}; Y_1^{\M}) \nonumber \\ &&  + \gamma I(U^{\En}_0;Y_1^{\En})\label{eq:one}\\
R_2 & \leq & (1-\gamma) I(U_0^{\M}, U_2^{\M}; Y_2^{\M})\nonumber \\ &&  + \gamma \big(I(X^{\En};Y_1^{\En}, Y_2^{\En}) \nonumber \\
&& \qquad \qquad\qquad - H(Y_1^{\En}|Y_2^{\En})\big)\label{eq:two}\\
R_1+R_2 & \leq & (1-\gamma) \big( I(U_0^{\M}, U_1^{\M}; Y_1^{\M}) \nonumber \\ 
&&\qquad \qquad \qquad + I(U_2^{\M};Y_2^{\M}|U_0^\M)\nonumber \\ &&  \hspace{3.3cm}- I(U_1^{\M};U_2^{\M}|U_0^{\M})\big) \nonumber \\\label{eq:third}
 & & + \gamma\big( I(X^{\En};Y_1^{\En},Y_2^{\En}|U_0^{\En})\nonumber \\ & &  \qquad \qquad\qquad + I(U_0^{\En};Y_1^{\En}) \big)\\
 R_1+R_2 & \leq & (1-\gamma) \big( I(U_1^{\M}; Y_1^{\M}|U_0^\M)\nonumber \\ & & \qquad \qquad  \quad + I(U_0^{\M},U_2^{\M};Y_2^{\M}) \nonumber \\ &&  \hspace{3.3cm}- I(U_1^{\M};U_2^{\M}|U_0^{\M})\big) \nonumber \\
 & & + \gamma\big(  I(X^{\En};Y_1^{\En},Y_2^{\En})\nonumber \\ 
 && \qquad \qquad\qquad-H(Y_1^{\En}|Y_2^{\En})\big).\label{eq:last}
\end{IEEEeqnarray}
\end{subequations}
{We argue in the following that the rate pair $(R_1=R_1', R_2=R_2')$ defined in \eqref{eq:pointout} satisfies these constraints for all sufficiently small $\gamma>0$. }
Comparing \eqref{eq:M1}, \eqref{eq:E1}, and \eqref{eq:Rt1}, we see that the first constraint~\eqref{eq:one} is satisfied for any choice of $\gamma\in[0,1]$.  Similarly, comparing~\eqref{eq:M3}, \eqref{eq:E1}, \eqref{eq:E2}, and \eqref{eq:Rt1} and \eqref{eq:Rt2}, we note that also the third constraint~\eqref{eq:third} is satisfied for any $\gamma\in[0,1]$.  
The second constraint~\eqref{eq:two} is  satisfied when $\gamma$ is sufficiently small. This can be seen by comparing~\eqref{eq:M2}, \eqref{eq:E2}, and \eqref{eq:Rt2}, and because Constraint~\eqref{eq:M2} holds with strict inequality. 
The last constraint~\eqref{eq:last} is not active in view of Constraint~\eqref{eq:third} whenever
\begin{equation}\label{eq:gamma1}
\gamma H(Y_1^{\En}|Y_2^{\En}) \leq (1-\gamma) {\Gamma^{\M}},
\end{equation}
{where $\Gamma^{\M}$ is defined in \eqref{eq:delta}.}
Thus, also this last constraint is satisfied when $\gamma$ is sufficiently small. This concludes our proof.

%
%

\section{Proof of Proposition~\ref{prop:noisy_fb}}\label{sec:proof_noisy}

{Let $\Epsilon_{\text{Fb},i,b}$, for $i=1,2$,  denote the event that   during block~$b$ there is an error in the  feedback communication    from Receiver $i$  to the transmitter, and let $\varepsilon$ denote the event that $\hat{M}_1\neq M_1~\text{or}~\hat{M}_2\neq M_2$.
Then,
\begin{IEEEeqnarray}{rCl}
\Pr [\hat{M}_1&&\neq M_1~\text{or}~\hat{M}_2\neq M_2]\nonumber\\
&&\leq \Prv{ \Epsilon  \cup \left(\bigcup^B_{b=1} \Epsilon_{\text{Fb},1,b} \right)\cup \left(\bigcup^B_{b=1} \Epsilon_{\text{Fb},2,b} \right)}\nonumber\\
&&\leq \Prv{ \Epsilon\Big| \left(\bigcup^B_{b=1} \Epsilon_{\text{Fb},1,b} \right)^c \cap  \left(\bigcup^B_{b=1} \Epsilon_{\text{Fb},2,b} \right)^c} \nonumber\\
&&\;+\Prv{ \bigcup^B_{b=1} \Epsilon_{\text{Fb},1,b}}+\Prv{\bigcup^B_{b=1} \Epsilon_{\text{Fb},2,b}}\nonumber\\
&&\leq\Prv{ \Epsilon\Big| \left(\bigcup^B_{b=1} \Epsilon_{\text{Fb},1,b} \right)^c \cap  \left(\bigcup^B_{b=1} \Epsilon_{\text{Fb},2,b} \right)^c }\nonumber\\
&&\;+\! \sum^B_{b=1}\! \Prv{ \Epsilon_{\text{Fb},1,b}}+\Prv{ \Epsilon_{\text{Fb},2,b} }.\label{eq:setin}
\end{IEEEeqnarray}
Since we use capacity-achieving codes on the feedback links,  the probabilities $\Prv{ \Epsilon_{\text{Fb},1,b}}$ and $\Prv{ \Epsilon_{\text{Fb},2,b}}$ vanish as the blocklength increases. When the feedback communications in  all the blocks are error-free, then the probability of error  in the setup with noisy feedback is no larger than {that} in the setup with noise-free feedback. Thus, under the corresponding rate constraints, also the probability $\Prv{ \Epsilon\Big| \left(\bigcup^B_{b=1} \Epsilon_{\text{Fb},1,b} \right)^c \cap  \left(\bigcup^B_{b=1} \Epsilon_{\text{Fb},2,b} \right)^c }$ vanishes as the blocklength increases.

}

\begin{IEEEbiographynophoto}{Youlong WU} (S'13-M'15) Youlong Wu obtained his B.S. degree in electrical engineering from Wuhan University, Wuhan, China, in 2007. He received the M.S. degree in electrical engineering from Shanghai Jiaotong University, Shanghai, China, in 2011. In 2014, he received the Ph.D. degree at Telecom ParisTech, in Paris, France. Since December 2014, he has been a postdoc at the Institute for Communication Engineering, Technische Universit\"at M\" unchen (TUM), Munich, Germany. His research interests include information theory and wireless communication.
\end{IEEEbiographynophoto}

\begin{IEEEbiographynophoto}{Mich\`ele Wigger}
(S'05-M'09) received the M.Sc. degree in electrical
engineering (with distinction) and the Ph.D. degree in electrical 
engineering
both from ETH Zurich in 2003 and 2008, respectively. In 2009 she was
a postdoctoral researcher at the ITA center at the University of California,
San Diego. Since December 2009 she is an Assistant Professor at Telecom
ParisTech, in Paris, France.
Her research interests are in information and
communications theory.
\end{IEEEbiographynophoto}

\end{document}